\documentclass{imsart}


\usepackage{amsmath,amsfonts,amssymb,amsthm}
\usepackage{graphicx}
\usepackage[round]{natbib}
\usepackage{subcaption}
\usepackage{mwe}
\usepackage{paralist} 
\usepackage{kbordermatrix}
\usepackage{verbatim}
\usepackage{booktabs}
\usepackage{indentfirst} 
\usepackage{xcolor}

\usepackage[font=small,labelfont=bf]{caption}
\usepackage[normalem]{ulem}

\DeclareMathOperator*{\argmin}{arg\,min}

\renewcommand{\le}{\leqslant}
\renewcommand{\ge}{\geqslant}
\renewcommand{\leq}{\leqslant}
\renewcommand{\geq}{\geqslant}

\newcommand{\bsx}{\boldsymbol{x}}

\newcommand{\cixzk}{\ci}

\newcommand{\dnorm}{\mathcal{N}}

\newcommand{\diag}{\mathrm{diag}}

\newcommand{\eff}{\mathrm{Eff}}
\newcommand{\bc}{\mathrm{BC}}
\newcommand{\ts}{\mathrm{TS}}
\newcommand{\ik}{\mathrm{IK}}

\newcommand{\rd}{\,\mathrm{d}}

\newcommand{\e}{\mathbb{E}}
\newcommand{\var}{\mathrm{var}}

\newcommand{\dunif}{\mathbb{U}}
\newcommand{\real}{\mathbb{R}}

\newcommand{\tran}{\mathsf{T}}

\newcommand{\ci}{\mathcal{I}}

\newcommand{\cw}{\mathcal{W}}
\newcommand{\cx}{\mathcal{X}}
\newcommand{\cy}{\mathcal{Y}}

\newcommand{\giv}{\!\mid\!} 
\newcommand{\err}{\varepsilon}
\newcommand{\phe}{\phantom{=}}
\newcommand{\xiVal}{(2i-N-1)/N}

\newcommand{\thresh}{\mathrm{thresh}}

\newtheorem{theorem}{Theorem}
\newtheorem{lemma}{Lemma}

\newtheorem{proposition}{Proposition}

\theoremstyle{definition}

\pagestyle{plain}

\begin{document}

\begin{frontmatter}

 \title{Kernel regression analysis of tie-breaker designs \thanks{This version of the paper is published in the \textit{Electronic Journal of Statistics} with open access (DOI: doi.org/10.1214/23-EJS2102). There are formatting differences between the arXiv and journal versions.}}
 \runtitle{Kernel regression analysis of tie-breaker designs}

 \begin{aug}
\author{\fnms{Dan M.} \snm{Kluger}\ead[label=e1]{kluger@stanford.edu}}
\and
\author{\fnms{Art B.} \snm{Owen}\ead[label=e2]{owen@stanford.edu}}

\address{Department of Statistics, Stanford University, Stanford CA, 94305 \\
\printead{e1,e2}}

 \end{aug}

\runauthor{Kluger and Owen}

\maketitle

\begin{abstract}
Tie-breaker experimental designs are hybrids of Randomized Controlled Trials (RCTs) and Regression Discontinuity Designs (RDDs) in which subjects with moderate scores are placed in an RCT while subjects with extreme scores are deterministically assigned to the treatment or control group. In settings where it is unfair or uneconomical to deny the treatment to the more deserving recipients, the tie-breaker design (TBD) trades off the practical advantages of the RDD with the statistical advantages of the RCT. The practical costs of the randomization in TBDs can be hard to quantify in generality, while the statistical benefits conferred by randomization in TBDs have only been studied under linear and quadratic models. In this paper, we discuss and quantify the statistical benefits of TBDs without using parametric modelling assumptions. If the goal is estimation of the average treatment effect or the treatment effect at more than one score value, the statistical benefits of using a TBD over an RDD are apparent. If the goal is nonparametric estimation of the mean treatment effect at merely one score value, we prove that about 2.8 times more subjects are needed for an RDD in order to achieve the same asymptotic mean squared error.  We further demonstrate using both theoretical results and simulations from the \cite{AngristLavy} classroom size dataset, that larger experimental radii choices for the TBD lead to greater statistical efficiency.

\end{abstract}

\begin{keyword}[class=MSC]
\kwd[Primary ]{62K99}
\kwd[; secondary ]{62G20}
\kwd{62G08}
\end{keyword}

\begin{keyword}
\kwd{causal inference}
\kwd{experimental design}
\kwd{hybrid experiments}
\kwd{local linear regression}
\kwd{regression discontinuity designs}
\end{keyword}
\tableofcontents

\end{frontmatter}

\section{Introduction} \label{sec:Intro}

In this paper we study a nonparametric regression approach
to tie-breaker studies. 
In the settings of tie-breaker studies, there is a costly
treatment while the control is inexpensive or even free. In addition, an investigator can decide how to allocate the costly treatment using a priority ordering on the subjects. The priority ordering could be based on how deserving of the treatment each subject is, or based on how strongly each subject is expected to respond to the treatment.  
Examples include offering
scholarships or school placement to students \citep{Angrist2020Nebraska}, offering 
a drug rehabilitation program to people
of varying needs \citep{CappelleriXX},
or assigning interventions to reduce
risk factors for child abuse and neglect \citep{kran:2022}.
In these settings a randomized controlled trial
(RCT) is inappropriate because it is extremely
inefficient economically, or even ethically
questionable.
The natural, even automatic, approach to settings like
this is to rank the subjects $i=1,\dots,N$ 
according to their value of a running variable $x_i$
and assign the treatment to only those subjects with
the highest values of $x_i$. For simplicity one can
assume that the number of subjects to treat is
fixed and that the treatment is offered to subject
$i$ if and only if $x_i> t$ for some threshold $t$.

The problem with a deterministic treatment based on
$x_i$ is that it complicates causal inference of the
effect of the treatment.  One can use regression discontinuity
analysis \citep{this:camp:1960,cattaneo2022regression} but
the regression discontinuity design (RDD) is known to 
give treatment effect estimates a very high variance.
See, for example, \cite{Gelman_dont_do_high_order_polynomial}.
In a parametric regression model the treatment and running
variable are highly correlated, making for an inefficient
design \citep{gold:1972,jacob2012practical}.
In a tie-breaker design (TBD), the subjects are given the
treatment with a probability that increases with $x_i$.
The top ranked subjects get the treatment with probability one,
the bottom ranked subjects do not get the treatment and subjects
in between are randomly assigned to either treatment or control.
The tie-breaker design interpolates between two extremes: the RCT
and the RDD, trading off statistical efficiency with the short term economic value 
of aligning treatment to the running variable. 

We believe that there are many more good uses for TBDs.  Many companies interact electronically
with their customers and partners.
Perks, such as service upgrades, can
easily be assigned with some randomization. Because the treatment
is costly, it is important to
evaluate the treatment efficacy later. 
This provides a strong motivation to introduce
randomization. 
When the perk is simply a gift to some subjects there
is less ethical concern over whether it goes to the
most loyal customers or introduces some randomization.
We also expect that tie-breakers will be useful
in evaluating governmental programs such as
the one in \cite{kran:2022}, as well as educational programs such as those in \cite{Angrist2020Nebraska}.

TBDs have been primarily studied as experimental
designs using parametric regression modeling assumptions.  While the design literature
focuses on parametric models, the RDD literature primarily uses nonparametric
regression methods.  In this paper, we quantify the statistical gains to
be obtained by conducting a TBD instead of an RDD using nonparametric regression.

Our main theoretical contributions are as follows.
We study a kernel
weighted local linear regression with a slope
and intercept for both treated and control
subjects and bandwidth $h$. 
The RDD can consistently estimate
the treatment effect only at $x=t$, and so
we focus our comparison at that point. We find
an expression for the optimal bandwidth
for estimating the treatment effect at $x=t$ under the TBD.
We then compare the optimal mean squared
error at $x=t$ for the two designs. 
For the popular triangular kernel, a TBD
reduces the
asymptotic mean squared error (AMSE) by
a factor of about $2.27$ compared to an RDD
of the same sample size, $N$.  For other popular
kernels, the AMSE is reduced by a slightly greater
factor.  In this setting, the AMSE decreases
proportionally to $N^{-4/5}$, and using $N$
points in an RDD is comparable to using only
$0.36N$ points in a TBD.  
The asymptotic analysis has a bandwidth
that converges to zero.
Since this convergence
is at the very slow $N^{-1/5}$ rate, we cannot
assume that in practice $h$ will be small enough such that subjects without randomized treatment are discarded. Therefore, we also compare the designs when $h$ is fixed and large enough to include nonnrandomized subjects in the regression.  In this setting, the
efficiency ratio, which we define as the relative variance of treatment effect estimators under the two designs, can be as large as four. For a fixed bandwidth and for the triangular and boxcar kernels, we also find that the efficiency ratio is monotone non-decreasing in the proportion of subjects who are given a randomized treatment assignment.

A further advantage of the TBD is that
it can give consistent nonparametric estimates
of the treatment effect for any value of $x$
in the randomization window.
It can also be used to estimate the average
causal effect over that window. These additional advantages are described more explicitly in Section \ref{sec:CausalParamsAndProblem}.

An outline of this paper is as follows. 
Section~\ref{sec:lit} reviews the literature on
tie-breakers as well as the much larger literature
on RDDs.  In Section~\ref{sec:CausalParamsAndProblem}, we define a causal parameter of interest that can be used to compare the TBD to the RDD. We also introduce the causal identification assumptions needed and the local linear regressions used for estimating that parameter. In Section~\ref{sec:bandwidth_shrink_asymptotics}, we compare the mean squared error (MSE) in asymptotically optimal estimation of our causal parameter of interest under an RDD to that under a TBD. In that asymptotic setting, the optimal bandwidth decreases at the slow $O(N^{-1/5})$ rate and then the local linear regression is eventually supported entirely in the experimental region of the TBD. In Section~\ref{sec:AsymApprox_via_integrals}, we investigate another regime where the bandwidth $h$ is fixed and is assumed to be larger than the radius $\Delta$ of the experimental region. For this setting, deferring an investigation of the bias to Appendix \ref{sec:MagicBandwidth}, we study the variance of our estimator as a function of $\Delta$ and find the efficiency ratio to be monotone in $\Delta$ for the triangular and boxcar kernels.   Section~\ref{sec:Data_application}
shows how one can compute efficiency ratios empirically using one's
actual assignment variable levels, focusing on the Israeli classroom size data from \cite{AngristLavy} as an example.  The curves of the empirical efficiency ratios are quite similar to the ones obtained theoretically. Section~\ref{sec:Conclusion} presents a discussion. Appendix \ref{sec:GenPneqHalf} extends our results to TBDs in which each subject in the experimental group is given the treatment with probability $p \neq 1/2$. Appendices \ref{sec:proof_of_TBD_MSE}, \ref{sec:relMSETheoremProof}, \ref{sec:TSefficiencyCalc}, and \ref{sec:MonotoneEffTS} contain some of our proofs.

\section{Literature review}\label{sec:lit}

Here we survey the small TBD literature
and some recent developments in the
much larger RDD literature.
We also note connections to the experimental
design literature.
Most of the TBD literature has focused
on global parametric models.  Those are also
the dominant model for experimental design.
The TBD is usually compared to the RDD, for which nonparametric models are the norm.

\subsection{Regression discontinuity methods}

Here we present some concepts from the regression 
discontinuity literature drawing heavily on 
\cite{cattaneo2022regression}.
We begin with a setting where there is a running
variable $x_i$ (also called a score or index)
for subject $i=1,\dots,N$.  
Subjects with $x_i>t$ are given
the treatment and others get the control.
The treatment levels are typically $T_i\in\{0,1\}$
with $T_i=0$ being the control.  For the TBD
setting it is more convenient to use $Z_i\in\{-1,1\}$
with $Z_i=-1$ indicating the control. 
The potential outcomes for subject $i$
are $Y_{i+}$ if treated and $Y_{i-}$ for control.

There are two main approaches to RDD in causal 
inference, continuity-based and local
randomization-based.  The continuity-based approach
assumes that the mean response
for treated subjects is continuous in $x$
as is that for control subjects.  If the mean response
for all subjects shows a discontinuity at $x=t$, then
the magnitude of this discontinuity is defined to
be the causal effect of treatment on subjects at $x=t$,
and one then considers how to estimate that effect.
The version from \cite{hahn2001identification}
has IID tuples $(x_i,Y_{i+},Y_{i-})$ where
$Y_i$ equals $Y_{i+}$ for $x_i>t$ and $Y_i$
equals $Y_{i-}$ otherwise.
The treatment effect is then
$$
\tau = \lim_{x\downarrow t}\mu_+(x)
-\lim_{x\uparrow t}\mu_-(x)
$$
where $\mu_{\pm} = \e(Y_{\pm}\giv X=x)$.
We will work primarily 
with a superpopulation setting
where the subjects in the study are sampled
from a joint distribution. \cite{cattaneo2022regression} discuss this
setting along with some other 
settings that focus on causal inference
for the given subjects.

The local randomization approach from \cite{cattaneo2015randomization}
assumes the existence of a window $\cw=[t-h,t+h]$
such that for $x\in \cw$ the treatment variable is
`as good as randomized'. 
In the local randomization approach
we assume {\bf a)} that the joint distribution of 
the $Z_i$ for $x_i\in\cw$ is known, and,
{\bf b)} that the potential outcomes 
$(Y_{i+},Y_{i-})$ are independent of $x_i$.
In particular, both mean responses must
be constant functions of $x\in\cw$.
A variant of local randomization 
has the treatment based on 
a threshold of $x$ where $x$ is a noisy
version of a latent variable $u$ 
\citep{eckl:igna:wage:2020} and where we have
outside information under which the 
probability of treatment given $u$ is known.

Both frameworks have challenges.
The obvious difficulty with local randomization 
is choosing the window $\cw$ (or knowing
the treatment probability given $u$).  
A smaller window provides a smaller dataset to use while making the
window larger will normally increase 
the discrepancy between
the model and the ground truth. 
The TBD can be viewed as a strategy to impose
by design the first assumption in the local 
randomization approach while not making the 
second assumption.

The challenge in the continuity framework is
in estimating the necessary limits.  In a parametric
model, those limits are estimated from all the
data but have a bias due to lack of fit of the
parametric model.  As a result, nonparametric
regression methods based on local polynomial
models are favored.  
The challenge there is that one must choose
a bandwidth $h$, analogously to the window
size from the local randomization
framework.  Because the mean responses are only
locally polynomial we must contend with
a bias-variance tradeoff in estimating the limits.

Our theoretical and numerical results compare
the TBD to the RDD in the continuity framework.
We think that this is the more likely alternative
analysis for our motivating applications if 
randomization had not been used, because
the local randomization assumptions 
do not seem natural in those applications. 
We focus on the
accuracy of point estimation.  There is also a large
literature on constructing confidence intervals
around the RDD estimate (see \cite{cattaneo2022regression}).
We describe some of those concepts in this paper,
but we do not develop confidence intervals
for the TBD due to space constraints.

There are many different settings where treatments
depend in a discontinuous way on $x$.
In a sharp design, the treatment is $Z_i=1$
if and only if $x_i>t$.  
In a fuzzy design, the assignment to treatment
or control might not perfectly match
$x_i>t$ versus $x_i\le t$, for reasons
beyond the control of the investigator.
For instance there may be subjects that
do not comply with their assigned treatment.
A related issue is that some subjects might be
able to manipulate their value of the running
variable in order to get (or avoid) the treatment.
\cite{rosenman2019optimized} discuss some ways
to counter that problem.
In the settings we consider, the investigator
has control of the treatment and so we study
the sharp design.  We also do not address issues of subject compliance, as we suppose that the effect of the treatment assignment or the intent to treat is of sufficient interest to the investigator.

The RDD setting has been extended well beyond
the simple framework described above.
There are versions with treatments at more than
two levels as well as versions with 
continuous treatments.
The cutoff can be defined in terms of a vector
of covariates yielding a discontinuity set of
dimension one less than the vector has.
The treatment discontinuity could be defined
by geographical boundaries. 
There are multi-cutoff
settings where subject $i$ gets the treatment
when $x_i>t_i$.  There are models where it is
the derivative of $\e(Y\giv x,Z=1)-\e(Y\giv x,Z=-1)$ 
that has a step discontinuity at $t$.
For discussion and references to the variants
above, see \cite{cattaneo2022regression}.

\subsection{Experimental design}

While we study TBDs in comparison to
regression discontinuity,
they can also be considered within an
experimental design framework such as the
covariate-dependent designs considered by
\cite{metelkina2017information}. That paper
emphasizes sequential problems,
and like most of the design literature it
works primarily with parametric
models. 
They find conditions where the treatment
policies converge to optimal deterministic 
functions of a covariate vector.  
The TBD does not use
deterministic allocations which is an advantage
if the response distribution is subject to change
between experiments.

Experimental design, especially in a sequential
setting is closely related to bandit methods.
We are motivated by problems where the responses
$Y_i$ arrive too slowly for bandit methods to
be suitable.  In a business setting, the responses
may arrive after a year or calendar quarter while
the effect of a scholarship on graduation rates can
only be seen years later.

\subsection{Tie-breakers}

The simplest tie-breaker design replaces the threshold $t$ by
two thresholds $t\pm\Delta$.
Subjects with $x_i>t+\Delta$ get the treatment,
subjects with $x_i<t-\Delta$ get the control
and other subjects are randomized to 
either treatment or control.
The simplest choice has
\begin{equation}\label{eq:threelevel}
\Pr(Z_i=1\giv x_i)
=\begin{cases}
0, & x_i <t-\Delta,\\
\frac12, &|x_i-t|\le\Delta\\
1, & x_i >t+\Delta.
\end{cases}
\end{equation}
\cite{camp:1969} describes the $\Delta=0$ version of this design.
Some subjects are exactly at the 
threshold $t$ and then randomization breaks the 
ties among them. 
\cite{boru:1975} considers positive
values of $\Delta$ such that differences
in the running variable among subjects
with $|x-t|\le \Delta$ are essentially
arbitrary because $x$ is an imperfect measure.
\cite{abdu:etal:2022} study the New York
school system that
breaks ties among applicants by lottery, or
standardized test, or audition, depending on
the program.  We only consider randomized
tie-breaking.

\cite{gold:1972} considers a simple two line regression model that in our
notation is
\begin{align}\label{eq:ovmodel}
Y_i & = \beta_1 +\beta_2x_i+\beta_3Z_i+\beta_4x_iZ_i+\err_i
\end{align}
for IID errors $\err_i$ with mean zero 
and variance $\sigma^2$.
He finds that an RDD estimates these
coefficients with a variance that is 
asymptotically $\pi/(\pi-2)\approx2.75$ times as large
as it would be under an RCT.  His setting
has Gaussian $x_i$ and $\beta_4=0$. \cite{jacob2012practical} generalizes the above model to polynomials of
degree two or three in $x$ with or
without interactions between $x$
and $Z$. Their Table 6 shows that an RCT is 4 times as efficient as the RDD for
a uniformly distributed running variable
with $t$ at the midpoint of its range.
They also provide similar efficiency estimates
for other polynomial models for both uniform
and Gaussian $x$ and include settings
where $t$ is not at the median of the
distribution of $x$.

The above comparisons of RCTs to RDDs do not
include tie-breakers.  \cite{CDP94_TBD_power3Delta}
compare small, medium and large randomization
windows in which 20\%, 35\% and 50\%, respectively,
of the subjects get a randomized treatment.
They tabulate the sample sizes needed to attain
a certain level of statistical power for three treatment effect sizes
in these TBDs as well as in an RDD and in an RCT.  All
designs had half of the subjects getting the
treatment. The running variable $x$ was normally
distributed.  The model was~\eqref{eq:ovmodel}
with $\beta_4=0$, making the treatment effect
constant. The power calculations were done
by Monte Carlo sampling. The required sample
sizes became smaller with increased
randomization at any level of power and
effect size.

\cite{owen:vari:2020} work out the 
asymptotic variance of $\hat\beta$ in the
model~\eqref{eq:ovmodel} as a function of
$\Delta$.  They consider both $\dunif[-1,1]$
and $\dnorm(0,1)$ distributions for $x$ and
a threshold $t$ at the median of $x$'s distribution.
The estimated treatment effect is 
$2(\hat\beta_3+\hat\beta_4x)$ and they find
for uniform $x$ that this estimate has
asymptotic variance proportional to
$16(1+3x^2)/[1+3\Delta^2(2-\Delta^2)]$
where $\Delta=0$ describes the RDD and
$\Delta=1$ is the RCT. This decreases
monotonically in $\Delta$ while increasing
monotonically in $|x|$.

They also consider the opportunity cost
of experimentation compared to the RDD.
For $x_i\sim\dunif[-1,1]$ the expected
value of $\sum_{i=1}^NY_i$ is approximately 
$(\beta_1 + \beta_4(1-\Delta^2)/2) N$. If
larger $Y_i$ are better and $\beta_4>0$ then
the opportunity cost 
grows proportionally to $\beta_4\Delta^2$.
They discuss how one might trade off this
opportunity cost against statistical efficiency.

The TBD has so far been analyzed for simpler
methods than the RDD has.  This can be understood
by comparing their workflows.  In a TBD we 
measure $x_i$, then sample $Z_i$ and then
some time later observe $Y_i$. For an RDD we
usually get $(x_i,Z_i,Y_i)$ all at once.
The investigator planning a TBD only has $x_i$,
must decide how to assign the $Z_i$,
and may not know what model will be fit later,
and then chooses some specific model to design for.
When one studies the TBD theoretically, one does
not even have the $x_i$ and then it is natural to
assume a distribution for them.
The TBD is prospective while the
RDD is retrospective.

When vectors $\bsx_i$ of covariates
are available, \citet[Section 8]{owen:vari:2020}
describe how to investigate numerically
the efficiency of a TBD that fits a regression
model on some collection of features of $\bsx_i$
that interact with $Z_i$ where the treatment
window is based on a linear combination of $\bsx_i$.

\cite{TimArt22} study multiple regression for
a tie-breaker in a regression model
$Y_i = \bsx_i^\tran\beta + Z_i\bsx_i^\tran\gamma+\err_i$
with $\Pr(Z_i=1)=p_i\in[0,1]$.
They study a prospective $D$-optimality criterion 
that maximizes the determinant of $\e(\cx^\tran\cx)$,
where $\cx$ is the (random) design matrix built from
$\bsx_i$ and $Z_i$.  
For any known $\bsx_i$, the finite sample optimal $p_i$
can be computed by convex optimization.  
For as yet unobserved $\bsx_i$ the prospective $D$-optimality
criterion averages over both random $Z_i$ and random
$\bsx_i$ from an assumed distribution for $\bsx_i$.
For random $\bsx_i$, they study
a three level tie-breaker with running variable
$\bsx_i^\tran\eta$ and treatment probabilities $0$, $0.5$
and $1$.  

\cite{owen:vari:2020} consider replacing the
simple trichotomy~\eqref{eq:threelevel} by various
sliding scales where $\Pr(Z_i=1 \giv x_i )$ is a monotone
function of $x_i$. They find no advantage to such
alternatives when $x_i$ has a symmetric distribution
about $t$ and half the subjects are treated.
\cite{HarrisonArt22} revisit that problem for the
two line model and find optimal designs for
general $x_i$ distributions and general fractions
of treated subjects without assuming that half
of the subjects will be treated.
These optimal designs can greatly improve 
upon the design defined in \eqref{eq:threelevel}. 
They still have $\Pr(Z_i=1 \giv x_i)$ as piecewise 
constant functions of $x_i$.
If we impose monotonicity $\Pr(Z_i=1 \giv x_i )\ge\Pr(Z_{i'}=1 \giv x_{i'})$
whenever $x_i\ge x_{i'}$, then only two treatment
probability levels are needed.

A limitation of previous comparisons between RDDs and TBDs is that they all assume parametric regression
models for the response $Y_i$.  We compare them using local linear
regression. 
For simplicity, we restrict our attention to the three level version of the TBD in~\eqref{eq:threelevel}.

Our model assumes an additive error
on top of smooth functions of $x$ for the 
data points where $|x-t|>\Delta$.
When $h>\Delta$, the causal estimate we consider merges deterministic
and randomized treatment allocations and then
cannot be analyzed in a potential outcomes framework.
It is common in causal inference to ignore
such data.  For instance, a rule of thumb
in \cite{crump2009dealing} is to omit data
where the treatment probability is outside
$[0.1,0.9]$.
Asymptotically, $h<\Delta$ and then a potential
outcomes analysis is available. Otherwise, to stay within the potential outcomes framework, one
must choose between ignoring some data and
using the additive error model like we do.

\section{Causal estimand and problem formulation} \label{sec:CausalParamsAndProblem}

Throughout the text we will compare the TBD to the RDD. In our comparison, we will define $t$ to be the putative RDD threshold and $\Delta$ to be the experimental radius, and we consider allocation of the treatments to the $N$ subjects according to the 3-level tie-breaker design~\eqref{eq:threelevel}.

Next we discuss the estimands of interest.
For each subject we consider the assignment variable $X\in\real$, the treatment $Z\in\{-1,1\}$ and two potential outcomes: 
$Y_+=Y(Z=1)$ and $Y_-=Y(Z=-1)$. Defining \begin{equation}\label{eq:muPlusMinus_def}
    \mu_+(x) \equiv \e \big(Y_+ \giv X=x \big)
\quad\text{and}\quad
\mu_-(x) \equiv  \e \big(Y_- \giv X= x \big),
\end{equation} the treatment effect at $X=x$ is
$$\tau(x) =\e( Y_+ -Y_-\giv X=x) = \mu_+(x)-\mu_-(x).$$
If the investigator chooses an RDD with a  threshold at $t$, then under certain regularity conditions, the causal estimand $$\tau_{\thresh} \equiv 
\tau(t)$$ can be consistently estimated. In particular, we assume IID samples (or sufficiently weak dependence between the samples) and that:
\begin{compactenum}[(i)]
    \item The density $f(\cdot)$ of the assignment variable $X$ is continuous at $t$ with $f(t)>0$.
    \item The conditional mean functions $\mu_+$ and $\mu_-$ in \eqref{eq:muPlusMinus_def} have at least 3 continuous derivatives in an open neighborhood of $t$.
    \item The conditional variance functions $\sigma_{\pm}^2(x) \equiv \text{Var}\big(Y_{\pm} \giv X=x \big)$ are both bounded in a neighborhood of $t$ and continuous at $t$.
\end{compactenum}
Under these conditions, $\tau_{\thresh}$ can be consistently estimated by local linear regression with $O_p(N^{-2/5})$ errors \citep{ImbensKalyanaraman_optimalBW}. On the other hand, the conditions above do not suffice to let an RDD consistently estimate $\tau(x)$ for any $x\ne t$.

If an investigator runs a TBD with $\Delta>0$, then assumptions like those above replacing $t$ by $x$ allow consistent estimation of $\tau(x)$ for any $x\in(t-\Delta,t+\Delta)$.
Furthermore, as long as $\var(Y_{\pm})<\infty$,
$$\tau_{\text{ATE}}(\Delta) \equiv 
\e( \tau(X) \giv t-\Delta <X<t+\Delta)
$$ 
can be consistently estimated with error $O_p(N^{-1/2})$ in a TBD without requiring assumptions (i), (ii), and (iii). 

The discussion above leaves open the possibility that an RDD could be better than a TBD 
when estimating $\tau_{\thresh}$. Therefore, for the remainder of the paper our primary focus will be on showing that even if the only goal is estimating $\tau_{\thresh}$, it is still beneficial to run a TBD rather than an RDD. Our other focus will be to show that when the only goal is to estimate $\tau_{\thresh}$, it is beneficial to pick a larger $\Delta$ in the experimental design stage when the option is available. Picking a larger $\Delta$ has other benefits as well such as making $\tau(x)$ identifiable for more values of the assignment variable and making $\tau_{\text{ATE}}(\Delta)$ more representative of the overall population and easier to estimate.  Naturally, there are
non-statistical reasons to keep $\Delta$ smaller.

\subsection*{Local linear estimation}

In keeping with current RDD practice, we suppose that {under an RDD} $\tau_{\thresh}$ will be estimated with local linear regression. In particular, we assume that a parameter vector $\beta$ defined by
 \begin{align}\label{eq:kernelcriterion}
 \hat{\beta} = \argmin\limits_{\beta \in \real^4 }
 \sum\limits_{i=1}^N K \Big( \frac{x_i - t}{h} \Big)\big( Y_i -  (\beta_1 + \beta_2x_i+\beta_3 Z_i +\beta_4 x_i Z_i)  \big)^2
\end{align} will be fit for some symmetric kernel function $K(\cdot)\ge0$ and bandwidth parameter $h>0$, and that $\tau_{\thresh}$ will be estimated with \begin{equation}\label{eq:def_hat_tauthresh}
\hat{\tau}_{\thresh}=2\hat{\beta}_3+2\hat{\beta}_4 t.\end{equation} 
While this formulation of estimating $\hat{\tau}_{\thresh}$ may be less familiar than the approach of fitting separate local linear regressions for the treatment and control groups, it is easy to check that the two formulations yield the same estimator.

Throughout the paper, we suppose that under a TBD, $\hat{\tau}_{\thresh}$ will also be estimated using local linear regression according to \eqref{eq:kernelcriterion} and \eqref{eq:def_hat_tauthresh}. 
We do not use the same bandwidth $h$ for the TBD and RDD.
Indeed, in the next section we see that the optimal bandwidth choice (in terms of AMSE) is different for the two designs.

Because kernels with unbounded support are not typically used in RDD analysis (\cite{cattaneo2022regression}),  we  only consider kernels with bounded support. We assume without loss of generality, that the kernel is supported on $[-1,1]$. 
We have a special interest in a uniform (boxcar) kernel $K_{\bc}(x)=1_{|x|\le 1}$ because it is a popular kernel choice and is a local version of the regression model~\eqref{eq:ovmodel}. We are also interested in a triangular spike kernel $K_{\ts}(x)=(1-|x|)_+$ where $z_+=\max(0,z)$. This kernel was shown by \cite{cheng1997automatic} to optimize a bias-variance tradeoff for extrapolation from $x_i>t$ to $\e(Y\giv x=t)$ and has been advocated for RDD analysis by \cite{ImbensKalyanaraman_optimalBW} and \cite{calonico2014robust} among others.

The local linear regression estimator from~\eqref{eq:def_hat_tauthresh} has a bias and variance that both depend on the bandwidth $h$. Larger $h$ typically bring greater bias because the true regression is not precisely linear over a region centered on $t$.  Smaller $h$ bring greater variance because then fewer data points are in the regression. \cite{ImbensKalyanaraman_optimalBW} develop a method for choosing the bandwidth $h$ that is asymptotically mean squared optimal for the RDD. In the next section, we compare the AMSE of the TBD with that of the RDD, when each of them has their asymptotically optimal bandwidth choice.

In this paper, we focus on the accuracy of the estimated treatment
effect.  The RDD literature includes several papers devoted
to the construction of confidence intervals.
There it is necessary to account for the bias in a local polynomial
regression.  A simple approach is to choose $h$ to undersmooth the
regression function, resulting
in a bias of lower order than the standard error, and this simplifies
confidence interval construction.  Undersmoothing, however, brings less
accuracy \citep{calonico2014robust}.  See \cite{calo:catt:farr:2019}
for a discussion of bandwidth choices to optimize estimation, or
optimize confidence interval construction, or to get robust 
(asymptotically valid) confidence intervals using the bandwidth
that is optimal for estimation.

\section{Asymptotic mean square optimal error}\label{sec:bandwidth_shrink_asymptotics}

In this section, we demonstrate the advantage of the TBD over the RDD 
when each design's bandwidth is chosen to 
minimize the AMSE in the estimation of $\tau_{\thresh}=\mu_+(t)-\mu_-(t)$. Following \cite{ImbensKalyanaraman_optimalBW}, we assume the following more general regularity conditions for estimating the causal effect at $X=t$: \begin{compactenum}[(i)]
    \item The triples $\big(X_i,Y_{i+}, Y_{i-}\big)$ for $i=1,\dots,N$ are IID.
    \item The distribution of $X_i$ has density $f(\cdot)$, which is continuously differentiable at $t$ with $f(t)>0$.
    \item Conditional means $\mu_{\pm}(\cdot)$ both have at least three continuous derivatives in an open neighborhood of $t$, with the $k$'th derivatives at $t$ denoted $\mu_\pm^{(k)}(t)$. 
    \item The kernel $K(\cdot)$ is nonnegative, symmetric, bounded, has support $[-1,1]$, is continuous on its support, and is strictly positive somewhere.
    \item The conditional variances $\sigma_\pm^2(x) \equiv \var (Y_{i \pm} \giv X_i=x )$
    are both bounded in an open neighborhood of $t$ and are continuous and strictly positive at $t$.
    \item $\mu_+^{(2)}(t) \neq \mu_-^{(2)}(t)$.
\end{compactenum}

Under an RDD,  $Z_i=1$ if $X_i> t$
and is $-1$ otherwise,
so these assumptions imply Assumptions 3.1--3.6 that \cite{ImbensKalyanaraman_optimalBW} make for an RDD. To allow for analysis in the TBD setting, our assumptions 
(i)--(vi) are slightly 
stronger than those in \cite{ImbensKalyanaraman_optimalBW}. 
For example, unlike in our Assumption (iii), \cite{ImbensKalyanaraman_optimalBW} make no assumptions on $\mu_+(\cdot)$ in the interval $(-\infty,t)$ or on $\mu_-(\cdot)$ in the interval $(t, \infty)$.
 Regarding assumption (vi), \cite{ImbensKalyanaraman_optimalBW} also consider the case where $\mu_+^{(2)}(t) = \mu_-^{(2)}(t)$ and show that in this case, their proposed method of estimating $\tau_{\thresh}$ has error $O_p(N^{-3/7})$ rather than $O_p(N^{-2/5})$. We do not consider the case where $\mu_+^{(2)}(t) = \mu_-^{(2)}(t)$ in detail for the TBD as the result should be similar to that for the RDD and is of less interest for our head-to-head comparison of TBD with RDD.

Because our assumptions (i)--(vi) imply Assumptions 3.1--3.6 in \cite{ImbensKalyanaraman_optimalBW} for an RDD, if we let  
$$\tilde{\nu}_j \equiv \int_0^{\infty} u^j K(u) \rd u  \quad \text{ and } \quad \tilde{\pi}_j \equiv \int_0^{\infty} u^j K^2(u) \rd u$$ 
for $j \in \mathbb{N}$, and let \begin{align}\label{eq:defctilde}
\tilde{C}_1 \equiv \frac{1}{4} \Big( \frac{\tilde{\nu}_2^2 - \tilde{\nu}_1 \tilde{\nu}_3}{\tilde{\nu}_0 \tilde{\nu}_2 - \tilde{\nu}_1^2} \Big)^2  \quad \text{ and } \quad \tilde{C}_2 \equiv \frac{\tilde{\nu}_2^2 \tilde{\pi}_0 -2 \tilde{\nu}_1 \tilde{\nu}_2 \tilde{\pi}_1+\tilde{\nu}_1^2 \tilde{\pi}_2}{( \tilde{\nu}_2 \tilde{\nu}_0 - \tilde{\nu}_1^2)^2},
\end{align}
and define
\begin{equation}\label{eq:AMSE_RDD_def}
    \text{AMSE}_{\text{RDD}}(h,N) \equiv \tilde{C}_1 \big( \mu_+^{(2)}(t) -\mu_-^{(2)}(t) \big)^2 h^4+ \frac{\tilde{C}_2}{Nh} \Big( \frac{\sigma_+^2(t)+\sigma_-^2(t)}{f(t)} \Big),
\end{equation}  
then Lemma 3.1 of \cite{ImbensKalyanaraman_optimalBW} holds. We reproduce the statement of this lemma below.

 \begin{lemma}
 \label{lemma:MSE_RDD_IK} Under Assumptions (i)--(vi), if an \emph{RDD} determines the treatment assignment and both $h \to 0$ and $Nh \to \infty$ as the number of samples $N \to \infty$, then the mean squared error in estimating $\tau_{\emph{thresh}}$ is given by \begin{equation}\label{eq:asymp_MSE_RDD}
     \emph{MSE}_{\emph{RDD}}(h,N)= \emph{AMSE}_{\emph{RDD}}(h,N) + o_p\Big( h^4 + \frac{1}{Nh} \Big),
 \end{equation} 
 and the asymptotically optimal bandwidth, defined by $\argmin_h \emph{AMSE}_{\emph{RDD}}(h,N)$ is given by \begin{equation}\label{eq:hopt_RDD}  h_{\emph{opt,RDD}}(N)=  \Big( \frac{\tilde{C}_2}{4 \tilde{C}_1} \Big)^{1/5}  \Big( \frac{\sigma_+^2(t) +\sigma_-^2(t)}{f(t) \big( \mu_+^{(2)}(t)- \mu_-^{(2)}(t)\big)^2} \Big)^{1/5}  N^{-1/5}.  
 \end{equation}
 \end{lemma}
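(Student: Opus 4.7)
The plan is to reduce the statement almost entirely to Lemma 3.1 of \cite{ImbensKalyanaraman_optimalBW}, so the proof has two conceptual steps: (a) verify that the estimator defined by the joint regression in \eqref{eq:kernelcriterion}--\eqref{eq:def_hat_tauthresh} agrees with the usual ``two one-sided local linear regressions'' estimator under an RDD, and (b) verify that Assumptions (i)--(vi) above imply the IK Assumptions 3.1--3.6. Once (a) and (b) are in place, the AMSE expansion \eqref{eq:asymp_MSE_RDD} and the minimizer \eqref{eq:hopt_RDD} are literally IK's statement with their constants renamed as $\tilde C_1,\tilde C_2$ in \eqref{eq:defctilde}.

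For step (a), I would note that under an RDD the indicator $1_{x_i>t}$ and $Z_i=\mathrm{sign}(x_i-t)$ are deterministic functions of $x_i$, and the weighted least squares problem in \eqref{eq:kernelcriterion} decouples. Splitting the sum by the sign of $x_i-t$, the design matrix becomes block diagonal with respect to the two sides of $t$, up to an invertible linear reparametrization of $(\beta_1,\beta_2,\beta_3,\beta_4)$ into the intercept/slope pair on each side. The linear combination $2\beta_3+2\beta_4 t$ then equals the difference of the two one-sided regression functions evaluated at $t$, which is exactly the IK estimator of $\tau_{\mathrm{thresh}}$. This is a short algebraic check; I would record it explicitly so that the reader can invoke IK without worrying about parameterization.

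For step (b), I would go through IK's Assumptions 3.1--3.6 and match each one. Our (i) matches their IID assumption; our (ii) gives the density assumption (continuous differentiability of $f$ at $t$ is strictly stronger than their continuity of $f$ at $t$, which is harmless); our (iii) is strictly stronger than their smoothness of $\mu_\pm$ (they only need the conditional means to be smooth on their respective sides of $t$, while we impose smoothness on both sides, which is only needed later for the TBD analysis); our (iv) and (v) match their kernel and conditional variance assumptions; and our (vi) matches their nondegeneracy assumption $\mu_+^{(2)}(t)\neq\mu_-^{(2)}(t)$, which guarantees a strictly positive leading bias constant so that the bias--variance tradeoff in \eqref{eq:AMSE_RDD_def} is nontrivial.

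The main (and essentially only) work is the bookkeeping in step (a); the rest is citation. Having established (a) and (b), the $h^4$ bias and $(Nh)^{-1}$ variance terms in \eqref{eq:AMSE_RDD_def} follow from the standard boundary local linear expansion (Taylor expanding $\mu_\pm$ to second order, using kernel moments $\tilde\nu_j$ for the bias and $\tilde\pi_j$ for the variance, and inverting the boundary $2\times 2$ Gram matrix, which produces the denominators $\tilde\nu_0\tilde\nu_2-\tilde\nu_1^2$ in \eqref{eq:defctilde}). Differentiating \eqref{eq:AMSE_RDD_def} in $h$, setting the derivative to zero, and solving gives \eqref{eq:hopt_RDD}. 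No genuine obstacle arises; the only minor care is to check that the $o_p(h^4+(Nh)^{-1})$ remainder in \eqref{eq:asymp_MSE_RDD} really comes out of IK's proof under our slightly strengthened smoothness on both sides of $t$, which it does since their remainder argument only uses one-sided smoothness.
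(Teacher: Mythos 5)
Your proposal is correct and matches the paper's approach: the paper proves this lemma by direct citation to Lemma~3.1 of \cite{ImbensKalyanaraman_optimalBW}, having already noted in the surrounding text that the joint regression \eqref{eq:kernelcriterion}--\eqref{eq:def_hat_tauthresh} coincides with the two one-sided local linear fits and that Assumptions (i)--(vi) imply IK's Assumptions 3.1--3.6. Your steps (a) and (b) simply make those two observations explicit, so there is no substantive difference.
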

\begin{proof}
\citet[Lemma 3.1]{ImbensKalyanaraman_optimalBW}.
\end{proof}

Because we wish to compare the RDD to the tie-breaker design, we derive a similar result for the asymptotic MSE for the tie-breaker design. The TBD counterparts to the RDD quantities above are
\begin{equation}\label{eq:def_nuAndPi_newer}
    \nu_j \equiv \int_{-\infty}^{\infty} u^j K(u) \rd u  \quad \text{ and } \quad \pi_j \equiv \int_{-\infty}^{\infty} u^j K^2(u) \rd u
\end{equation}
for $j \in \mathbb{N}$, 
\begin{equation}\label{eq:C_def}C_1 \equiv \frac{1}{4} \Big( \frac{\nu_2^2 - \nu_1 \nu_3}{\nu_0 \nu_2 - \nu_1^2} \Big)^2  \quad \text{ and } \quad C_2 \equiv  \frac{2\nu_2^2 \pi_0 -4 \nu_1 \nu_2 \pi_1+2\nu_1^2 \pi_2}{( \nu_2 \nu_0 - \nu_1^2)^2},
\end{equation} 
and  
\begin{equation}\label{eq:AMSE_TBD_def}
    \text{AMSE}_{\text{TBD}}(h,N) \equiv C_1 \big(  \mu_+^{(2)}(t) -\mu_-^{(2)}(t) \big)^2 h^4+ \frac{C_2}{Nh} \Big( \frac{  \sigma_+^2(t)+\sigma_-^2(t)}{f(t)} \Big).
\end{equation}

 \begin{lemma}\label{lemma:MSE_TBD} Under Assumptions (i)--(vi), if a \emph{TBD} with a fixed experimental radius $\Delta >0$ determines the treatment assignment and both $h \to 0$ and $Nh \to \infty$ as the number of samples $N \to \infty$,  then the mean squared error in estimating $\tau_{\emph{thresh}}$ is given by \begin{equation}\label{eq:asymp_MSE_TBD}
     \emph{MSE}_{\emph{TBD}}(h,N)= \emph{AMSE}_{\emph{TBD}}(h,N) + o_p\Big( h^4 + \frac{1}{Nh} \Big),
 \end{equation} and the asymptotically optimal bandwidth, defined by $\argmin_h \emph{AMSE}_{\emph{TBD}}(h,N)$ is \begin{equation}\label{eq:hopt_TBD}  h_{\emph{opt,TBD}}(N)=  \Big( \frac{C_2}{4 C_1} \Big)^{1/5}  \Big( \frac{  \sigma_+^2(t) +\sigma_-^2(t)}{f(t) \big( \mu_+^{(2)}(t)- \mu_-^{(2)}(t)\big)^2} \Big)^{1/5}  N^{-1/5}.  \end{equation}
 \end{lemma}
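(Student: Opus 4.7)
The plan is to follow the strategy of \cite{ImbensKalyanaraman_optimalBW} that yields Lemma~\ref{lemma:MSE_RDD_IK}, modifying it to exploit the key structural feature of the TBD: within the randomization window, treatment is independent of the running variable. The first step is to observe that since $h \to 0$ while $\Delta > 0$ is fixed, for all $N$ large enough $h < \Delta$, so every observation with positive kernel weight $K((X_i - t)/h)$ has $|X_i - t| < \Delta$ and hence $Z_i = \pm 1$ with probability $1/2$ each, independently of $X_i$. Consequently, the four-parameter regression \eqref{eq:kernelcriterion} decouples into two separate kernel-weighted local linear regressions on the treated ($Z_i = 1$) and control ($Z_i = -1$) subsamples, and $\hat\tau_{\thresh} = \hat\mu_+(t) - \hat\mu_-(t)$, where $\hat\mu_\pm(t)$ denotes the value at $t$ of the appropriate subsample's local linear fit.

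Next I would compute the bias and variance of each $\hat\mu_\pm(t)$ using standard interior-point local linear regression asymptotics. Conditional on the treatment assignment vector, the $X_i$'s with $Z_i=1$ form an i.i.d.\ sample from a density equal to $f(x)/2$ on $(t-\Delta, t+\Delta)$, and analogously for the controls; in particular $t$ is an \emph{interior} point of the support of each subsample's density (this is where the two-sided strengthening of Assumption~(iii) is used). A Taylor expansion in $h$ of the normal equations of each subsample regression, identical in structure to the boundary expansion used by \cite{ImbensKalyanaraman_optimalBW} but with integrals running over $[-1,1]$ rather than $[0,1]$, then yields a bias of order $h^2$ with coefficient $(\nu_2^2 - \nu_1\nu_3)/(\nu_0\nu_2 - \nu_1^2)$ and a variance of order $1/(Nh)$ with the effective density $f(t)/2$ in the denominator. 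The halved effective density is exactly what produces the factor of $2$ distinguishing $C_2$ from $\tilde{C}_2$ in \eqref{eq:C_def}.

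Combining the two subsample estimators --- whose variances add because the treated and control subsamples are disjoint --- yields \eqref{eq:asymp_MSE_TBD}, and minimizing $\text{AMSE}_{\text{TBD}}(h,N)$ over $h$ by a routine first-order condition gives \eqref{eq:hopt_TBD}. The main technical effort lies in controlling the $o_p$ remainders uniformly, i.e., in showing that the sample moments $\sum_i K((X_i-t)/h)((X_i-t)/h)^k$ and their $Y$-weighted counterparts concentrate around their expectations at the required rates. These arguments can be imported almost verbatim from the proof of Lemma~\ref{lemma:MSE_RDD_IK}, as they depend only on the regularity of $f$, $\mu_\pm$, $\sigma_\pm^2$ and $K$, not on how treatment is assigned; the genuinely new content is the structural observation that TBD randomization converts a one-sided boundary regression into a two-sided interior one while simultaneously halving each subsample's local density.
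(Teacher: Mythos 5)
Your proposal matches the paper's own proof in Appendix~\ref{sec:proof_of_TBD_MSE} essentially step for step: decouple \eqref{eq:kernelcriterion} into separate treated/control local linear fits, apply two-sided (interior-point) moment expansions in which the conditional densities $f_\pm$ yield the halved effective local density $f(t)/2$, add the two variances by conditional independence of the subsamples, and minimize the resulting AMSE over $h$. The only difference is presentational: the paper writes out the moment-concentration steps (the $F_{j,\pm}$ and $G_{j,\pm}$ lemmas) explicitly rather than importing them from \cite{ImbensKalyanaraman_optimalBW}, exactly as you propose to do.
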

 
 \begin{proof}
 See Appendix \ref{sec:proof_of_TBD_MSE}.
 \end{proof}
 
 The proof of this lemma is very similar to the proof of Lemma 3.1 in \cite{ImbensKalyanaraman_optimalBW}, from their appendix. Instead of pointing to their proof and noting the parts of their proof that differ in the tie-breaker design setting, we write out the proof of Lemma~\ref{lemma:MSE_TBD} in Appendix~\ref{sec:proof_of_TBD_MSE} to ensure there are no subtle issues with using their proof in the tie-breaker design setting.

{The leading order MSE formulas are derived by evaluating and summing the leading order terms for both the squared-bias and the variance.} In formulas \eqref{eq:AMSE_RDD_def} and \eqref{eq:AMSE_TBD_def} for the leading order MSE, the first term gives the leading order squared-bias while the second term gives the leading order variance. See formulas \eqref{eq:as_bias_TBD} and \eqref{eq:TBD_as_var_formula} for explicit calculations of the leading order bias and variance in the TBD case, and see the formulas for `B' and `V' in the appendix of \cite{ImbensKalyanaraman_optimalBW} for explicit calculations of these quantities in the RDD case. It is not surprising that the formulas for the leading order squared-bias, variance and MSE are different for the two design types because for an RDD, estimation of $\tau_{\thresh}$ involves estimation of the mean functions at a boundary point, whereas for a TBD, estimation of $\tau_{\thresh}$ involves estimation of mean functions at an interior point. 
 
 In Figure \ref{fig:bias_var_tradeoff}, we plug in scalar multiples of the optimal bandwidth for the RDD given in Lemma \ref{lemma:MSE_RDD_IK} to the first and second terms of formulas \eqref{eq:AMSE_RDD_def} and \eqref{eq:AMSE_TBD_def} to visualize the trade-off for the leading order squared-bias and variance in a tie-breaker design compared to a regression discontinuity design. The formulas simplify when defining the quantity \begin{equation}\label{eq:alpha_definition}
     \alpha \equiv \frac{5}{4} \big(  \mu_+^{(2)}(t) - \mu_-^{(2)}(t) \big)^{2/5} \Big( \frac{  \sigma_+^2(t) + \sigma_-^2(t) }{  f(t)}     \Big)^{4/5}, 
 \end{equation} which does not depend on $h$, $N$ or the kernel choice.

\begin{figure}[t]
\centering
\includegraphics[width=1 \hsize]{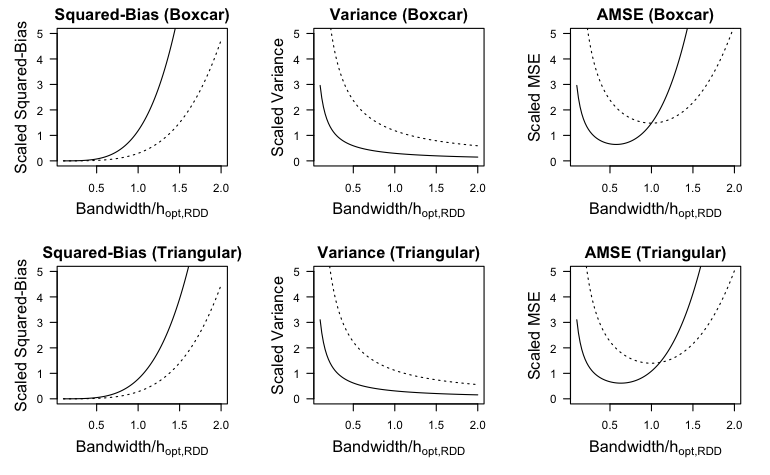}
\caption{\label{fig:bias_var_tradeoff} A comparison of the asymptotic bias-variance tradeoff for the regression discontinuity design (dotted lines) versus for the tie-breaker design (solid lines). The x-axes are in units of asymptotically MSE optimal bandwidth for RDD given at \eqref{eq:hopt_RDD} while the y-axes are the leading order terms in units of $\alpha N^{-4/5}$ where $N$ is the sample size and $\alpha$ is a constant given in \eqref{eq:alpha_definition} that depends on properties of the joint distribution of $(X,Y,Z)$ in a neighborhood of the cutoff. }
\end{figure}

In practice, the optimal bandwidth is not known and must be estimated. For both the RDD and the TBD, the optimal bandwidth depends on the quantity \begin{equation}\label{eq:def_gamma}
    \gamma \equiv \Big( \frac{\sigma_+^2(t) +\sigma_-^2(t)}{f(t) \big( \mu_+^{(2)}(t)- \mu_-^{(2)}(t)\big)^2} \Big)^{1/5}
\end{equation} which must be estimated from the observed data. We consider the regularized estimator for $\gamma$ of \cite{ImbensKalyanaraman_optimalBW}. We take the estimated optimal bandwidth $\hat{h}_{\text{opt}}$ proposed in their Section 4.2 and set $\hat{\gamma}_{\text{RDD}}= (4 \tilde{C}_1/\tilde{C_2} )^{1/5}\hat{h}_{\text{opt}} N^{1/5}$. It can be seen from the proof of Theorem 4.1 in \cite{ImbensKalyanaraman_optimalBW} that under assumptions (i)--(vi), $\hat{\gamma}_{\text{RDD}} \xrightarrow{p} \gamma$.
In the TBD case, we know a consistent estimator of $\gamma$ exists. For example, if we let $\hat{\gamma}_{\text{TBD,naive}}$ be an estimator of $\gamma$ that is constructed similarly to $\hat{\gamma}_{\text{RDD}}$ using only the subset of the data which looks like an RDD, $\hat{\gamma}_{\text{TBD,naive}} \xrightarrow{p} \gamma$. Of course such an estimator of $\gamma$ is inefficient; in practice one should instead use an estimator of $\gamma$ that does not throw out all of the control samples for which $x>t$ and all of the treated samples for which $x<t$. For our theoretical comparison of TBDs with RDDs, we are not concerned with the actual form of $\hat{\gamma}_{\text{TBD}}$ as long as it is consistent. Therefore, in the TBD case we will let $\hat{\gamma}_{\text{TBD}}$ be any estimator that satisfies $\hat{\gamma}_{\text{TBD}} \xrightarrow{p} \gamma$. We make a few remarks about estimation of $\gamma$ in the TBD setting in the discussion section.

To compare the AMSE for the RDD versus the TBD, we will assume that if the investigator were to run an RDD and were seeking mean squared optimal estimation of $\tau_{\thresh}$, they would ultimately use the bandwidth \begin{equation}\label{eq:hopt_RDD_est}
    \hat{h}_{\text{opt,RDD}}(N) = \Big( \frac{\tilde{C}_2}{4 \tilde{C}_1} \Big)^{1/5}  \hat{\gamma}_{\text{RDD}}  N^{-1/5},
\end{equation} 
where $\hat{\gamma}_{\text{RDD}}$ is the consistent estimator for $\gamma$ described above and $\tilde{C}_j$ are defined at~\eqref{eq:defctilde}. 
We will also assume that if the investigator were to run a TBD seeking mean squared optimal estimation of $\tau_{\thresh}$, they would ultimately use the bandwidth \begin{equation}\label{eq:hopt_TBD_est}
    \hat{h}_{\text{opt,TBD}}(N) = \Big( \frac{C_2}{4 C_1} \Big)^{1/5}  \hat{\gamma}_{\text{TBD}}  N^{-1/5},
\end{equation} where $\hat{\gamma}_{\text{TBD}}$ is any consistent estimator of $\gamma$ and $C_j$ are defined at \eqref{eq:C_def}.

The following theorem compares the RDD with $N$ points to a TBD 
with $\theta N$ points for some $\theta>0$. We will use the value of $\theta$ that provides equal MSEs for estimation of $\tau_{\thresh}$ as a metric to compare the two designs. 

\begin{theorem}\label{theorem:MSE_RDD_versu_TBD}
Let $\theta>0$ be a constant. Under assumptions (i)--(vi), as $N \to \infty$
\begin{equation}\label{eq:MSE_rat}
    \frac{\emph{MSE}_{\emph{RDD}}\big(\hat{h}_{\emph{opt,RDD}}(N),N \big)}{\emph{MSE}_{\emph{TBD}}\big(\hat{h}_{\emph{opt,TBD}}(\theta N), \theta N \big)} \xrightarrow{p} \theta^{4/5} \Big( \frac{\tilde{C}_1 \tilde{C}_2^4}{C_1 C_2^4}\Big)^{1/5} 
\end{equation}
holds for any tie-breaker design
of the form \eqref{eq:threelevel} with $\Delta>0$. 
\end{theorem}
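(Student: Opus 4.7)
My plan is to reduce the claim to a ratio of deterministic $\text{AMSE}$ values at the optimal bandwidths from Lemmas \ref{lemma:MSE_RDD_IK} and \ref{lemma:MSE_TBD}, and then to pass from the deterministic optima to the consistent plug-in bandwidths $\hat{h}_{\text{opt,RDD}}$ and $\hat{h}_{\text{opt,TBD}}$.

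First I would compute the deterministic optimal $\text{AMSE}$ for each design. Both $\text{AMSE}_{\text{RDD}}(h,N)$ and $\text{AMSE}_{\text{TBD}}(h,N)$ have the form $Ah^4+B/(Nh)$, whose minimum over $h>0$ equals $(5/4^{4/5})A^{1/5}B^{4/5}N^{-4/5}$ by elementary calculus. For the RDD, substituting $A=\tilde{C}_1(\mu_+^{(2)}(t)-\mu_-^{(2)}(t))^2$ and $B=\tilde{C}_2(\sigma_+^2(t)+\sigma_-^2(t))/f(t)$ and expressing the result in terms of the design-independent quantity $\alpha$ from \eqref{eq:alpha_definition} yields a value of the form $\kappa\tilde{C}_1^{1/5}\tilde{C}_2^{4/5}N^{-4/5}$, where $\kappa=4^{1/5}\alpha$ depends only on the joint distribution of $(X,Y,Z)$ near the threshold, not on the kernel or the design. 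The identical calculation for the TBD with sample size $\theta N$ yields $\kappa C_1^{1/5} C_2^{4/5}(\theta N)^{-4/5}$. The common factor $\kappa$ cancels in the ratio, leaving exactly $\theta^{4/5}(\tilde{C}_1\tilde{C}_2^4/(C_1C_2^4))^{1/5}$, which is the claimed limit.

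Second, I would handle the random plug-in bandwidths. Writing $\hat{h}_{\text{opt,RDD}}(N)=(\hat{\gamma}_{\text{RDD}}/\gamma)h_{\text{opt,RDD}}(N)$ and using $\hat{\gamma}_{\text{RDD}}\xrightarrow{p}\gamma>0$, one has $\hat{h}_{\text{opt,RDD}}(N)/h_{\text{opt,RDD}}(N)\xrightarrow{p}1$; in particular $\hat{h}_{\text{opt,RDD}}(N)\xrightarrow{p}0$ and $N\hat{h}_{\text{opt,RDD}}(N)\xrightarrow{p}\infty$, and the analogous facts hold for the TBD plug-in. Applying Lemma \ref{lemma:MSE_RDD_IK} at the random bandwidth gives $\text{MSE}_{\text{RDD}}(\hat{h}_{\text{opt,RDD}}(N),N)=\text{AMSE}_{\text{RDD}}(\hat{h}_{\text{opt,RDD}}(N),N)+o_p(N^{-4/5})$, and by the continuous mapping theorem the right-hand side equals $(1+o_p(1))\text{AMSE}_{\text{RDD}}(h_{\text{opt,RDD}}(N),N)$ since both $h^4$ and $1/(Nh)$ are perturbed by $1+o_p(1)$ factors when $h$ itself is perturbed by a $1+o_p(1)$ factor. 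The analogous statement holds for the TBD via Lemma \ref{lemma:MSE_TBD}. Slutsky's theorem then delivers the claimed convergence in probability of the $\text{MSE}$ ratio to the deterministic $\text{AMSE}$ ratio from the first step.

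The main obstacle is the transfer from deterministic to consistent plug-in bandwidths: Lemmas \ref{lemma:MSE_RDD_IK} and \ref{lemma:MSE_TBD} are stated for deterministic sequences $h\to 0$ with $Nh\to\infty$, and extending them to bandwidth sequences that are random but satisfy these conditions in probability at the $N^{-1/5}$ rate requires either a uniform version of the bias-variance expansion inside a shrinking neighborhood of $h_{\text{opt}}$, or an appeal to a Skorokhod subsequence argument that converts convergence in probability to almost-sure convergence along subsequences. Apart from this mild technical step, the argument reduces to the elementary minimization and cancellation carried out in the first step.
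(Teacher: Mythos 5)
Your proposal is correct and follows essentially the same route as the paper's proof: both evaluate the AMSE formulas at the plug-in bandwidths using $\hat{\gamma}_{\text{RDD}}\xrightarrow{p}\gamma$ and $\hat{\gamma}_{\text{TBD}}\xrightarrow{p}\gamma$, invoke Lemmas \ref{lemma:MSE_RDD_IK} and \ref{lemma:MSE_TBD} to replace each MSE by its AMSE up to an $o_p(N^{-4/5})$ remainder, and cancel the common distribution-dependent factor (the paper's $(4\tilde{C}_1\tilde{C}_2^4)^{1/5}\alpha N^{-4/5}$ is exactly your $\kappa\,\tilde{C}_1^{1/5}\tilde{C}_2^{4/5}N^{-4/5}$ with $\kappa=4^{1/5}\alpha$). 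The technical point you flag about applying the lemmas at a data-dependent bandwidth is not handled any more rigorously in the paper, which simply substitutes $\hat{h}_{\text{opt}}$ into the expansions and writes the remainder as $o_p\big(O_p(N^{-4/5})\big)$, so your treatment is at least as complete.
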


\begin{proof}
See Appendix~\ref{sec:relMSETheoremProof}.
\end{proof}

Theorem~\ref{theorem:MSE_RDD_versu_TBD} uses the 
assumption that $\Pr(Z_i=1 \giv x_i )=1/2$ for $x_i$ in
the randomization window.
If $\sigma^2_+(t)\ne\sigma^2_-(t)$, then we might
prefer to offer the treatment with probability
$p\ne1/2$.
In Appendix \ref{sec:GenPneqHalf}, we study 
a treatment probability $p\in(0,1)$.
When $p = \sigma_+(t)/(\sigma_+(t) +\sigma_-(t))$, the asymptotic MSE is minimized, though
an investigator would also want to account for the
cost of the treatment.  If one chooses $p$
using poor prior estimates of $\sigma_\pm$ it is
possible that the resulting TBD will have
a higher asymptotic MSE than the RDD.  However, for any of the kernels
in Table~\ref{table:Kernels}, one can
protect against that by choosing
$p\in[0.18,0.82]$.

\subsection*{Asymptotic MSE comparison for some specific kernels}

We now use Theorem~\ref{theorem:MSE_RDD_versu_TBD} to compare the MSE in estimating $\tau_{\thresh}$ for the RDD versus the TBD, under optimal bandwidth choices for various kernels of interest. See Table \ref{table:Kernels}. If an investigator is deliberating between an RDD with $N$ samples versus conducting a TBD (for a fixed $\Delta>0$) with $N$ samples, and either experimental design is to be analyzed with the asymptotically optimal bandwidth choice for the prespecified kernel, 
then the ratio of the MSEs will converge in probability to $\big((\tilde{C}_1 \tilde{C}_2^4)/(C_1 C_2^4)\big)^{1/5}$ as $N \to \infty$. Using formulas \eqref{eq:defctilde} and \eqref{eq:C_def}, the fourth column of Table \ref{table:Kernels} gives the value of the quantity $\big((\tilde{C}_1 \tilde{C}_2^4)/(C_1 C_2^4)\big)^{1/5}$ rounded to 2 decimal places. For the boxcar and triangular kernels respectively, this quantity is precisely $64^{1/5}$ and $60.46618^{1/5}$ without rounding. 

It is also interesting to consider the quantity given by \begin{equation}\label{eq:DefthetaStar}
    \theta_* = \frac{C_1^{1/4}C_2}{\tilde{C}_1^{1/4} \tilde{C}_2}.
\end{equation}
As a result of Theorem \ref{theorem:MSE_RDD_versu_TBD}, an experimental designer deciding to use a TBD rather than an RDD would only need to collect $\theta_*$ times as many samples in order to achieve the same asymptotic MSE in estimating $\tau_{\thresh}$.

\begin{table}[!t]
\caption{An asymptotic comparison of the regression discontinuity designs with tie-breaker designs in Kernel regression-based estimation of $\tau_{\thresh}$. The fourth column gives the quantity $\big((\tilde{C}_1 \tilde{C}_2^4)/(C_1 C_2^4)\big)^{1/5}$, which is computed using \eqref{eq:defctilde} and \eqref{eq:C_def} and rounded to 2 decimal places. The last column gives the quantity $\theta_*$ given by \eqref{eq:DefthetaStar} rounded to 3 decimal places.}
\label{table:Kernels}
\begin{center}
\begin{tabular}{ l l c c c } 
\toprule
 Kernel & Function $K(u)$ & Support & Relative AMSE & $\theta_*$  \\ 
\midrule
 Boxcar & $1/2$ & $[-1,1]$  & 2.30 & 0.354 \\ [1ex] 

Triangular & $(1-\vert u \vert)_+$ & $[-1,1]$  & 2.27 &  0.359 \\  [1ex] 

 Epanechnikov & $ \frac{3}{4}(1- u^2)_+$ & $[-1,1]$ & 2.31 & 0.351 \\ [1ex] 

 Quartic & $ \frac{15}{16} (1-u^2)_+^2$  & $[-1,1]$ & 2.29 & 0.355  \\ [1ex] 

 Triweight & $ \frac{35}{32} (1-u^2)_+^3$  & $[-1,1]$ & 2.28 & 0.357  \\ [1ex] 

  
  Tricube & $ \frac{70}{81} (1-\vert u\vert^3 )_+^3$ & $[-1,1]$ & 2.31 & 0.352  \\ [1ex] 

 
 Cosine & $\frac{\pi}{4} \cos \big( \frac{\pi}{2} u \big)$  & $[-1,1]$ & 2.31 & 0.352 \\ 
 \bottomrule
\end{tabular}
\end{center}
\end{table}

Table \ref{table:Kernels} shows that the kernel choice has a remarkably small impact on the relative benefit of using a TBD rather than an RDD to estimate $\tau_{\thresh}$. It is well known in the usual kernel smoothing setting that there is little difference in performance among the widely used kernels. See \cite{wand:jone:1994}.
 If $\tau_{\thresh}$ is to be estimated with local linear regression using one of the seven popular kernel choices exhibited in Table \ref{table:Kernels}, then the RDD has an asymptotic MSE that is about 2.3 times as large as that of the TBD, and the TBD will require 64 to 65 percent fewer samples than the RDD in order to achieve the same asymptotic MSE. 

We think that a version of Theorem~\ref{theorem:MSE_RDD_versu_TBD} 
will hold also for unbounded kernels such as the $\dnorm(0,1)$ density under
reasonable but stronger regularity conditions on $f(\cdot)$, $\mu_\pm(\cdot)$
and $\sigma_\pm(\cdot)$. We do not develop such a result
as \cite{cattaneo2022regression} state that kernels with unbounded
support are not used in RDD analysis.

\section{
Variance comparisons at fixed $h> \Delta$} \label{sec:AsymApprox_via_integrals}

The AMSE comparison in Section \ref{sec:bandwidth_shrink_asymptotics} depends upon the optimal TBD bandwidth, $\hat{h}_{\text{opt,TBD}}$, eventually becoming smaller than the
positive experimental radius $\Delta$. 
However, the optimal $h$ converges to zero only at the very
slow rate $N^{-1/5}$.  Furthermore, the constant in
that rate includes the factor
$|\mu_+^{(2)}(t)-\mu_-^{(2)}(t)|^{-2/5}$
which could be very large.
We believe that in many applied settings the optimal
value of $h$ will not be smaller than $\Delta$.
Then $\Delta/h$ is not necessarily
within the support of the kernel and $Y$ values
data from outside the experimental region are included in the local linear regression.

In this section, we complement the
prior analysis with one where $h$ is fixed
and larger than $\Delta$.
We assume a symmetric
kernel function that is Lipschitz continuous on its support.

The kernel regression estimate of
$\hat\tau_{\thresh}$ has a leading bias of $O(h^2)$.
In the regime where the bandwidth is bigger than $\Delta$, mean squared optimality analysis for estimating $\tau_{\thresh}$ is complicated by the fact that for the TBD there will often exist an $h> \Delta$ such that the constant in this $O(h^2)$ term vanishes.
Remarkably, such a bandwidth depends only on the experimental radius $\Delta$ and the kernel $K$. It does not depend on $\mu_+$, $\mu_-$, $f$, or $N$. In Appendix \ref{sec:MagicBandwidth}, we prove that under certain regularity conditions on $f$, $\mu_{\pm}$, and $K$, a bandwidth $h$ that solves $\nu_2^2 = 4 \int_{\Delta/h}^{\infty} uK(u) \rd u  \int_{\Delta/h}^{\infty} u^3K(u) \rd u$ removes the leading order bias, and moreover, such a solution exists. See Table \ref{table:MagicBandwidth} for numerical solutions of this equation for the kernel choices considered previously. We find that for these kernel choices, the bandwidth removing the leading order bias ranges from approximately $3.13 \Delta$ for the Boxcar kernel to approximately $4.84 \Delta$ for the Triweight kernel. We caution investigators against picking this bandwidth 
because it does not shrink with $N$. It
could place too little weight on reducing variance for small $N$ and the third order bias term will be $O(1)$.

Due to the existence of a fixed bandwidth bigger than $\Delta$ that removes the leading order bias of $\hat{\tau}_{\thresh}$, analysis of bias and mean squared optimality using second order Taylor expansions of $\mu_{\pm}(\cdot)$ would be misleading. Hence, we do not conduct an analysis similar to that seen in Section \ref{sec:bandwidth_shrink_asymptotics} for the regime where $h>\Delta$. 
For that regime, we instead restrict our attention to the variance in estimating $\tau_{\thresh}$ at a fixed bandwidth $h$. 

The variance of the local linear estimator $\hat{\tau}_{\thresh}$ given in \eqref{eq:kernelcriterion} and \eqref{eq:def_hat_tauthresh} can be computed as follows. The design matrix for the regression is $\cx\in\real^{N\times 4}$ with $i$'th
row $(1,x_i,Z_i,x_iZ_i)$.  The response is $\cy = (Y_1,\dots,Y_N)^\tran$.
For simplicity we assume $\var(\cy \giv \cx)=\sigma^2 I_{N}$ and without loss of generality we assume $t=0$. The kernel weights are $K(x_i/h)$, and we
let $\cw=\cw(h)\in\real^{N\times N}=\diag(K(x_i/h))$. Then
\begin{align}\label{eq:betahat}
\hat\beta=\hat\beta(\Delta)
=(\cx^\tran\cw\cx)^{-1}\cx^\tran\cw\cy,
\end{align}
and under the assumption that $\var(\cy \giv \cx)=\sigma^2 I_{N}$ we have
\begin{align}\label{eq:varbetahat}
\var(\hat\beta\giv \cx;\Delta) = (\cx^\tran\cw\cx)^{-1}\cx^\tran\cw^2\cx(\cx^\tran\cw\cx)^{-1}\sigma^2.
\end{align}
Formula~\eqref{eq:betahat}  for $\hat\beta$ matches the familiar generalized least
squares formula for the case where $\var(\cy\giv \cx)=\cw\sigma^2$.  Here $\cw$
arises from weights that are not of inverse variance type and hence the formula for $\var(\hat\beta\giv\cx;\Delta)$ involves a $\cw^2$ factor and less cancellation than we might have expected. The boxcar kernel is special because then $K(x_i/h)\in\{0,1\}$ equals its own square. In that case $\var(\hat\beta\giv\cx;\Delta)=(\cx^\tran\cw\cx)^{-1}\sigma^2$. The estimator is $\hat{\tau}_{\thresh} = 2\hat{\beta}_3$. Therefore, we study $\var(\hat\beta_3\giv\cx;\Delta)$ under a tie-breaker design as $(\var(\hat\beta\giv\cx;\Delta))_{3,3}$ using the expression in~\eqref{eq:varbetahat}.

At the stage where the experiment is being designed and $\Delta$ is being chosen, the investigator does not have much information about $\cx\in\real^{N\times 4}$ but we will later see, quite a bit is known about the quantity $N \times \var(\hat\beta_3\giv\cx;\Delta)/\sigma^2$. For $x_i$ from a real dataset, we see in Section \ref{sec:Data_application} (e.g. Figure \ref{fig:MC_boxcar_plots_AL}) that this quantity does not vary much for different simulations of the random treatment assignments $(Z_i)_{i=1}^N$. To get theoretical insight, we turn our attention to the uniformly spaced setting with $x_i=(2i-N-1)/N$ to develop tractable theoretical results. We give an asymptotic justification
for this assumption using results from
\cite{fan1996local} in Section~\ref{sec:Data_application}. This rank transformation is also used in \cite{owen:vari:2020}.

For $x_i = \xiVal$, the matrices $\cx^\tran\cw\cx/N$ and $\cx^\tran\cw^2\cx/N$ contain elements that can be approximated by integrals of the form
\begin{align}
\ci_{}^{rst}=\ci^{rst}(\Delta,h,K)\equiv\frac12\int_{-1}^1x^r\e(Z^s\giv x;\Delta)K\Bigl(\frac{x}h\Bigr)^t\rd x
\end{align}
for integer exponents $r$, $s$ and $t$.
Our expressions will simplify somewhat because $Z^2=1$ making every $\cixzk^{r,2,t}=\cixzk^{r,0,t}$ and also because both $x$ and $\e(Z\giv x;\Delta)$ are antisymmetric functions of $x$ making them orthogonal to $K(x/h)$ which we have assumed to be symmetric.
The error in those moment approximations is
$O_p(N^{-1/2})$ if the $Z_i$ are independent
random variables. The error can be much less with
other sampling schemes. For instance, we could use stratified sampling, forming pairs of subjects $(i,i+1)$ in the experimental region and randomly setting $Z_i=\pm1$ and $Z_{i+1}=-Z_i$.
We will use $\approx$ to describe approximations that are $O_p(N^{-1/2})$ or better. 

Applying first $Z^2=1$ and then using symmetry and
anti-symmetry
\begin{align*}
\frac1N\cx^\tran\cw\cx &\approx
\kbordermatrix{ & 1 & x & z & xz \\
1 &\cixzk^{001} & \cixzk^{101} & \cixzk^{011} &\cixzk^{111}\\[1ex]
x &\cixzk^{101} & \cixzk^{201} & \cixzk^{111}&\cixzk^{211}\\[1ex]
z &\cixzk^{011} & \cixzk^{111} & \cixzk^{021} & \cixzk^{121}\\[1ex]
xz & \cixzk^{111} &\cixzk^{211} & \cixzk^{121} & \cixzk^{221}
 }
&=
\begin{bmatrix}
\cixzk^{001} & 0 & 0 &\cixzk^{111}\\[1ex]
0 & \cixzk^{201} & \cixzk^{111}&0\\[1ex]
0& \cixzk^{111} & \cixzk^{001} & 0\\[1ex]
\cixzk^{111} & 0 &0 & \cixzk^{201}
\end{bmatrix}.
\end{align*}
Because $K^2(\cdot)$ is also a symmetric function we also get
$$
\frac1N\cx^\tran\cw^2\cx \approx
\begin{bmatrix}
\cixzk^{002} & 0 & 0 &\cixzk^{112}\\[1ex]
0 & \cixzk^{202} & \cixzk^{112}&0\\[1ex]
0& \cixzk^{112} & \cixzk^{002} & 0\\[1ex]
\cixzk^{112} & 0 &0 & \cixzk^{202}
\end{bmatrix}.
$$

From all of the symmetries involved in the 32 components of these two matrices, we need to consider at most six distinct integrals.
We rewrite those matrices, beginning with
 \begin{align}\label{eq:xtwxbyn} \frac{1}{N}  \cx^T \cw  \cx \approx \begin{bmatrix} \kappa_0 & 0 & 0 & \phi(\Delta) \\ 0 & \kappa_2 & \phi(\Delta) & 0 \\ 0 & \phi (\Delta) & \kappa_0 & 0 \\ \phi(\Delta) & 0 & 0 & \kappa_2 \end{bmatrix}
\end{align}
where
\begin{align}\label{eq:defnu}
\begin{split}\kappa_0&
=\frac12\int_{-1}^{1}K\Bigl(\frac{x}h\Bigr)\rd x,\quad
\kappa_2
=\frac12\int_{-1}^{1}x^2K\Bigl(\frac{x}h\Bigr)\rd x,\quad\text{and}\\
\phi(\Delta)&
=\frac12\int_{-1}^{-\Delta}(-x)K\Bigl(\frac{x}h\Bigr)\rd x +\frac12\int_{\Delta}^{1}xK\Bigl(\frac{x}h\Bigr)\rd x=\int_{\Delta}^1xK\Bigl(\frac{x}h\Bigr)\rd x.
\end{split}
\end{align}
Note that $\kappa_0$ and $\kappa_2$ may depend on $h$ but they do not depend on $\Delta$.
A similar argument shows that
\begin{align}\label{eq:xtwwxbyn}   \frac{1}{N}  \cx^T \cw^2  \cx \approx \begin{bmatrix} \lambda_0 & 0 & 0 & \psi(\Delta) \\ 0 & \lambda_2 & \psi(\Delta) & 0 \\ 0 & \psi (\Delta) & \lambda_0 & 0 \\ \psi(\Delta) & 0 & 0 & \lambda_2 \end{bmatrix} \end{align}
for
\begin{align}\label{eq:defpi}
\begin{split}\lambda_0
&=\frac12\int_{-1}^{1}K^2\Bigl(\frac{x}h\Bigr)\rd x,\quad
\lambda_2=\frac12\int_{-1}^{1}x^2K^2\Bigl(\frac{x}h\Bigr)\rd x,\quad\text{and}\\
\psi(\Delta)&=\int_{\Delta}^1xK^2\Bigl(\frac{x}h\Bigr)\rd x.\end{split}
\end{align}
Now we are ready to describe the asymptotic
variance of $\hat\beta_3$.
\begin{theorem}\label{thm:varbetahat3}
Let $x_i= \xiVal$, select $Z_i\in\{-1,1\}$ by the tie-breaker
equation~\eqref{eq:threelevel} with $t=0$. Let $Y_i$ be uncorrelated random
variables with common variance $\sigma^2$, conditionally on
$\cx=( (1,x_1,Z_1,x_1Z_1),\cdots,(1,x_N,Z_N,x_NZ_N))$. Next, for a symmetric kernel
$K(\cdot)\ge0$ that is Lipschitz continuous on its support
and a bandwidth
$h>0$, let $\hat\beta$ be estimated by the kernel weighted
regression~\eqref{eq:kernelcriterion}. Then
\begin{align}\label{eq:varbetahat3}
N\var(\hat\beta_3\giv\cx;\Delta) =
\frac{\sigma^2 \big(\kappa_2^2 \lambda_0-2 \kappa_2 \phi(\Delta) \psi(\Delta) +\lambda_2 \phi^2(\Delta) \big)}{ \big(\kappa_0 \kappa_2 - \phi^2(\Delta)  \big)^{2}}
+O_p\Bigl(\frac1{\sqrt{N}}\Bigr),
\end{align}
where $\kappa_0$, $\kappa_2$ and $\phi(\Delta)$ are defined in~\eqref{eq:defnu} and $\lambda_0$, $\lambda_2$ and $\psi(\Delta)$ are defined in~\eqref{eq:defpi}.
\end{theorem}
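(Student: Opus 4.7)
The plan is to reduce the theorem to an explicit $4\times 4$ linear-algebra computation using the matrix approximations \eqref{eq:xtwxbyn} and \eqref{eq:xtwwxbyn}, and then propagate the $O_p(N^{-1/2})$ approximation error through matrix inversion. Writing $M=\cx^\tran\cw\cx$ and $P=\cx^\tran\cw^2\cx$, formula \eqref{eq:varbetahat} gives $\var(\hat\beta\giv\cx;\Delta)=\sigma^2 M^{-1}PM^{-1}$, so
\[
N\var(\hat\beta_3\giv\cx;\Delta)=\sigma^2\bigl[(M/N)^{-1}(P/N)(M/N)^{-1}\bigr]_{3,3}.
\]
The first task is to justify the $\approx$ claims preceding the theorem by writing $M/N=A+E_A$ and $P/N=B+E_B$, where $A$ and $B$ are the stated matrices. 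For entries containing no $Z_i$ or only $Z_i^2=1$, Lipschitz continuity of $K$ (and of $K^2$) makes each average a midpoint Riemann sum with deterministic error $O(1/N)$. For entries with an odd power of $Z_i$, conditioning on $(x_1,\dots,x_N)$ gives an integral whose integrand is odd in $x$ and hence vanishes by the symmetry of $K$; the residual has conditional variance $O(1/N)$ because the $Z_i$ are conditionally independent with $|Z_i|=1$. Chebyshev then delivers entrywise $O_p(N^{-1/2})$ control, so $\lVert E_A\rVert,\lVert E_B\rVert=O_p(N^{-1/2})$.

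Second, I would exploit the sparsity of $A$ and $B$. Permuting coordinates from $(1,x,z,xz)$ to $(1,xz,x,z)$ puts both matrices into block diagonal form with two $2\times 2$ blocks. In particular, the coefficient of $Z$ (originally coordinate $3$) sits as the last coordinate in the permuted order and lives in the second block
\[
A_2=\begin{bmatrix}\kappa_2 & \phi(\Delta)\\ \phi(\Delta) & \kappa_0\end{bmatrix},\qquad B_2=\begin{bmatrix}\lambda_2 & \psi(\Delta)\\ \psi(\Delta) & \lambda_0\end{bmatrix}.
\]
A direct computation of $A_2^{-1}B_2A_2^{-1}$ (using $\det A_2=\kappa_0\kappa_2-\phi^2(\Delta)$) yields
\[
\bigl(A_2^{-1}B_2A_2^{-1}\bigr)_{2,2}=\frac{\kappa_2^2\lambda_0-2\kappa_2\phi(\Delta)\psi(\Delta)+\lambda_2\phi^2(\Delta)}{(\kappa_0\kappa_2-\phi^2(\Delta))^2},
\]
which is the claimed leading-order expression.

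Finally, I would lift this exact identity to an asymptotic one for $M$ and $P$ by a standard perturbation argument: $A$ is nonsingular because $\kappa_0\kappa_2>\phi^2(\Delta)$, which follows from Cauchy--Schwarz applied to $\sqrt{K(\cdot/h)}$ and $x\sqrt{K(\cdot/h)}$ on $[0,1]$, with strictness whenever $K$ does not vanish identically on $[0,\Delta]$. Hence the map $(M/N,P/N)\mapsto(M/N)^{-1}(P/N)(M/N)^{-1}$ is Lipschitz in a neighborhood of $(A,B)$, so its $(3,3)$ entry equals the closed-form expression above plus $O_p(N^{-1/2})$. Multiplying by $\sigma^2$ yields \eqref{eq:varbetahat3}. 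The main obstacle I anticipate is bookkeeping rather than insight: keeping the implicit constants in the $O_p$ bounds under control when pushing the empirical-to-integral approximation through the inversion. This is standard but requires $A$ to be well-conditioned uniformly in $N$, which is encoded in the strict inequality $\kappa_0\kappa_2-\phi^2(\Delta)>0$.
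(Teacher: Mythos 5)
Your proposal is correct and follows essentially the same route as the paper: substitute the block-structured integral approximations \eqref{eq:xtwxbyn} and \eqref{eq:xtwwxbyn} into the sandwich formula \eqref{eq:varbetahat}, permute coordinates so that $\hat\beta_3$ sits in a $2\times2$ block, and read off the $(2,2)$ entry of $A_2^{-1}B_2A_2^{-1}$, which matches \eqref{eq:varbetahat3}; you additionally spell out the Riemann-sum/Chebyshev justification of the $\approx$ step and the Cauchy--Schwarz nondegeneracy $\kappa_0\kappa_2>\phi^2(\Delta)$ that the paper leaves implicit. One small slip: the entries with a single factor of $Z_i$ and an odd power of $x_i$ (the $\phi(\Delta)$, $\psi(\Delta)$ entries) do not vanish by oddness, but your mean-plus-$O_p(N^{-1/2})$-fluctuation treatment covers them and your limiting matrices are the correct ones, so the argument stands.
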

\begin{proof}
Reordering the components of $\beta$ we find after substituting equations~\eqref{eq:xtwxbyn} and~\eqref{eq:xtwwxbyn} into~\eqref{eq:varbetahat} that $\sqrt{N}(\hat\beta_1,\hat\beta_4,\hat\beta_2,\hat\beta_3)$ has variance
\begin{align*}
\begin{pmatrix}
\kappa_0 &\phi & 0 & 0\\
\phi & \kappa_2 & 0 & 0\\
0 & 0 & \kappa_0 & \phi\\
0 & 0 & \phi & \kappa_2
\end{pmatrix}^{\!\!-1}
\!\!\!  
\begin{pmatrix}
\lambda_0 &\psi & 0 & 0\\
\psi & \lambda_2 & 0 & 0\\
0 & 0 & \lambda_0 & \psi\\
0 & 0 & \psi & \lambda_2
\end{pmatrix}
\!\!
\begin{pmatrix}
\kappa_0 &\phi & 0 & 0\\
\phi & \kappa_2 & 0 & 0\\
0 & 0 & \kappa_0 & \phi\\
0 & 0 & \phi & \kappa_2
\end{pmatrix}^{\!\!-1}\!\!\sigma^2+O_p\Bigl(\frac1{\sqrt{N}}\Bigr).
\end{align*}
Now~\eqref{eq:varbetahat3} follows directly by matrix inversion and multiplication.
\end{proof}

Our Lipschitz condition on the kernel $K$ is present for technical reasons. Using a formulation with fixed and discrete $x_i=\xiVal$, this condition allows us to obtain the same error rate of $O_p(N^{-1/2})$ as would be obtained using the formulation with random $x_i \stackrel{\text{IID}}{\sim} \dunif[-1,1]$. Without the Lipschitz condition, an adversarially chosen kernel $K(\cdot)$ might have point discontinuities at every rational multiple of the bandwidth $h$. We remark that our Lipschitz condition can be loosened to a $1/2$-Hölder continuity condition, with details available from the first author upon request.

The variance formula in Theorem~\ref{thm:varbetahat3}
does not require the linear model~\eqref{eq:ovmodel}
to hold.
When it does not hold there will generally be some bias
where $\e(2\hat\beta_3\giv\cx;\Delta)\ne \mu_+(0)-\mu_-(0)$.
We suppose that the user will choose an $h$ to appropriately navigate the bias variance tradeoff, but that step takes place after the outcomes $Y_i$ are observed, which are not available when $\Delta$ is chosen, so we resort to comparing the variance for any choice of $h$.

We are primarily interested in comparing the asymptotic variance of $\hat{\tau}_0=2\hat\beta_3$ for various  choices of $\Delta$. We especially want to compare the efficiency of tie-breaker designs with $\Delta>0$ to the RDD with $\Delta=0$. To do this we consider the efficiency ratio
\begin{align}\label{eq:defasyeffN}
\eff^{(N)}(\Delta) \equiv  \frac{\var(\hat{\tau}_{\thresh}\giv\cx; 0)}{\var(\hat{\tau}_{\thresh} \giv \cx; \Delta)}=
\frac{\var(\hat{\beta}_3\giv\cx; 0)}{\var(\hat{\beta}_3 \giv \cx; \Delta)}.
\end{align}

Using Theorem~\ref{thm:varbetahat3}, $\eff^{(N)}(\Delta)$ converges in
probability to the asymptotic efficiency ratio
\begin{align}\label{eq:general_efficiency_ratio} \eff(\Delta) =  \frac{\big(\kappa_2^2 \lambda_0-2 \kappa_2 \phi(0) \psi(0) +\lambda_2 \phi^2(0) \big) \big(\kappa_0 \kappa_2 - \phi^2(\Delta)  \big)^{2} }{ \big(\kappa_2^2 \lambda_0-2 \kappa_2 \phi(\Delta) \psi(\Delta) +\lambda_2 \phi^2(\Delta) \big) \big(\kappa_0 \kappa_2 - \phi^2(0)  \big)^{2}}  \end{align}
using quantities that we defined at~\eqref{eq:defnu} and~\eqref{eq:defpi}.

\subsection*{Efficiency with boxcar and triangular kernels}

In this subsection we present the efficiency ratios under the conditions of Theorem~\ref{thm:varbetahat3} for the two kernels of greatest interest: the boxcar kernel and the triangular kernel. We work with $x_i = \xiVal$ throughout this subsection. 

For the boxcar kernel
$K_{\bc}(u) = 1_{|u|\le 1}$,
we can assume without loss of generality that $h\le 1$ because there are no data with $|x_i-t|=|x_i|>1$, and then any $h>1$ will give the same estimate as $h=1$. We find for this kernel that
\begin{align}\label{eq:boxcarpieces}\kappa_0=\lambda_0= h,\quad \kappa_2=\lambda_2=\frac{h^3}{3},\quad\text{and}\quad \phi(\Delta)=\psi(\Delta)=\frac{(h^2- \Delta^2)_+}{2}.
\end{align}
Using some foresight, we define the local tie-breaker constant $\delta=\Delta/h$. This is the fraction of the local regression region in which the treatment was assigned at random.

\begin{proposition}\label{prop:boxcarefficiency}
Under the conditions of Theorem~\ref{thm:varbetahat3} and using the boxcar kernel $K_{\bc}$, the asymptotic efficiency ratio of the tie-breaker design is
\begin{align}\label{eq:boxcarefficiency}
\eff_{\bc} = 1+6\delta^2-3\delta^4
\end{align}
for $\delta = \Delta/h\le1$. If $\delta>1$, then $\eff_{\bc}=4$.
\end{proposition}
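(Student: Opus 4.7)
The plan is to substitute the boxcar-kernel identities from \eqref{eq:boxcarpieces} directly into the general efficiency-ratio formula \eqref{eq:general_efficiency_ratio} and exploit a crucial simplification: because $K_{\bc}^2 = K_{\bc}$, one has $\kappa_0 = \lambda_0$, $\kappa_2 = \lambda_2$, and $\phi(\Delta) = \psi(\Delta)$ for every $\Delta \ge 0$. This collapse is what makes the boxcar case clean, reducing an expression involving six distinct integrals to one involving only three.

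With these identifications, the common quadratic factor $\kappa_2^2 \lambda_0 - 2\kappa_2 \phi(\Delta)\psi(\Delta) + \lambda_2 \phi^2(\Delta)$ appearing in both the numerator and denominator of \eqref{eq:general_efficiency_ratio} factors as $\kappa_2\bigl(\kappa_0\kappa_2 - \phi^2(\Delta)\bigr)$. This cancels one power of $\kappa_0\kappa_2 - \phi^2(\Delta)$ in the numerator and one power of $\kappa_0\kappa_2 - \phi^2(0)$ in the denominator, reducing the efficiency ratio to
\begin{equation*}
\eff_{\bc} \;=\; \frac{\kappa_0\kappa_2 - \phi^2(\Delta)}{\kappa_0\kappa_2 - \phi^2(0)}.
\end{equation*}
The remainder is elementary: from \eqref{eq:boxcarpieces}, $\kappa_0\kappa_2 = h^4/3$ and $\phi^2(0) = h^4/4$, giving denominator $h^4/12$. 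For $\delta \le 1$, $\phi(\Delta) = (h^2 - \Delta^2)/2$, so expanding $(h^2 - \Delta^2)^2 = h^4 - 2h^2\Delta^2 + \Delta^4$ yields numerator $(h^4 + 6h^2\Delta^2 - 3\Delta^4)/12$. Dividing through gives $1 + 6\delta^2 - 3\delta^4$.

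For the case $\delta > 1$ (i.e.\ $\Delta > h$), the integrand defining $\phi(\Delta) = \int_\Delta^1 x K_{\bc}(x/h)\rd x$ vanishes identically, because $x \ge \Delta > h$ forces $K_{\bc}(x/h) = 0$. Hence $\phi(\Delta) = 0$, and the ratio reduces to $(h^4/3)/(h^4/12) = 4$. There is no genuine obstacle; the only step that rewards attention is the factorization in paragraph two, which is best carried out symbolically rather than by grinding through the $4\times 4$ matrix inverse implicit in Theorem~\ref{thm:varbetahat3}. Continuity of the two formulas at $\delta = 1$ (both give $4$) is a convenient sanity check.
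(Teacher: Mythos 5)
Your proposal is correct and follows the same route as the paper: substitute the boxcar identities $\kappa_0=\lambda_0$, $\kappa_2=\lambda_2$, $\phi=\psi$ (from $K_{\bc}^2=K_{\bc}$) into \eqref{eq:general_efficiency_ratio}, cancel to $\bigl(\kappa_0\kappa_2-\phi^2(\Delta)\bigr)/\bigl(\kappa_0\kappa_2-\phi^2(0)\bigr)$, and plug in the explicit values, with $\phi(\Delta)=0$ giving the ratio $4$ when $\delta>1$. The only cosmetic difference is that the paper writes the answer as the single formula $4-3(1-\delta^2)_+^2$ while you treat the two cases separately; the content is identical.
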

\begin{proof}
Because many quantities from~\eqref{eq:boxcarpieces} are identical, substituting them into \eqref{eq:general_efficiency_ratio} produces numerous simplifications that yield
\begin{align*}
\eff_{\bc} &= \frac{\kappa_0\kappa_2-\phi^2(\Delta)}{\kappa_0\kappa_2-\phi^2(0)}
=\frac{\frac{h^4}3-\frac{(h^2-\Delta^2)_+^2}4}{\frac{h^4}3-\frac{h^4}4}
=4-3(1-\delta^2)_+^2.
\end{align*}
For $0\le\delta<1$ formula~\eqref{eq:boxcarefficiency}  follows from expanding the quadratic while for $\delta>1$ the positive part term vanishes.
\end{proof}

Choosing $h=1$ makes the local regression a global one. We then get the same efficiency ratio as in equation (6) from \cite{owen:vari:2020}. By taking derivatives it is easy to show that the efficiency ratio in~\eqref{eq:boxcarefficiency} is strictly increasing as the local amount of experimentation $\delta$ varies over the interval $0<\delta< 1$.
Figure~\ref{fig:Unif_Efficiecy_plots} plots $\eff_{\bc}$ versus $\delta$.

The triangular spike kernel
$K_{\ts}(x) = (1 -\vert x \vert )_+$
(triangular kernel for short) is more complicated than the boxcar kernel because for it,
$K^2$ is not proportional to $K$.
Once again, we assume that $h \in [0,1]$. For this kernel we compute
$$\kappa_0= \frac{h}{2},\quad \kappa_2 = \frac{h^3}{12},
\quad \lambda_0 = \frac{h}{3},\quad\text{and}\quad \lambda_2=\frac{h^3}{30}$$
and then using $\delta = \Delta/h$, we get
$$\phi(\Delta)= \frac{h^2}{6} (1- 3 \delta^2+ 2 \delta^3)\quad\text{and}\quad\psi(\Delta)= \frac{h^2}{12} (1- 6 \delta^2+ 8 \delta^3 - 3 \delta^4).$$

\begin{proposition}\label{prop:triangleefficiency}
Under the conditions of Theorem~\ref{thm:varbetahat3} and using the triangular kernel $K_{\ts}$, the asymptotic efficiency of the tie-breaker design is
\begin{align}\label{eq:triangleefficiency}
\eff_{\ts} =
\frac{ 2\bigl(3-2(1-3\delta^2+2\delta^3)^2\bigr)^2  }
{ 5-5(1-3\delta^2+2\delta^3)(1-6\delta^2+8\delta^3-3\delta^4)+2(1-3\delta^2+2\delta^3)^2 }
\end{align}
for $\delta = \Delta/h\le1$.
\end{proposition}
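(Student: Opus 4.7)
The plan is to substitute the precomputed quantities $\kappa_0, \kappa_2, \lambda_0, \lambda_2, \phi(\Delta), \psi(\Delta)$ listed just before the proposition directly into the general asymptotic efficiency ratio \eqref{eq:general_efficiency_ratio}. To make the algebra transparent, I would introduce the abbreviations $A = 1 - 3\delta^2 + 2\delta^3$ and $B = 1 - 6\delta^2 + 8\delta^3 - 3\delta^4$, so that $\phi(\Delta) = h^2 A/6$ and $\psi(\Delta) = h^2 B/12$, with $A = B = 1$ when $\Delta = 0$ giving the denominator factors in \eqref{eq:general_efficiency_ratio}.

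The first step is to evaluate $\kappa_0 \kappa_2 - \phi^2(\Delta)$. Since $\kappa_0 \kappa_2 = h^4/24$ and $\phi^2(\Delta) = h^4 A^2/36$, clearing to common denominator $72$ produces $h^4(3 - 2A^2)/72$, which reduces to $h^4/72$ when $\Delta = 0$. The second step is to evaluate the combination $N(\Delta) \equiv \kappa_2^2\lambda_0 - 2\kappa_2\phi(\Delta)\psi(\Delta) + \lambda_2\phi^2(\Delta)$. Each of the three summands is a monomial in $h$ of degree $7$: $\kappa_2^2\lambda_0 = h^7/432$, $2\kappa_2\phi(\Delta)\psi(\Delta) = h^7 AB/432$, and $\lambda_2\phi^2(\Delta) = h^7 A^2/1080$. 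Placing them over the common denominator $2160$ collapses $N(\Delta)$ to $h^7(5 - 5AB + 2A^2)/2160$, and hence $N(0) = h^7/1080$.

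Plugging all four pieces into the ratio, the factor $(h^4/72)^2$ cancels between numerator and denominator (leaving just $(3-2A^2)^2$ upstairs), the overall $h^7$ cancels, and the constant ratio $2160/1080 = 2$ furnishes the prefactor in \eqref{eq:triangleefficiency}. The main hazard in executing this proof is purely clerical: several intermediate terms coincidentally share the denominator $432$, so a careless choice of common denominator can hide the correct cancellations or introduce a spurious factor. Beyond that arithmetic care, no additional ideas are required — the proposition is a direct specialization of Theorem~\ref{thm:varbetahat3} and \eqref{eq:general_efficiency_ratio} to the triangular kernel.
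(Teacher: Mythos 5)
Your computation is correct: all six kernel moments, the combinations $\kappa_0\kappa_2-\phi^2(\Delta)=h^4(3-2A^2)/72$ and $\kappa_2^2\lambda_0-2\kappa_2\phi\psi+\lambda_2\phi^2=h^7(5-5AB+2A^2)/2160$, and the final cancellations check out and yield \eqref{eq:triangleefficiency}. This is essentially the paper's own argument (the paper merely organizes the simplification by first using $\lambda_0=\tfrac23\kappa_0$ and $\lambda_2=\tfrac25\kappa_2$ to factor out $\kappa_2$ before substituting the $h$-dependent values), so no further comment is needed.
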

\begin{proof}
This follows from plugging in the values of $\kappa_0$, $\kappa_2$, $\lambda_0$, $\lambda_2$, $\phi(\Delta)$, and $\psi(\Delta)$ for the triangular kernel into \eqref{eq:general_efficiency_ratio}. See Appendix \ref{sec:TSefficiencyCalc} for the explicit calculations.
\end{proof}

The second panel in Figure~\ref{fig:Unif_Efficiecy_plots}
shows $\eff_{\ts}$ versus the local experiment size $\delta$. The efficiency curve has a similar monotone increasing shape as we saw for the boxcar kernel.  The maximum efficiency ratio, at $\delta=1$, is $18/5=3.6$ instead of $4$. The efficiency ratio is a rational function of $\delta$ with a numerator of degree $12$ and a denominator of degree $7$.  It is strictly increasing on the interval $0<\delta<1$, though the proof is lengthy enough to move to the Appendix.
\begin{proposition}\label{prop:itsmonotone}
The derivative of $\eff_{\ts}$ with respect to $\delta$ is positive for $0<\delta<1$.
\end{proposition}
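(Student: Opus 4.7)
The plan is to reduce the monotonicity claim to a polynomial positivity check on $(0,1)$. Abbreviate $u(\delta) = 1-3\delta^2+2\delta^3$ and $v(\delta) = 1-6\delta^2+8\delta^3-3\delta^4$, and note the useful factorizations $u = (1-\delta)^2(1+2\delta)$ and $v = (1-\delta)^3(1+3\delta)$, which show $u,v\in[0,1]$ on $[0,1]$ and yield the identity $u-v = 3\delta^2(1-\delta)^2$. Using this identity to eliminate $v$ from the denominator of~\eqref{eq:triangleefficiency}, the efficiency becomes $\eff_{\ts}=2F^2/G$ with
$$F := 3-2u^2, \qquad G := 5-3u^2+15u\delta^2(1-\delta)^2.$$
Since $u\in[0,1]$ on $[0,1]$, we get $F\ge 1$ and $G\ge 2$, both strictly positive on $[0,1]$. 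The quotient rule therefore gives that the sign of $\eff_{\ts}'$ matches that of $H := 2F'G - FG'$.

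Next, compute $H$ explicitly. Using $u' = -6\delta(1-\delta)$ gives $F' = -4uu' = 24u\delta(1-\delta)$; differentiating $u\delta^2(1-\delta)^2$ by the product rule shows $G'$ also carries an overall factor of $\delta(1-\delta)$. Collecting, one finds $H = c\,\delta(1-\delta)^3\,R(\delta)$ for a positive constant $c$ and a polynomial $R$ of degree $8$, where the extra $(1-\delta)^2$ factor beyond the obvious $\delta(1-\delta)$ arises from the $(1-\delta)^2$ common to $u$, $F'$, and $v-2u\delta = (1-\delta)^2(1-7\delta^2)$. Since $c\,\delta(1-\delta)^3 > 0$ on $(0,1)$, the claim reduces to showing $R(\delta) > 0$ on $(0,1)$. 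That $H$ is allowed to vanish at the endpoints is consistent with $u'(0)=u(1)=0$.

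The main obstacle is verifying $R>0$ on $(0,1)$: in the monomial basis the coefficients of $R$ have mixed signs (a hand calculation gives $R(\delta)=5+20\delta+167\delta^2+136\delta^3-414\delta^4+8\delta^5+730\delta^6-696\delta^7+200\delta^8$ up to rescaling), so positivity is not visible by inspection. The cleanest certificate is to re-expand $R$ in the Bernstein basis on $[0,1]$,
$$R(\delta) = \sum_{k=0}^{8} c_k \binom{8}{k}\delta^k(1-\delta)^{8-k},$$
and check that every $c_k>0$; since each Bernstein block is strictly positive on $(0,1)$, this certifies $R>0$. Equivalent alternatives include exhibiting a sum-of-squares decomposition of $R$, or running Sturm's theorem to rule out roots in $(0,1)$ and then checking the sign at a single interior point such as $\delta=1/2$. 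Any such verification is a finite but unenlightening computation, consistent with the paper's remark that the proof is lengthy enough to relegate to the appendix.
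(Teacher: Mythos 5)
Your reduction is correct and lands essentially where the paper's proof does: after clearing the positive factors $F=3-2u^2\ge 1$ and $G=5-3u^2+15u\delta^2(1-\delta)^2\ge 2$, the sign of $\eff_{\ts}'$ is that of $H=2F'G-FG'$, and your factorization $H=6\,\delta(1-\delta)^3R(\delta)$ with $R(\delta)=5+20\delta+167\delta^2+136\delta^3-414\delta^4+8\delta^5+730\delta^6-696\delta^7+200\delta^8$ checks out; in fact $R(\delta)=n_4(1-\delta)$, where $n_4$ is exactly the degree-$8$ polynomial the paper isolates after substituting $x=1-\delta$ and successively dividing the quotient-rule numerator by explicit factors, so the two reductions are the same argument in different clothing (your route via $u=(1-\delta)^2(1+2\delta)$, $v=(1-\delta)^3(1+3\delta)$, $u-v=3\delta^2(1-\delta)^2$ is a bit slicker at extracting the $(1-\delta)^3$ factor). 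Where you genuinely diverge is the positivity certificate for the degree-$8$ polynomial. The paper shows $n_4''>0$ on $[0,1]$ by evaluating it at $2^{12}+1$ grid points with a Horner-scheme rounding-error bound plus a Lipschitz constant, and then uses $n_4'(1)=-20<0$ and $n_4(1)=5>0$; you propose an exact algebraic certificate instead. Your Bernstein suggestion does work, and arguably more cleanly: the degree-$8$ Bernstein coefficients of $R$ on $[0,1]$ are $c_0,\dots,c_8=5,\ \tfrac{15}{2},\ \tfrac{447}{28},\ \tfrac{919}{28},\ \tfrac{3821}{70},\ 72,\ \tfrac{385}{4},\ \tfrac{497}{4},\ 156$, all strictly positive, so no degree elevation, floating-point analysis, or convexity argument is needed. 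The one thing your write-up lacks is the check itself: since a positive polynomial need not have positive Bernstein coefficients at its own degree, stopping at ``check that every $c_k>0$'' leaves the proof formally incomplete; record the nine numbers above (or a Sturm-sequence computation) and your argument is a complete, and somewhat tidier, alternative to the appendix proof.
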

\begin{proof}
See Appendix \ref{sec:MonotoneEffTS}.
\end{proof}

\begin{figure}[t!]
    \centering
        \begin{subfigure}[b]{6 cm}
            \centering
            \includegraphics[width=6 cm]{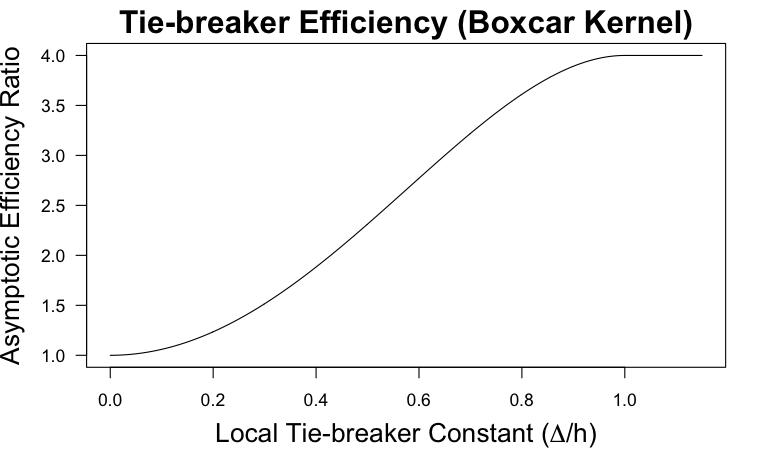}
            \label{fig:boxcarERUnifPlot}
        \end{subfigure}
        \begin{subfigure}[b]{6 cm }
            \centering
            \includegraphics[width=6 cm]{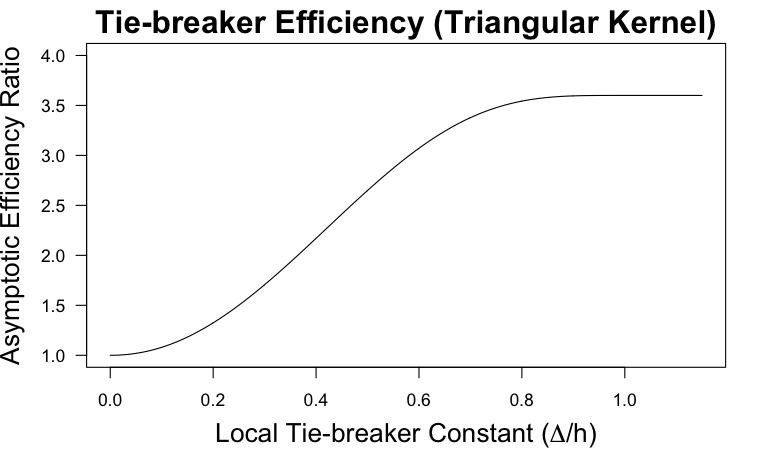}
            \label{fig:triangularERUnifPlot}
        \end{subfigure}
        \caption[]
        {\label{fig:Unif_Efficiecy_plots}
        The left panel shows the efficiency ratio of the tie-breaker design for uniform $x_i$ and the boxcar kernel as a function of $\delta=\Delta/h$.  The right panel shows this efficiency ratio for the triangular kernel.
        }
\end{figure}

\section{Classroom size data} \label{sec:Data_application}

We explored the efficiency ratio for the tie-breaker design for $x_i$ with a uniform distribution.  While that can be arranged by using ranks, in other situations we might prefer to use the original value of a running variable and those might not be uniformly distributed.
We show how to do this using a dataset from \cite{AngristLavy} on classroom sizes.

\cite{AngristLavy} studied the causal effect of classroom size on test performance of elementary school students in Israel. In Israel,
the Maimonides rule mandates that elementary school classes cannot exceed 40 students. If a school has 41 students enrolled in a particular grade that grade must be split into two classes. Note that grades that have 40 or fewer enrolled students are allowed to split into multiple classes and that grades with slightly more than 40 students occasionally violate the Maimonides rule and do not split into multiple classes. Despite this, we can consider this a setting for RDD where the treatment variable is whether or not the school is legally mandated to split a particular grade into smaller classes.

The dataset, published on the Harvard Dataverse \citep{ALHarvard2}, has verbal and math scores for 3rd, 4th and 5th graders across Israel. We chose to focus exclusively on 4th grade verbal scores as our response variable and 4th grade enrollments as our assignment variable because \cite{AngristLavy} suggest that a slightly significant effect of the treatment on 4th grade verbal scores exists. 
Even though the data were not 
generated by a tie-breaker we can still compute the
relative efficiency that a tie-breaker design would have had.

To simplify the analysis, we removed all schools that either had more than 80 students or more than two 4th-grade classes from the dataset. We further removed all schools that had NA entries for either class size or verbal scores, leaving $N=711$ schools in our filtered dataset. See Figure \ref{fig:hist_of_enrollments} for a visualization of the distribution of the 4th grade enrollments and Figure \ref{fig:RDD_fits_AL} for visualizations of the local linear regression based-RDD on this dataset using boxcar and triangular kernels.  We use the bandwidths $h_{\ik}$ given by the \cite{ImbensKalyanaraman_optimalBW} procedure, which were computed using that paper's MATLAB code.
The apparent benefit from smaller classrooms is positive but small and it turns
out, not statistically significant in this analysis.
The 95\% confidence interval (assuming homoscedastic errors) for the effect size at the boundary of the local linear regression-based RDD was $(-1.5,9.2)$ when a boxcar kernel with bandwidth $h_{\ik,\bc} = 7.09$ was used. The 95\% confidence interval for the effect size at boundary of this RDD was $(-2.4, 9.4)$ when a triangular kernel with bandwidth $h_{\ik,\ts} = 9.02$ was used.

\begin{figure}[t]
\centering
\includegraphics[width=0.75 \hsize]{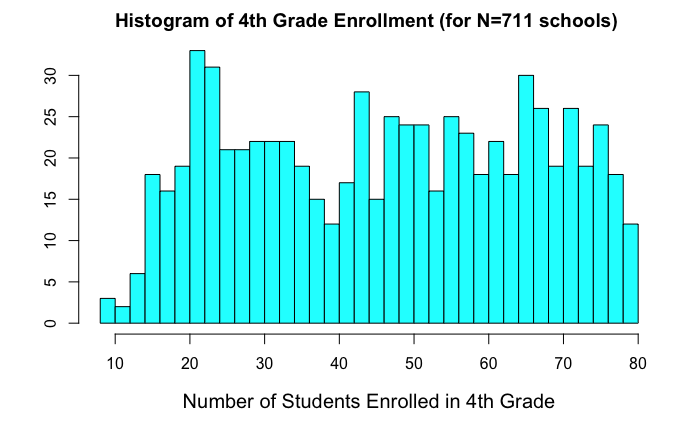}
\caption{\label{fig:hist_of_enrollments} A histogram of 4th grade enrollments for our filtered dataset (with schools exceeding 80 4th grade students or three 4th grade classes removed).}
\end{figure}

\begin{figure}[t]
    \centering
        \begin{subfigure}[b]{6 cm}
            \centering
            \includegraphics[width=6 cm]{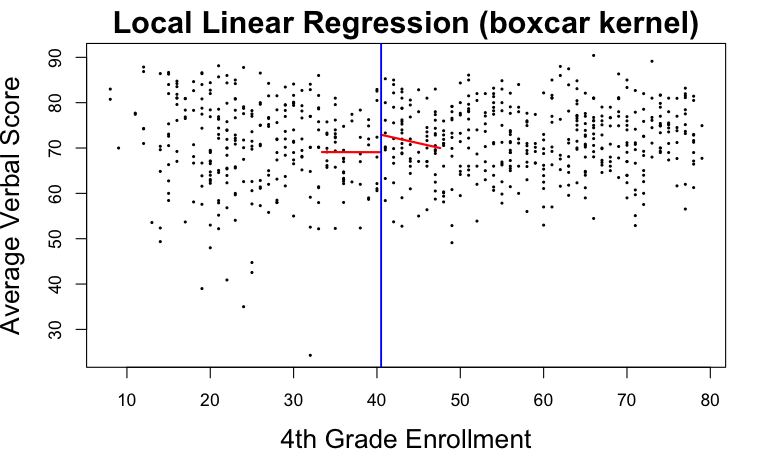}
            \label{fig:RDD_boxcarAL}
        \end{subfigure}
        \hfill
        \begin{subfigure}[b]{6 cm}
            \centering
            \includegraphics[width=6 cm]{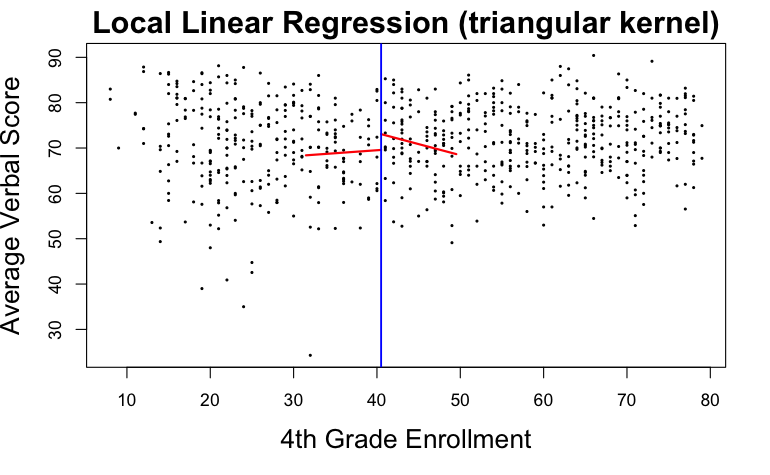}
            \label{fig:RDD_triangularAL}
        \end{subfigure}

        \caption[]
        {RDD fit to the 4th grader verbal scores from the \cite{ALHarvard2} dataset when using a boxcar kernel (left) and a triangular kernel (right). For these two fits, the bandwidths $h_{\ik,\bc}$ and $h_{\ik,\ts}$ were chosen as in \cite{ImbensKalyanaraman_optimalBW}. }
        \label{fig:RDD_fits_AL}
\end{figure}

Next we illustrate how an investigator can estimate the efficiency ratio of tie-breaker designs as a function of $\Delta$ on sample values of the assignment variable.
First we translate the data, replacing $x_i$ by $x_i-40.5$ to move the threshold from $t=40.5$ to $t=0$.
Next, for each $\Delta$ of interest we use $1000$ Monte Carlo samples to estimate $\var(\hat\beta_3\giv\cx;\Delta)$ and also $\var(\hat\beta_3\giv\cx ;0)$, both up to a constant $\sigma^2$.  That gives us $1000$ efficiency ratios $\eff^{(N)}(\Delta)=\var(\hat\beta_3\giv\cx;0)/\var(\hat\beta_3\giv\cx;\Delta)$ for each $\Delta$.
In each of our $1000$ samples, we simulate random assignments $Z_i$ for a tie-breaker design at the given experimental radius $\Delta$. The random assignments are stratified: in each consecutive pair of classroom sizes in the experimental region, one was randomly chosen to have $Z=1$ and the other got $Z=-1$.
The $x_i$ and the random $Z_i$ let us compute the matrices $\cx$ and $\cw$ defined in the beginning of Section \ref{sec:AsymApprox_via_integrals}, from which we compute a non-asymptotic $\var(\hat{\beta}_3\giv\cx;\Delta)$ using \eqref{eq:varbetahat}.
We do not simulate any $Y_i$ values because efficiency only depends on $\cx$, and
we are retaining the bandwidths from the \cite{ImbensKalyanaraman_optimalBW} procedure on the original data.  A more detailed simulation randomizing
the bandwidth choice is out of scope. Our simulations demonstrate that the TBD is more efficient at each fixed $h$, so we expect that it will also be more efficient at a randomly chosen $h$. There could be exceptions if the bandwidth is adversarially correlated with the estimation errors but we do not think
that is likely.

Figure \ref{fig:MC_boxcar_plots_AL} shows
boxplots of $1000$ simulated $\eff^{(N)}(\Delta)$ values
for various choices of $\Delta \in \mathbb{N}$ to plot the full efficiency curve. It is clear from Figure~\ref{fig:MC_boxcar_plots_AL} that with stratified allocations the efficiency is very reproducible.
Figure \ref{fig:Efficiency_plt_many_bandwidths} shows results for different bandwidths, ranging from $h_{\ik}/2$ to $3h_{\ik}/2$.  Because the efficiencies are so reproducible given the bandwidth, we just plot curves of the mean and standard deviations of estimated $\eff$ values. For both the boxcar and triangular kernels, we see that the tie-breaker design is reproducibly more efficient than the RDD and the effect increases as $\delta =\Delta/h$ increases for all $h$ we studied. The efficiency curves for this dataset under various bandwidth choices look similar to the theoretical efficiency curves derived in Section \ref{sec:AsymApprox_via_integrals} for the case of a uniform assignment variable.

\begin{figure}[t!]
    \centering
        \begin{subfigure}[b]{6 cm}
            \centering
            \includegraphics[width=6 cm]{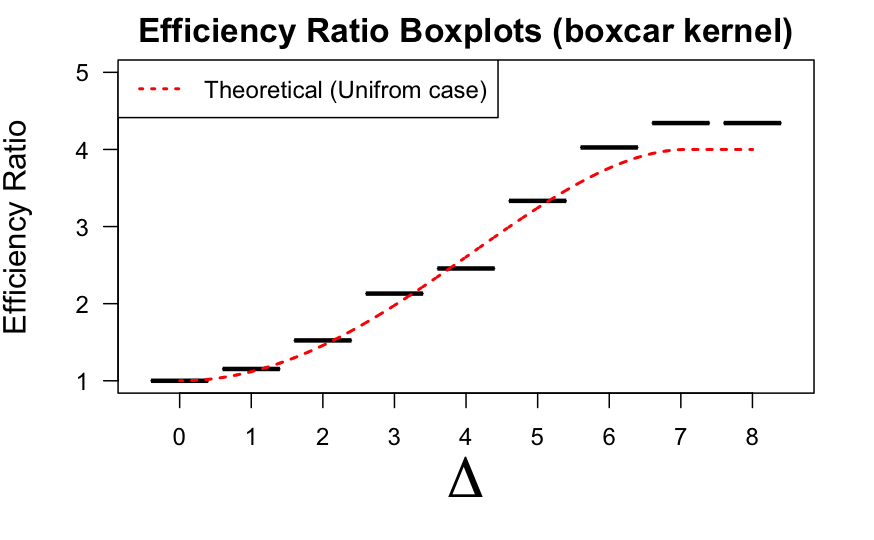}
            \label{fig:boxplot_boxcarAL}
        \end{subfigure}
        \hfill
        \begin{subfigure}[b]{6 cm}
            \centering
            \includegraphics[width=6 cm]{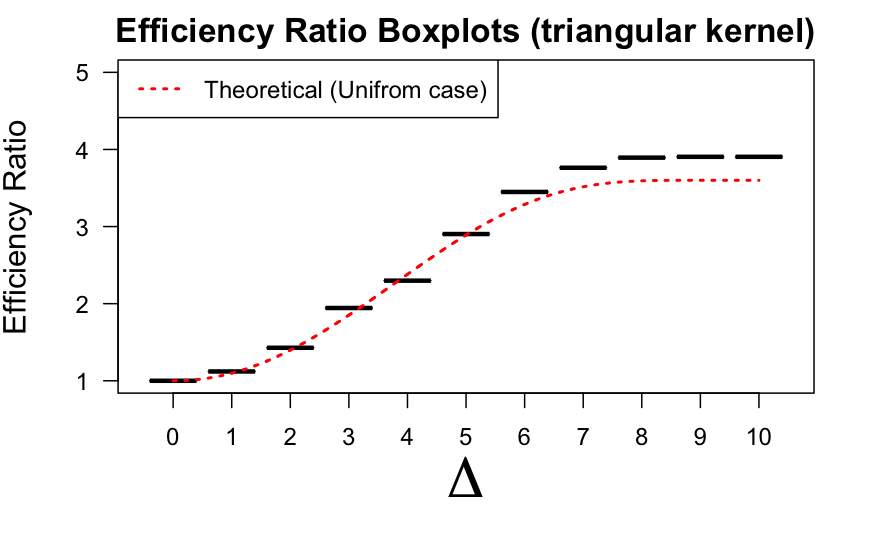}
            \label{fig:boxplot_triangularAL}
        \end{subfigure}

        \caption[]
        {Boxplots of the Monte-Carlo efficiency ratio estimates for various values of $\Delta \in \mathbb{N}$ when using a boxcar kernel (left) and a triangular kernel (right). For both kernels we used bandwidths from Figure \ref{fig:RDD_fits_AL}: $h_{\ik,\bc}=7.09$ and $h_{\ik,\ts}=9.02$. The dashed lines give the theoretical efficiency curves under the assumption of a uniform assignment variable, given by Propositions \ref{prop:boxcarefficiency} (left) and \ref{prop:triangleefficiency} (right). 
    \label{fig:MC_boxcar_plots_AL}}
\end{figure}

\begin{figure}[t]
    \centering
        \begin{subfigure}[b]{6 cm}
            \centering
            \includegraphics[width=6 cm]{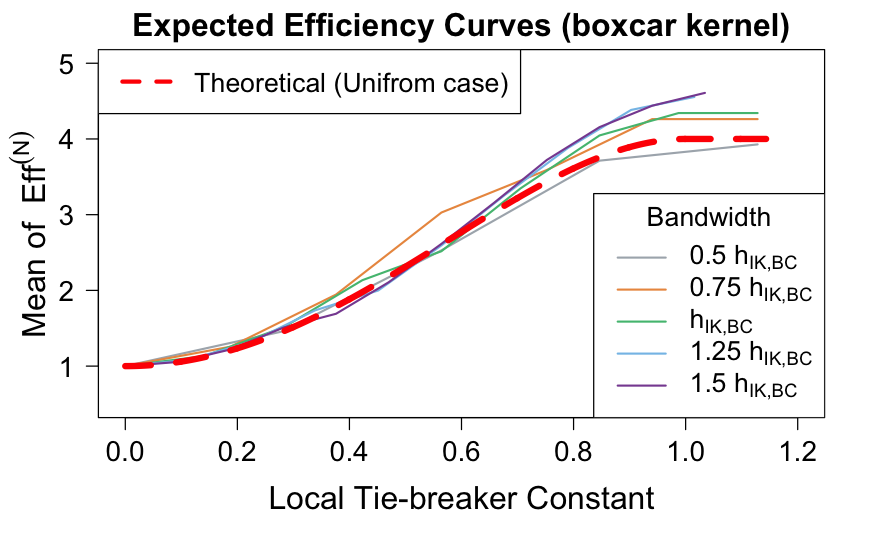}
            \label{fig:ExpectedEff_boxcar_AL2}
        \end{subfigure}
        \hfill
        \begin{subfigure}[b]{6 cm}
            \centering
            \includegraphics[width=6 cm]{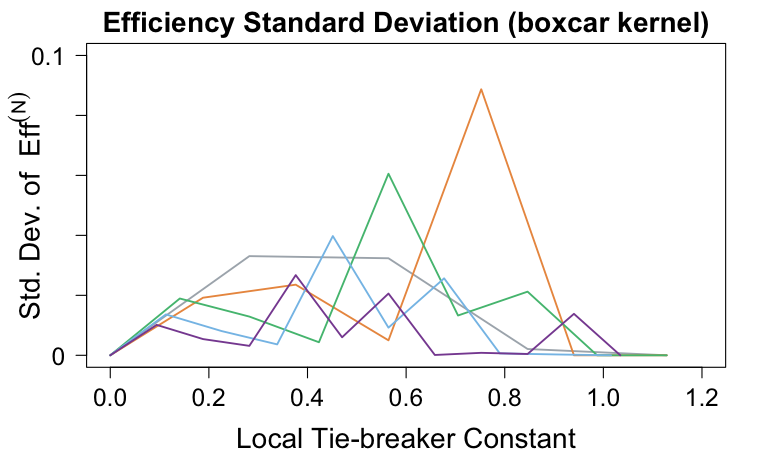}
            \label{fig:SD_Eff_AL_boxcar2}
        \end{subfigure}
        \begin{subfigure}[b]{6 cm}
            \centering
            \includegraphics[width=6 cm]{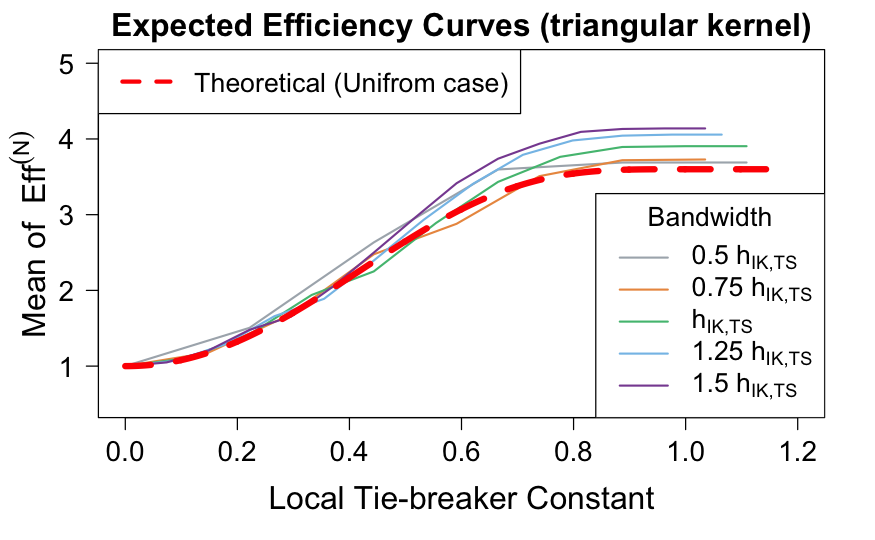}
            \label{fig:ExpectedEff_triangular_AL2}
        \end{subfigure}
        \hfill
        \begin{subfigure}[b]{6 cm}
            \centering
            \includegraphics[width=6 cm]{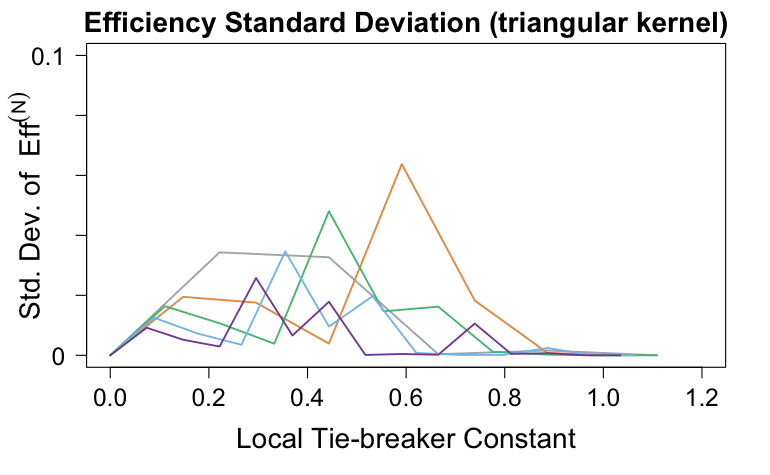}
            \label{fig:SD_Eff_AL_triangular2}
        \end{subfigure}
        \caption[]
        {Monte-Carlo estimates of the expected value (left) and standard deviation (right) of $\eff^{(N)}(\Delta)$ versus $\Delta/h$ for the \cite{ALHarvard2} dataset of 4th grader verbal scores. For these plots a boxcar kernel (top) and a triangular kernel (bottom) were used. The bandwidths plotted are scalar multiples of $h_{\ik,\bc}$ and $h_{\ik,\ts}$ from the procedure of \cite{ImbensKalyanaraman_optimalBW}. The legend for the plots on the right is the same as for the plots on the left. 
        These curves are not smooth because, to avoid redundancy, only points that corresponded to integer values of $\Delta$ were used. The thick dashed lines give the theoretical efficiency curves $\eff_{\bc}$ (top left) and $\eff_{\ts}$ (bottom left) derived in Section~\ref{sec:AsymApprox_via_integrals}.}
        \label{fig:Efficiency_plt_many_bandwidths}
\end{figure}

For a further discussion of the Maimonides rule, see
\cite{angr:etal:2019}.  They consider different data
sets and also investigate the possibility that the
class sizes are sometimes manipulated to be above
the threshold triggering a classroom split.

\subsection*{Comparison with theoretical results for uniform assignment variable}
Our theoretical analysis in Section \ref{sec:AsymApprox_via_integrals} is for a uniformly spaced assignment variable. We can offer one explanation for why the empirical efficiencies on non-uniformly distributed data look so similar to the theoretical ones for uniformly distributed data (see the left panels in Figure \ref{fig:Efficiency_plt_many_bandwidths}). The explanation uses some results about non-parametric regression from \citet[Table 2.1]{fan1996local}. Nonparametric regression estimates $\hat\mu(t)$ typically have an asymptotic variance where the leading term is proportional to $1/f(t)$ where $f$ is the probability density of the $x_i$. This arises because the local sample size is asymptotically proportional to $f(t)$. Hence, when considering nonuniform distributions, the $1/f(t)$ factors in the leading order variance terms will cancel out when computing the efficiency ratios. Some of the nonparametric regression estimators, such as the Nadaraya-Watson estimator, have a lead term in their bias that depends on the derivative $f'(t)$, and while $f'(t)=0$ for uniformly distributed data, it is not zero in general. Kernel weighted least squares methods (with symmetric $K(\cdot)$) do not have a dependency on $f'(t)$ in their bias. There is a curvature bias from $\mu''(t)$ but that is not related to the sampling distribution of the $x_i$. The lead terms in bias and variance for local linear regressions do not distinguish between distributions with the same value of $f(t)$ but different $f'(t)$. Thus the effects of non-uniformity of $X$ are asymptotically negligible.

\section{Discussion} \label{sec:Conclusion}

If an investigator is able to implement a 3-level tie-breaker design with any experimental radius $\Delta \in (0,\Delta_{\text{max}})$, our results show that the TBD has considerable statistical advantages over the RDD.

The most obvious advantage is that the TBD allows estimation of multiple causal parameters of interest including the average treatment effect over subjects with $x \in (t- \Delta,t+\Delta)$ as well as the expected treatment effect at any particular $x \in (t- \Delta,t+\Delta)$. The former is estimable at a faster rate and with fewer assumptions, whereas the latter may still be of interest for choosing a future policy threshold. Meanwhile, the RDD only allows estimation of $\tau_{\thresh}$, the expected treatment effect at $x=t$. 

Even if the only goal is estimation of $\tau_{\thresh}$, our results indicate a statistical advantage to running a TBD rather than an RDD and an advantage to picking a larger experimental radius $\Delta \in (0,\Delta_{\text{max}})$. As seen in Section \ref{sec:bandwidth_shrink_asymptotics}, to achieve the same asymptotic MSE in mean squared optimal estimation of $\tau_{\thresh}$, a TBD would require roughly 64 percent fewer samples than would be needed for an RDD. Moreover, the asymptotic advantage for a TBD is largely driven by its lower variance (Figure \ref{fig:bias_var_tradeoff}). 
Hence, if the convenient, but controversial, method of undersmoothing to construct asymptotically valid confidence intervals for $\tau_{\thresh}$ is used instead of more nearly optimal approaches, the TBD would exhibit even greater advantages over the RDD. We point readers to the introduction of \cite{Calonico_dont_undersmooth} for an overview of the history of undersmoothing, and \cite{calo:catt:farr:2019} for a modern approach to constructing confidence intervals that has better coverage properties than undersmoothing has.

In terms of the statistical advantages of picking a larger $\Delta$, \cite{owen:vari:2020} found an efficiency advantage for the tie-breaker in a global regression, wherein the estimation variance decreased monotonically in $\Delta$. We provide a comparable finding for the now more standard local linear regression approach: for any fixed bandwidth $h$, we see a theoretical efficiency that increases with the amount $\Delta$ of experimentation. We have not investigated the effect of $\Delta$ on the subsequent choice of $h$ when $\hat{h}_{\text{opt,TBD}} > \Delta$, although one candidate choice is an $h> \Delta$ that removes the leading order bias term, which we derived in Appendix~\ref{sec:MagicBandwidth}.

There is room for an improved estimator of $\gamma$
in the TBD context which uses data from both treatments on both
sides of the threshold $t$.  We leave this for further work.
A critical ingredient is the estimation of $\mu_{\pm}^{(2)}(t)$.
Compared to the method in \cite{ImbensKalyanaraman_optimalBW},
one could use a bandwidth tuned for an internal point $t$ instead
of one tuned for an endpoint.  Also the curvature estimates
in \cite{ImbensKalyanaraman_optimalBW}
use local quadratic regressions while \citet[p 63]{fan1996local}
suggest using local cubic regressions for curvature estimation at an interior point.

\begin{acks}[Acknowledgments]
This work was supported by the U.S.\ National Science
Foundation under grants IIS-1837931 
and DMS-2152780 and by Stanford University's SGF and SIGF fellowships.
We thank Hal Varian and Harrison Li for commenting on the paper as well as Steve Marron and Wolfgang H\"ardle for
some discussions about nonparametric regression.
We also thank anonymous reviewers for comments that
led us to improve the paper.
\end{acks}

\bibliographystyle{imsart-nameyear.bst}
\bibliography{TieBreakerWriteUp}

\begin{thebibliography}{33}

\bibitem[\protect\citeauthoryear{Abdulkadiro{\u{g}}lu
  et~al.}{2022}]{abdu:etal:2022}
\begin{barticle}[author]
\bauthor{\bsnm{Abdulkadiro{\u{g}}lu},~\bfnm{At{\i}̇la}\binits{A.}},
  \bauthor{\bsnm{Angrist},~\bfnm{Joshua~D}\binits{J.~D.}},
  \bauthor{\bsnm{Narita},~\bfnm{Yusuke}\binits{Y.}} \AND
  \bauthor{\bsnm{Pathak},~\bfnm{Parag}\binits{P.}}
(\byear{2022}).
\btitle{Breaking ties: Regression discontinuity design meets market design}.
\bjournal{Econometrica}
\bvolume{90}
\bpages{117--151}.
\end{barticle}
\endbibitem

\bibitem[\protect\citeauthoryear{Angrist, Autor and
  Pallais}{2020}]{Angrist2020Nebraska}
\begin{btechreport}[author]
\bauthor{\bsnm{Angrist},~\bfnm{Joshua}\binits{J.}},
  \bauthor{\bsnm{Autor},~\bfnm{David}\binits{D.}} \AND
  \bauthor{\bsnm{Pallais},~\bfnm{Amanda}\binits{A.}}
(\byear{2020}).
\btitle{MARGINAL EFFECTS OF MERIT AID FOR LOW-INCOME STUDENTS}
\btype{Technical Report},
\bpublisher{National Bureau of Economic Research}.
\end{btechreport}
\endbibitem

\bibitem[\protect\citeauthoryear{Angrist and Lavy}{1999}]{AngristLavy}
\begin{barticle}[author]
\bauthor{\bsnm{Angrist},~\bfnm{Joshua~D}\binits{J.~D.}} \AND
  \bauthor{\bsnm{Lavy},~\bfnm{Victor}\binits{V.}}
(\byear{1999}).
\btitle{Using {Maimonides}' rule to estimate the effect of class size on
  scholastic achievement}.
\bjournal{The Quarterly Journal of Economics}
\bvolume{114}
\bpages{533--575}.
\end{barticle}
\endbibitem

\bibitem[\protect\citeauthoryear{Angrist and Lavy}{2009}]{ALHarvard2}
\begin{bmisc}[author]
\bauthor{\bsnm{Angrist},~\bfnm{Joshua~D.}\binits{J.~D.}} \AND
  \bauthor{\bsnm{Lavy},~\bfnm{Victor}\binits{V.}}
(\byear{2009}).
\btitle{{Replication data for: Using Maimonides' Rule to Estimate the Effect of
  Class Size on Student Achievement}}.
\bdoi{10.7910/DVN/XRSUJU}
\end{bmisc}
\endbibitem

\bibitem[\protect\citeauthoryear{Angrist et~al.}{2019}]{angr:etal:2019}
\begin{barticle}[author]
\bauthor{\bsnm{Angrist},~\bfnm{Joshua~D}\binits{J.~D.}},
  \bauthor{\bsnm{Lavy},~\bfnm{Victor}\binits{V.}},
  \bauthor{\bsnm{Leder-Luis},~\bfnm{Jetson}\binits{J.}} \AND
  \bauthor{\bsnm{Shany},~\bfnm{Adi}\binits{A.}}
(\byear{2019}).
\btitle{Maimonides' Rule Redux}.
\bjournal{American Economic Review: Insights}
\bvolume{1}
\bpages{309--24}.
\end{barticle}
\endbibitem

\bibitem[\protect\citeauthoryear{Borchers}{2019}]{pracma}
\begin{bmanual}[author]
\bauthor{\bsnm{Borchers},~\bfnm{H.~W.}\binits{H.~W.}}
(\byear{2019}).
\btitle{pracma: Practical Numerical Math Functions}
\bnote{R package version 2.2.9}.
\end{bmanual}
\endbibitem

\bibitem[\protect\citeauthoryear{Boruch}{1975}]{boru:1975}
\begin{barticle}[author]
\bauthor{\bsnm{Boruch},~\bfnm{Robert~F}\binits{R.~F.}}
(\byear{1975}).
\btitle{Coupling randomized experiments and approximations to experiments in
  social program evaluation}.
\bjournal{Sociological Methods \& Research}
\bvolume{4}
\bpages{31--53}.
\end{barticle}
\endbibitem

\bibitem[\protect\citeauthoryear{Calonico, Cattaneo and
  Titiunik}{2014}]{calonico2014robust}
\begin{barticle}[author]
\bauthor{\bsnm{Calonico},~\bfnm{Sebastian}\binits{S.}},
  \bauthor{\bsnm{Cattaneo},~\bfnm{Matias~D}\binits{M.~D.}} \AND
  \bauthor{\bsnm{Titiunik},~\bfnm{Rocio}\binits{R.}}
(\byear{2014}).
\btitle{Robust nonparametric confidence intervals for regression-discontinuity
  designs}.
\bjournal{Econometrica}
\bvolume{82}
\bpages{2295--2326}.
\end{barticle}
\endbibitem

\bibitem[\protect\citeauthoryear{Calonico, Cattaneo and
  Farrell}{2018}]{Calonico_dont_undersmooth}
\begin{barticle}[author]
\bauthor{\bsnm{Calonico},~\bfnm{Sebastian}\binits{S.}},
  \bauthor{\bsnm{Cattaneo},~\bfnm{Matias~D.}\binits{M.~D.}} \AND
  \bauthor{\bsnm{Farrell},~\bfnm{Max~H.}\binits{M.~H.}}
(\byear{2018}).
\btitle{On the Effect of Bias Estimation on Coverage Accuracy in Nonparametric
  Inference}.
\bjournal{Journal of the American Statistical Association}
\bvolume{113}
\bpages{767-779}.
\bdoi{10.1080/01621459.2017.1285776}
\end{barticle}
\endbibitem

\bibitem[\protect\citeauthoryear{Calonico, Cattaneo and
  Farrell}{2019}]{calo:catt:farr:2019}
\begin{barticle}[author]
\bauthor{\bsnm{Calonico},~\bfnm{Sebastian}\binits{S.}},
  \bauthor{\bsnm{Cattaneo},~\bfnm{Matias~D}\binits{M.~D.}} \AND
  \bauthor{\bsnm{Farrell},~\bfnm{Max~H}\binits{M.~H.}}
(\byear{2019}).
\btitle{Optimal bandwidth choice for robust bias-corrected inference in
  regression discontinuity designs}.
\bjournal{The Econometrics Journal}
\bvolume{23}
\bpages{192-210}.
\end{barticle}
\endbibitem

\bibitem[\protect\citeauthoryear{Campbell}{1969}]{camp:1969}
\begin{barticle}[author]
\bauthor{\bsnm{Campbell},~\bfnm{Donald~T}\binits{D.~T.}}
(\byear{1969}).
\btitle{Reforms as experiments.}
\bjournal{American psychologist}
\bvolume{24}
\bpages{409}.
\end{barticle}
\endbibitem

\bibitem[\protect\citeauthoryear{Cappelleri, Darlington and
  Trochim}{1994}]{CDP94_TBD_power3Delta}
\begin{barticle}[author]
\bauthor{\bsnm{Cappelleri},~\bfnm{Joseph~C.}\binits{J.~C.}},
  \bauthor{\bsnm{Darlington},~\bfnm{Richard~B.}\binits{R.~B.}} \AND
  \bauthor{\bsnm{Trochim},~\bfnm{William M.~K.}\binits{W.~M.~K.}}
(\byear{1994}).
\btitle{Power Analysis of Cutoff-Based Randomized Clinical Trials}.
\bjournal{Evaluation Review}
\bvolume{18}
\bpages{141-152}.
\bdoi{10.1177/0193841X9401800202}
\end{barticle}
\endbibitem

\bibitem[\protect\citeauthoryear{Cappelleri and Trochim}{2003}]{CappelleriXX}
\begin{bincollection}[author]
\bauthor{\bsnm{Cappelleri},~\bfnm{Joseph~C.}\binits{J.~C.}} \AND
  \bauthor{\bsnm{Trochim},~\bfnm{William M.~K.}\binits{W.~M.~K.}}
(\byear{2003}).
\btitle{Cutoff Designs}.
In \bbooktitle{Encyclopedia of Biopharmaceutical Statistics}
(\beditor{\bfnm{Marcel}\binits{M.}~\bsnm{Dekker}}, ed.)
\bpublisher{CRC Press}.
\bdoi{10.1081/E-EBS 12000734}
\end{bincollection}
\endbibitem

\bibitem[\protect\citeauthoryear{Cattaneo, Frandsen and
  Titiunik}{2015}]{cattaneo2015randomization}
\begin{barticle}[author]
\bauthor{\bsnm{Cattaneo},~\bfnm{Matias~D}\binits{M.~D.}},
  \bauthor{\bsnm{Frandsen},~\bfnm{Brigham~R}\binits{B.~R.}} \AND
  \bauthor{\bsnm{Titiunik},~\bfnm{Rocio}\binits{R.}}
(\byear{2015}).
\btitle{Randomization inference in the regression discontinuity design: An
  application to party advantages in the {US} Senate}.
\bjournal{Journal of Causal Inference}
\bvolume{3}
\bpages{1--24}.
\end{barticle}
\endbibitem

\bibitem[\protect\citeauthoryear{Cattaneo and
  Titiunik}{2022}]{cattaneo2022regression}
\begin{barticle}[author]
\bauthor{\bsnm{Cattaneo},~\bfnm{Matias~D}\binits{M.~D.}} \AND
  \bauthor{\bsnm{Titiunik},~\bfnm{Rocio}\binits{R.}}
(\byear{2022}).
\btitle{Regression discontinuity designs}.
\bjournal{Annual Review of Economics}
\bvolume{14}
\bpages{821--851}.
\end{barticle}
\endbibitem

\bibitem[\protect\citeauthoryear{Cheng, Fan and
  Marron}{1997}]{cheng1997automatic}
\begin{barticle}[author]
\bauthor{\bsnm{Cheng},~\bfnm{Ming-Yen}\binits{M.-Y.}},
  \bauthor{\bsnm{Fan},~\bfnm{Jianqing}\binits{J.}} \AND
  \bauthor{\bsnm{Marron},~\bfnm{James~S}\binits{J.~S.}}
(\byear{1997}).
\btitle{On automatic boundary corrections}.
\bjournal{The Annals of Statistics}
\bvolume{25}
\bpages{1691--1708}.
\end{barticle}
\endbibitem

\bibitem[\protect\citeauthoryear{Crump et~al.}{2009}]{crump2009dealing}
\begin{barticle}[author]
\bauthor{\bsnm{Crump},~\bfnm{Richard~K}\binits{R.~K.}},
  \bauthor{\bsnm{Hotz},~\bfnm{V~Joseph}\binits{V.~J.}},
  \bauthor{\bsnm{Imbens},~\bfnm{Guido~W}\binits{G.~W.}} \AND
  \bauthor{\bsnm{Mitnik},~\bfnm{Oscar~A}\binits{O.~A.}}
(\byear{2009}).
\btitle{Dealing with limited overlap in estimation of average treatment
  effects}.
\bjournal{Biometrika}
\bvolume{96}
\bpages{187--199}.
\end{barticle}
\endbibitem

\bibitem[\protect\citeauthoryear{Eckles et~al.}{2020}]{eckl:igna:wage:2020}
\begin{btechreport}[author]
\bauthor{\bsnm{Eckles},~\bfnm{Dean}\binits{D.}},
  \bauthor{\bsnm{Ignatiadis},~\bfnm{Nikolaos}\binits{N.}},
  \bauthor{\bsnm{Wager},~\bfnm{Stefan}\binits{S.}} \AND
  \bauthor{\bsnm{Wu},~\bfnm{Han}\binits{H.}}
(\byear{2020}).
\btitle{Noise-induced randomization in regression discontinuity designs}
\btype{Technical Report},
\bpublisher{arXiv:2004.09458}.
\end{btechreport}
\endbibitem

\bibitem[\protect\citeauthoryear{Fan and Gijbels}{1996}]{fan1996local}
\begin{bbook}[author]
\bauthor{\bsnm{Fan},~\bfnm{Jianqing}\binits{J.}} \AND
  \bauthor{\bsnm{Gijbels},~\bfnm{Irene}\binits{I.}}
(\byear{1996}).
\btitle{Local polynomial modelling and its applications: monographs on
  statistics and applied probability 66}
\bvolume{66}.
\bpublisher{CRC Press}, \baddress{Boca Raton, FL}.
\end{bbook}
\endbibitem

\bibitem[\protect\citeauthoryear{Gelman and
  Imbens}{2019}]{Gelman_dont_do_high_order_polynomial}
\begin{barticle}[author]
\bauthor{\bsnm{Gelman},~\bfnm{Andrew}\binits{A.}} \AND
  \bauthor{\bsnm{Imbens},~\bfnm{Guido}\binits{G.}}
(\byear{2019}).
\btitle{Why High-Order Polynomials Should Not Be Used in Regression
  Discontinuity Designs}.
\bjournal{Journal of Business \& Economic Statistics}
\bvolume{37}
\bpages{447--456}.
\bdoi{10.1080/07350015.2017.1366909}
\end{barticle}
\endbibitem

\bibitem[\protect\citeauthoryear{Goldberger}{1972}]{gold:1972}
\begin{btechreport}[author]
\bauthor{\bsnm{Goldberger},~\bfnm{A.~S.}\binits{A.~S.}}
(\byear{1972}).
\btitle{Selection Bias in Evaluating Treatment Effects: Some Formal
  Illustrations}
\btype{Technical Report} No. \bnumber{Discussion paper 128--72},
\bpublisher{Institute for Research on Poverty},
\baddress{University of Wisconsin--Madison}.
\end{btechreport}
\endbibitem

\bibitem[\protect\citeauthoryear{Hahn, Todd and der
  Klaauw}{2001}]{hahn2001identification}
\begin{barticle}[author]
\bauthor{\bsnm{Hahn},~\bfnm{Jinyong}\binits{J.}},
  \bauthor{\bsnm{Todd},~\bfnm{Petra}\binits{P.}} \AND \bauthor{\bparticle{der}
  \bsnm{Klaauw},~\bfnm{Wilbert~Van}\binits{W.~V.}}
(\byear{2001}).
\btitle{Identification and estimation of treatment effects with a
  regression-discontinuity design}.
\bjournal{Econometrica}
\bvolume{69}
\bpages{201--209}.
\end{barticle}
\endbibitem

\bibitem[\protect\citeauthoryear{Higham}{2002}]{Higham2002}
\begin{bbook}[author]
\bauthor{\bsnm{Higham},~\bfnm{Nicholas~J.}\binits{N.~J.}}
(\byear{2002}).
\btitle{Accuracy and Stability of Numerical Algorithms},
\bedition{Second} ed.
\bpublisher{Society for Industrial and Applied Mathematics},
  \baddress{Philadelphia}.
\end{bbook}
\endbibitem

\bibitem[\protect\citeauthoryear{Imbens and
  Kalyanaraman}{2012}]{ImbensKalyanaraman_optimalBW}
\begin{barticle}[author]
\bauthor{\bsnm{Imbens},~\bfnm{Guido}\binits{G.}} \AND
  \bauthor{\bsnm{Kalyanaraman},~\bfnm{Karthik}\binits{K.}}
(\byear{2012}).
\btitle{Optimal bandwidth choice for the regression discontinuity estimator}.
\bjournal{The Review of Economic Studies}
\bvolume{79}
\bpages{933--959}.
\end{barticle}
\endbibitem

\bibitem[\protect\citeauthoryear{Jacob et~al.}{2012}]{jacob2012practical}
\begin{barticle}[author]
\bauthor{\bsnm{Jacob},~\bfnm{Robin}\binits{R.}},
  \bauthor{\bsnm{Zhu},~\bfnm{Pei}\binits{P.}},
  \bauthor{\bsnm{Somers},~\bfnm{Marie-Andr{\'e}e}\binits{M.-A.}} \AND
  \bauthor{\bsnm{Bloom},~\bfnm{Howard}\binits{H.}}
(\byear{2012}).
\btitle{A Practical Guide to Regression Discontinuity.}
\bjournal{MDRC}.
\end{barticle}
\endbibitem

\bibitem[\protect\citeauthoryear{Krantz}{2022}]{kran:2022}
\begin{bphdthesis}[author]
\bauthor{\bsnm{Krantz},~\bfnm{Cassidy}\binits{C.}}
(\byear{2022}).
\btitle{Modeling the outcomes of a longitudinal tie-breaker regression
  discontinuity design to assess an in-home training program for families at
  risk of child abuse and neglect},
\btype{PhD thesis},
\bpublisher{University of Oklahoma}.
\end{bphdthesis}
\endbibitem

\bibitem[\protect\citeauthoryear{Li and Owen}{2022}]{HarrisonArt22}
\begin{btechreport}[author]
\bauthor{\bsnm{Li},~\bfnm{Harrison~H.}\binits{H.~H.}} \AND
  \bauthor{\bsnm{Owen},~\bfnm{Art~B.}\binits{A.~B.}}
(\byear{2022}).
\btitle{A general characterization of optimal tie-breaker designs}
\btype{Technical Report} No. \bnumber{arXiv:2202.12511},
\bpublisher{Stanford University}.
\end{btechreport}
\endbibitem

\bibitem[\protect\citeauthoryear{Metelkina and
  Pronzato}{2017}]{metelkina2017information}
\begin{barticle}[author]
\bauthor{\bsnm{Metelkina},~\bfnm{Asya}\binits{A.}} \AND
  \bauthor{\bsnm{Pronzato},~\bfnm{Luc}\binits{L.}}
(\byear{2017}).
\btitle{Information-regret compromise in covariate-adaptive treatment
  allocation}.
\bjournal{The Annals of Statistics}
\bvolume{45}
\bpages{2046--2073}.
\end{barticle}
\endbibitem

\bibitem[\protect\citeauthoryear{Morrison and Owen}{2022}]{TimArt22}
\begin{btechreport}[author]
\bauthor{\bsnm{Morrison},~\bfnm{Tim~P.}\binits{T.~P.}} \AND
  \bauthor{\bsnm{Owen},~\bfnm{Art~B.}\binits{A.~B.}}
(\byear{2022}).
\btitle{Optimality in Multivariate Tie-breaker Designs}
\btype{Technical Report} No. \bnumber{arXiv:2202.10030},
\bpublisher{Stanford University}.
\end{btechreport}
\endbibitem

\bibitem[\protect\citeauthoryear{Owen and Varian}{2020}]{owen:vari:2020}
\begin{barticle}[author]
\bauthor{\bsnm{Owen},~\bfnm{A.~B.}\binits{A.~B.}} \AND
  \bauthor{\bsnm{Varian},~\bfnm{H.}\binits{H.}}
(\byear{2020}).
\btitle{Optimizing the tie-breaker regression discontinuity design}.
\bjournal{Electronic Journal of Statistics}
\bvolume{14}
\bpages{4004--4027}.
\end{barticle}
\endbibitem

\bibitem[\protect\citeauthoryear{Rosenman and
  Rajkumar}{2019}]{rosenman2019optimized}
\begin{btechreport}[author]
\bauthor{\bsnm{Rosenman},~\bfnm{Evan}\binits{E.}} \AND
  \bauthor{\bsnm{Rajkumar},~\bfnm{Karthik}\binits{K.}}
(\byear{2019}).
\btitle{Optimized Partial Identification Bounds for Regression Discontinuity
  Designs with Manipulation}
\btype{Technical Report} No. \bnumber{arXiv:1910.02170},
\bpublisher{Stanford University}.
\end{btechreport}
\endbibitem

\bibitem[\protect\citeauthoryear{Thistlethwaite and
  Campbell}{1960}]{this:camp:1960}
\begin{barticle}[author]
\bauthor{\bsnm{Thistlethwaite},~\bfnm{D.~L.}\binits{D.~L.}} \AND
  \bauthor{\bsnm{Campbell},~\bfnm{D.~T.}\binits{D.~T.}}
(\byear{1960}).
\btitle{Regression-discontinuity analysis: An alternative to the ex post facto
  experiment.}
\bjournal{Journal of Educational psychology}
\bvolume{51}
\bpages{309}.
\end{barticle}
\endbibitem

\bibitem[\protect\citeauthoryear{Wand and Jones}{1994}]{wand:jone:1994}
\begin{bbook}[author]
\bauthor{\bsnm{Wand},~\bfnm{Matt~P}\binits{M.~P.}} \AND
  \bauthor{\bsnm{Jones},~\bfnm{M~Chris}\binits{M.~C.}}
(\byear{1994}).
\btitle{Kernel smoothing}.
\bpublisher{Chapman Hall/CRC}, \baddress{Boca Raton, FL}.
\end{bbook}
\endbibitem

\end{thebibliography}

\appendix

\counterwithin{theorem}{section}
\counterwithin{lemma}{section}
\counterwithin{definition}{section}
\counterwithin{proposition}{section}
\counterwithin{remark}{section}
\counterwithin{corollary}{section}
\counterwithin{figure}{section}
\counterwithin{table}{section}

\section{Proof of Lemma~\ref{lemma:MSE_TBD}}\label{sec:proof_of_TBD_MSE}

Without loss of generality suppose $t=0$ (if this is not the case we can define a new assignment variable to be a translation of the original assignment variable by $t$). 
It is convenient to define 
$$\mathcal{Z}_+ \equiv \{ i\in\{1,2,\dots,N\} \ : \ Z_i=1 \} \quad \text{ and } \quad \mathcal{Z}_- \equiv \{ i\in\{1,2,\dots,N\}: \ Z_i=-1 \}.$$ 
Also define $N_\pm \equiv \vert \mathcal{Z}_\pm \vert$. 
Letting $K_h(u) \equiv K(u/h)/h$, define 
$$F_{j,\pm} \equiv \frac{1}{N_\pm} \sum_{i \in \mathcal{Z}_\pm} X_i^j  K_h(X_i) \quad \text{ and } \quad
G_{j,\pm} \equiv \frac{1}{N_\pm} \sum_{i \in \mathcal{Z}_\pm} X_i^j K_h^2(X_i) \sigma_\pm^2(X_i),
$$ 
for nonnegative integers $j$. 

Next, let $f_\pm$ be the conditional density of $X_i$ given that $Z_i=\pm1$. 
That is 
\begin{equation}\label{eq:fpm_def}
    f_+(x) = \begin{cases} 
\frac{f(x)}{\Pr(Z_i=1)},  &  x \geq \Delta \\ \frac{f(x)/2}{ \Pr(Z_i=1)},  &  |x|<\Delta\\  0, & 
x \leq - \Delta \end{cases} 
\quad  \text{and} \quad  
f_-(x) = \begin{cases} 0,  & x \geq \Delta \\ \frac{f(x)/2}{ \Pr(Z_i=-1)},  & |x|<\Delta \\  \frac{f(x)}{\Pr(Z_i=-1)}, & x \leq - \Delta. \end{cases}  
\end{equation}
We now prove some helpful small bandwidth approximations for $F_{j,\pm}$
and $G_{j,\pm}$.

\begin{lemma}\label{lemma:F_formula} In the setting of Lemma \ref{lemma:MSE_TBD}, for nonnegative integers $j$, $$F_{j,\pm}=h^jf_\pm(0)\nu_j +o_p(h^j)$$ 
with $\nu_j=\int_{-\infty}^{\infty} u^j K(u)\rd u $.
\end{lemma}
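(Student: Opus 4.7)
The main idea is to compute the conditional mean of a single summand, show it equals $h^j f_\pm(0)\nu_j (1+o(1))$, and then control the sample-mean fluctuation via a variance bound.

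First I would observe that the triples $(X_i,Y_{i+},Y_{i-})$ are IID and $Z_i$ depends only on $X_i$ through the rule \eqref{eq:threelevel}, so conditional on $Z_i=\pm 1$ the variable $X_i$ has density $f_\pm$ given by \eqref{eq:fpm_def}. Moreover, conditional on the entire vector $(Z_1,\dots,Z_N)$, the random variables $\{X_i : i\in\mathcal{Z}_\pm\}$ are IID with density $f_\pm$. Working throughout on this conditional probability space, I would compute
\[
\e\bigl[X_i^j K_h(X_i)\,\bigl|\bigr.\, Z_i=\pm 1\bigr]
= \int x^j \frac{K(x/h)}{h} f_\pm(x)\rd x
= h^j\int u^j K(u) f_\pm(hu)\rd u,
\]
using the substitution $u=x/h$. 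Since $\Delta>0$ and $f$ is continuously differentiable at $0$ by assumption (ii), the restriction of $f_\pm$ to a neighborhood of $0$ equals $f/(2\Pr(Z=\pm1))$, so $f_\pm$ is continuous and locally bounded at $0$. Combined with the bounded support of $K$ (assumption (iv)), dominated convergence gives $\int u^j K(u) f_\pm(hu)\rd u \to f_\pm(0)\nu_j$ as $h\to 0$. Hence the conditional mean is $h^j f_\pm(0)\nu_j + o(h^j)$.

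Next I would bound the conditional variance of a single summand by its second moment:
\[
\var\bigl(X_i^j K_h(X_i)\,\bigl|\bigr.\,Z_i=\pm 1\bigr)
\leq \e\bigl[X_i^{2j}K_h^2(X_i)\,\bigl|\bigr.\,Z_i=\pm 1\bigr]
= h^{2j-1}\int u^{2j}K^2(u)f_\pm(hu)\rd u
= O(h^{2j-1}),
\]
by the same substitution and the local boundedness of $f_\pm$ together with the boundedness of $K$. Since the $X_i$ for $i\in\mathcal{Z}_\pm$ are IID conditional on $(Z_1,\dots,Z_N)$, the variance of $F_{j,\pm}$ conditional on that vector is $O(h^{2j-1}/N_\pm)$. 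Chebyshev's inequality then yields
\[
F_{j,\pm} - \e\bigl[F_{j,\pm}\,\bigl|\bigr.\,Z_1,\dots,Z_N\bigr]
= O_p\Bigl(\frac{h^j}{\sqrt{N_\pm h}}\Bigr).
\]
Because $N_\pm/N\to \Pr(Z=\pm 1)>0$ almost surely by the law of large numbers, and $Nh\to\infty$ by hypothesis, we have $N_\pm h\to\infty$ in probability, so this fluctuation is $o_p(h^j)$. Combining with the expectation calculation gives $F_{j,\pm}=h^j f_\pm(0)\nu_j + o_p(h^j)$.

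The main thing to watch is the justification of dominated convergence, which requires checking that $f_\pm$ is bounded in a neighborhood of $0$ wider than $h\cdot\mathrm{supp}(K)$ for all sufficiently small $h$; the assumption $\Delta>0$ is what makes $f_\pm$ agree with a smooth multiple of $f$ on a fixed neighborhood of $0$, so this step is really where the restriction $\Delta>0$ in Lemma~\ref{lemma:MSE_TBD} enters. Everything else is routine: a change of variables in the integral, continuity of $f$ at $t=0$, a second-moment variance bound, and one application of Chebyshev.
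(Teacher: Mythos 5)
Your proposal is correct and follows essentially the same route as the paper: decompose $F_{j,\pm}$ into a conditional mean plus an IID-average fluctuation, evaluate the mean by the substitution $u=x/h$ and continuity of $f_\pm$ at $0$, bound the variance by the second moment to get $O(h^{2j-1})$, and conclude via $Nh\to\infty$ together with $N_\pm/N$ converging to a positive constant. The only (immaterial) difference is that you use dominated convergence with mere continuity to get an $o(h^j)$ remainder in the mean, whereas the paper exploits continuous differentiability of $f_\pm$ at $0$ to get the slightly sharper $O(h^{j+1})$; both suffice for the stated $o_p(h^j)$ conclusion.
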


\begin{proof}

We will prove this for $F_{j,+}$ and the proof for $F_{j,-}$ will be identical. First note that $F_{j,+}$ is the average of $N_+$ IID random variables so
$$F_{j,+}=\e[X^j K_h(X)|Z=1]+O_p\Big( \sqrt{\text{Var}[X^j K_h(X)|Z=1] /N_+} \Big).$$
Now note that
$$ \begin{aligned} \e[X^j K_h(X)|Z=1]  &= \int_{-\infty}^{\infty} \frac{1}{h} K(x/h) x^j f_+(x) \rd x \\
 &=  h^j \int_{-\infty}^{\infty}  K(u) u^j f_+(uh) \rd u \\ 
  &=  h^j \int_{-\infty}^{\infty}  K(u) u^j f_+(0) \rd u +  h^{j+1} \int_{-\infty}^{\infty}  K(u) u^{j+1} \frac{f_+(hu)-f_+(0)}{hu} \rd u \\ 
   &= h^j f_+(0) \nu_j +h^{j+1} \int_{-\infty}^{\infty}  K(u) u^{j+1} \frac{f_+(hu)-f_+(0)}{hu} \rd u \\ 
    &= h^j f_+(0) \nu_j + O(h^{j+1}), \end{aligned}$$ 
    where the last equality holds because $K(\cdot)$ is bounded with bounded support, $f_+$ is continuously differentiable at $0$, and asymptotically $h \to 0$.

Meanwhile, the term $\text{Var}[X^j K_h(X)|Z=1]$ is upper bounded by 
$$\begin{aligned}  \e[(K_h(X))^2 X^{2j}|Z=1] & =  \frac{1}{h^2} \int_{-\infty}^{\infty} \big( K(x/h) \big)^2 x^{2j} f_+(x) \rd x \\ 
& =  h^{2j-1} \int_{-\infty}^{\infty}  K^2(u) u^{2j} f_+(hu) \rd u \\ 
& =  O(h^{2j-1}),  \end{aligned} $$ where the last equality holds because $K(\cdot)$ is bounded with bounded support, $f_+$ is continuous and strictly positive at $0$, and asymptotically $h \to 0$. Thus 
\begin{align*}\frac{\text{Var}[X^j K_h(X)|Z=1]}{N_+} &\leq \frac{ \e[(K_h(X_i))^2 X_i^{2j}|Z=1]}{N_+} \\
&= \frac{ O(h^{2j-1})}{N_+} =O\Bigl(\frac{h^{2j}}{hN}\Bigr)\cdot \frac{N}{N_+}=o_p(h^{2j}),
\end{align*}
where last equality holds because as $N \to \infty$, $hN \to \infty$, so $h^{2j}/hN =o(h^{2j})$, and because $N/N_+=O_p(1)$. 

Combining previous results $$F_{j,+} =h^j f_+(0) \nu_j+O(h^{j+1})+O_p(o_p(h^{j}))=h^j f_+(0) \nu_j +o_p(h^j).\qedhere$$
\end{proof}

\begin{lemma}\label{lemma:G_formula} In the setting of Lemma \ref{lemma:MSE_TBD}, for nonnegative integers $j$, 
$$G_{j,\pm}=h^{j-1} \sigma_\pm^2(0)f_\pm(0)\pi_j (1 +o_p(1))$$ 
with $\pi_j=\int_{-\infty}^{\infty} u^j K^2(u)\rd u $.
\end{lemma}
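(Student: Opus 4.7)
The plan is to mirror the argument used for $F_{j,\pm}$ in Lemma \ref{lemma:F_formula}, writing $G_{j,+}$ as a sample average of $N_+$ IID random variables conditional on $Z=1$, and controlling it via its mean plus a fluctuation term bounded by its standard deviation divided by $\sqrt{N_+}$. The proof for $G_{j,-}$ will be symmetric, so I will only treat $G_{j,+}$.

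For the conditional expectation, I would write
\begin{align*}
\e\bigl[X^j K_h^2(X)\sigma_+^2(X)\mid Z=1\bigr]
&=\frac{1}{h^2}\int_{-\infty}^{\infty} K^2(x/h)\, x^j \sigma_+^2(x) f_+(x)\rd x\\
&=h^{j-1}\int_{-\infty}^{\infty} K^2(u)\, u^j \sigma_+^2(hu) f_+(hu)\rd u,
\end{align*}
after the substitution $u=x/h$. Because $K$ has bounded support $[-1,1]$ and is bounded, and because $\sigma_+^2$ and $f_+$ are both continuous and strictly positive at $0$ (using assumptions (ii) and (v) together with~\eqref{eq:fpm_def}), dominated convergence with $h\to 0$ yields
\[
\int_{-\infty}^{\infty} K^2(u)\, u^j \sigma_+^2(hu) f_+(hu)\rd u \;=\; \sigma_+^2(0)f_+(0)\pi_j \,+\, o(1),
\]
so that the conditional mean equals $h^{j-1}\sigma_+^2(0)f_+(0)\pi_j(1+o(1))$.

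For the fluctuation, I would bound the conditional variance by the second moment:
\[
\e\bigl[X^{2j}K_h^4(X)\sigma_+^4(X)\mid Z=1\bigr]
=h^{2j-3}\int_{-\infty}^{\infty} K^4(u)\, u^{2j} \sigma_+^4(hu) f_+(hu)\rd u
=O(h^{2j-3}),
\]
again using boundedness and compact support of $K$ and continuity of $\sigma_+^2, f_+$ at $0$. Hence
\[
\frac{\var\bigl[X^j K_h^2(X)\sigma_+^2(X)\mid Z=1\bigr]}{N_+}
=O_p\!\left(\frac{h^{2j-3}}{N}\right)
=O_p\!\left(\frac{h^{2(j-1)}}{Nh}\right)\cdot \frac{N}{N_+}
=o_p\bigl(h^{2(j-1)}\bigr),
\]
where the last step uses $Nh\to\infty$ and $N/N_+=O_p(1)$ (since $\Pr(Z=1)>0$ for fixed $\Delta>0$ under the design~\eqref{eq:threelevel}). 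Taking square roots gives a fluctuation of order $o_p(h^{j-1})$.

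Combining the two pieces yields
\[
G_{j,+}=h^{j-1}\sigma_+^2(0)f_+(0)\pi_j(1+o(1))+o_p(h^{j-1})=h^{j-1}\sigma_+^2(0)f_+(0)\pi_j(1+o_p(1)),
\]
as claimed. The only subtlety is verifying that the second-moment bound is compatible with the allowed remainder, which comes down precisely to the hypothesis $Nh\to\infty$; everything else is a direct transcription of the $F_{j,\pm}$ argument with $K$ replaced by $K^2$ and an extra factor of $1/h$ and $\sigma_\pm^2$ tracked through the substitution.
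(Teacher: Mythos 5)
Your proposal is correct and follows essentially the same route as the paper: decompose $G_{j,\pm}$ as the conditional mean of an IID average plus an $O_p(\sqrt{\var/N_\pm})$ fluctuation, compute the mean by the substitution $u=x/h$ and continuity of $\sigma_\pm^2 f_\pm$ at $0$, and kill the fluctuation via the second-moment bound $O(h^{2j-3})$ together with $Nh\to\infty$ and $N/N_\pm=O_p(1)$. The only cosmetic difference is that you invoke dominated convergence for the limit of the integral where the paper uses an explicit add-and-subtract of $\sigma_+^2(0)f_+(0)$, which is an equivalent justification.
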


\begin{proof}

We will prove this for $G_{j,+}$ and the proof for $G_{j,-}$ will be identical. First note that $G_{j,+}$ is the average of $N_+$ IID random variables so 
$$G_{j,+}=\e[X^j K_h^2(X) \sigma_+^2(X)|Z=1]+O_p\Big( \sqrt{\text{Var}[X^j K_h^2(X) \sigma_+^2(X)|Z=1] /N_+} \Big).$$
Now note that
$\e[X^j K_h^2(X)\sigma_+^2(X)|Z=1]$ equals
 \begin{align}\label{eq:diff_sigma_needed} 
 &  \frac{1}{h^2} \int_{-\infty}^{\infty} K^2(x/h) x^j \sigma_+^2(x) f_+(x) \rd x \nonumber \\  
 &=  h^{j-1} \int_{-\infty}^{\infty} K^2(u) u^j \sigma_+^2(hu) f_+(hu) \rd u \nonumber \\ 
  &= h^{j-1} \pi_j \sigma_+^2(0) f_+(0) + h^{j-1} \int_{-\infty}^{\infty} K^2(u) u^{j} \big[ \sigma_+^2(hu) f_+(hu) - \sigma_+^2(0)f_+(0) \big] \rd u \nonumber \\ 
   &= h^{j-1} \pi_j \sigma_+^2(0) f_+(0) + o(h^{j-1}), 
   \end{align}
   where the last equality holds because $K(\cdot)$ is bounded with bounded support, $ \sigma_+^2(\cdot) f_+(\cdot)$ is continuous at $0$, and asymptotically $h \to 0$. 

Meanwhile, the term $\text{Var}[X^j K_h^2(X) \sigma_+^2(X)|Z=1]$ is upper bounded by
$$\begin{aligned}  \e[(K_h(X))^4 X^{2j} \sigma_+^4(X)|Z=1] & =  \frac{1}{h^4} \int_{-\infty}^{\infty} K^4(x/h) x^{2j} \sigma_+^4(x) f_+(x)\rd x \\ 
& =  h^{2j-3} \int_{-\infty}^{\infty} K^4(u) u^{2j} \sigma_+^4(hu) f_+(hu)\rd u \\ 
&=  O(h^{2j-3}),  \end{aligned} $$
where the last equality holds because $K(\cdot)$ is bounded with bounded support, $ \sigma_+^2(\cdot) f_+(\cdot)$ is continuous and strictly positive at $0$, and asymptotically $h \to 0$.

Thus
$$\begin{aligned} \frac{\text{Var}[X^j K_h^2(X) \sigma_+^2(X)|Z=1]}{N_+} & \leq \frac{ \e[X^j K_h^2(X)\sigma_+^2(X)|Z=1]}{N_+} \\ 
& =  \frac{ O(h^{2j-3})}{N_+} = O(\frac{h^{2j-2}}{hN})\cdot \frac{N}{N_+} = o_p(h^{2j-2}), \end{aligned}$$ 
where the last equality holds because as $N \to \infty$, $hN \to \infty$, so $h^{2j}/hN =o(h^{2j})$, and because $N/N_+=O_p(1)$. 
Combining previous results, 
\begin{align*}
G_{j,+} &= h^{j-1} \pi_j \sigma_+^2(0) f_+(0) + o(h^{j-1})+O_p(o_p(h^{j-1}))\\
&=h^{j-1} \pi_j \sigma_+^2(0) f_+(0) +o_p(h^{j-1}).\qedhere
\end{align*}
\end{proof}

We are now ready to compute an asymptotic approximation for the bias and variance of the causal estimator $\hat{\tau}_{\thresh} =\hat{\tau}(0)$. Recall that as discussed in Section \ref{sec:CausalParamsAndProblem}, $\hat{\tau}(0)$ can be equivalently estimated by solving two separate local linear regressions, one for the treatment group and one for the control group, rather than solving \eqref{eq:kernelcriterion} and plugging the solution into \eqref{eq:def_hat_tauthresh}. In this proof, we use the two separate local linear regression formulation, so that the proof more closely resembles that seen in the appendix of \cite{ImbensKalyanaraman_optimalBW}.

In particular, define $\mathcal{X}_+ \in \mathbb{R}^{N_+ \times 2} $ and $\mathcal{X}_-  \in \mathbb{R}^{N_- \times 2} $ to be the design matrices for the local linear regression restricted to the treated group and the control group respectively. That is define $$\mathcal{X}_+ \equiv \begin{bmatrix} \textbf{1} & (X_i)_{i \in \mathcal{Z}_+} \end{bmatrix} \quad \text{ and } \quad \mathcal{X}_- \equiv \begin{bmatrix} \textbf{1} & (X_i)_{i \in \mathcal{Z}_-} \end{bmatrix}.$$ Also define the corresponding local linear regression weight matrices by $$W_+ \equiv \text{diag} \Big( 
\big(K_h(X_i)\big)_{i \in \mathcal{Z}_+} \Big) \quad \text{ and } \quad W_- \equiv \text{diag} \Big( 
\big(K_h(X_i)\big)_{i \in \mathcal{Z}_-} \Big),$$ and similarly define the corresponding conditional variance matrices by $$\Sigma_+ \equiv \text{diag} \Big( 
\big(\sigma_+^2(X_i)\big)_{i \in \mathcal{Z}_+} \Big) \quad \text{ and } \quad \Sigma_- \equiv \text{diag} \Big( 
\big(\sigma_-^2(X_i)\big)_{i \in \mathcal{Z}_-} \Big).$$ Finally, define $\mathcal{Y}_+ \equiv (Y_i)_{i \in \mathcal{Z}_+}$, $\mathcal{Y}_- \equiv (Y_i)_{i \in \mathcal{Z}_-}$, and let $\mathbf{e}_1=(1,0)$. The causal estimator for $\tau(0)=\mu_+(0)-\mu_-(0)$ is given by $\hat{\tau}(0)=\hat{\mu}_+(0)-\hat{\mu}_-(0)$, where $\hat{\mu}_+(0)$ and $\hat{\mu}_-(0)$ are the local linear regression estimators given by $$\hat{\mu}_+(0)=\mathbf{e}_1^{\tran} (\mathcal{X}_+^{\tran} W_+ \mathcal{X}_+)^{-1} \mathcal{X}_+^{\tran} W_+ \mathcal{Y}_+ \ \ \ \text{and} \ \ \  \hat{\mu}_-(0)=\mathbf{e}_1^{\tran} (\mathcal{X}_-^{\tran} W_- \mathcal{X}_-)^{-1} \mathcal{X}_-^{\tran} W_- \mathcal{Y}_-.$$

Let $\mathcal{X} \in \mathbb{R}^{N \times 4}$ be the full design matrix whose ith row is $(1,X_i,Z_i,X_iZ_i)$, and note that the matrices $\mathcal{X}_+$ and $\mathcal{X}_-$ and the sets $\mathcal{Z}_+$ and $\mathcal{Z}_-$ are functions of the full design matrix.

In the next subsection, we compute the asymptotic approximation to the bias of the estimator $\e[\hat{\tau}(0)| \mathcal{X}]-\tau(0)$, and in the following subsection, we compute the asymptotic approximation to its variance $\text{Var}[\hat{\tau}(0)| \mathcal{X}]$. These calculations leverage Lemmas \ref{lemma:F_formula} and \ref{lemma:G_formula}.

\subsection*{Asymptotic approximation of the bias}

Define $B_+ \equiv \e[\hat{\mu}_+(0)\giv \mathcal{X}]-\mu_+(0)$ and $B_- \equiv \e[\hat{\mu}_-(0)| \mathcal{X}]-\mu_-(0)$, and note that $$\e[\hat{\tau}(0)| \mathcal{X}]-\tau(0)=B_+ -B_-.$$ 
We will compute the asymptotic formula for $B_+$, and by an identical argument, the asymptotic formula for $B_-$ will follow.
 
To do this note that $$\e[\hat{\mu}_+(0)| \mathcal{X}] = \mathbf{e}_1^{\tran} (\mathcal{X}_+^{\tran} W_+ \mathcal{X}_+)^{-1} \mathcal{X}_+^{\tran} W_+ \e[\mathcal{Y}_+| \mathcal{X}] =\mathbf{e}_1^{\tran} (\mathcal{X}_+^{\tran} W_+ \mathcal{X}_+)^{-1} \mathcal{X}_+^{\tran} W_+ M_+$$ where $M_+=\big( \mu_+(X_i) \big)_{i \in \mathcal{Z}_+}$.
 
 Since $\mu_+(\cdot)$ has at least three continuous derivatives in an open neighborhood of $0$, for each $i \in \mathcal{Z}_i$,  $$\mu_+(X_i) = \mu_+(0) +\mu_+^{(1)}(0) X_i + \frac{1}{2} \mu_+^{(2)}(0) X_i^2 +T_i,$$ where $\vert T_i \vert \leq  \vert X_i^3 \vert \cdot \sup_{  x \in [- \vert X_i \vert,\vert X_i \vert] }  \vert \mu_+^{(3)}(x) \vert $.
 
 So letting $T_+=\big(T_i \big)_{i \in \mathcal{Z}_+}$ and $S_+=\big(\mu_+^{(2)}(0) X_i^2/2 \big)_{i \in \mathcal{Z}_+}$ it follows that $$M_+=\mathcal{X}_+ \begin{bmatrix} \mu_+(0) \\ \mu_+^{(1)} (0) \end{bmatrix} +S_++T_+.$$ Combining previous results 
 \begin{align*}B_+&=e_1^{\tran} (\mathcal{X}_+^{\tran} W_+ \mathcal{X}_+)^{-1} \mathcal{X}_+^{\tran} W_+ M_+ -\mu_+(0) \\
& =e_1^{\tran} (\mathcal{X}_+^{\tran} W_+ \mathcal{X}_+)^{-1}  \mathcal{X}_+^{\tran} W_+ (S_+ + T_+).
 \end{align*}

Since $\nu_0\nu_2 -\nu_1^2>0$ (by the Cauchy-Schwartz inequality),  Lemma \ref{lemma:F_formula} and a first order Taylor expansion, as $h \to 0$ and $N \to \infty$, yield $$\frac{1}{F_{0,+}F_{2,+} - F_{1,+}^2}= \frac{1}{h^2 f_+^2(0)[ \nu_0 \nu_2 - \nu_1^2] } \Big( \frac{1}{1+o_p(1)}  \Big) = \frac{1 +o_p(1)}{h^2 f_+^2(0)[ \nu_0 \nu_2 - \nu_1^2] }. $$

 Hence, by Lemma \ref{lemma:F_formula} again,  
 $$\begin{aligned}\Big( \frac{1}{N_+} \mathcal{X}_+^{\tran} W_+ \mathcal{X}_+ \Big)^{-1}    
 & =  \begin{bmatrix} F_{0,+} & F_{1,+} \\ F_{1,+} & F_{2,+} \end{bmatrix}^{-1} \\ 
 & =  \frac{1}{F_{0,+}F_{2,+} - F_{1,+}^2} \begin{bmatrix} F_{2,+} & -F_{1,+} \\ -F_{1,+} & F_{0,+} \end{bmatrix} \\ 
 & =   \big( 1+ o_p(1) \big)\begin{bmatrix} \frac{h^2 [ f_+(0) \nu_2+ o_p(1)]}{h^2 f_+^2(0) [\nu_0 \nu_2 - \nu_1^2]} & \frac{-h [ f_+(0) \nu_1+ o_p(1)] }{h^2 f_+^2(0) [\nu_0 \nu_2 - \nu_1^2]} \\ \\ \frac{-h [ f_+(0) \nu_1+ o_p(1)]}{h^2 f_+^2(0) [\nu_0 \nu_2 - \nu_1^2]} & \frac{f_+(0) \nu_0+ o_p(1)}{h^2 f_+^2(0) [\nu_0 \nu_2 - \nu_1^2]} \end{bmatrix}  \\  
 & = \begin{bmatrix} \frac{\nu_2}{f_+(0) ( \nu_0 \nu_2 -\nu_1^2)}+ o_p(1) & \frac{-\nu_1}{h f_+(0) ( \nu_0 \nu_2 -\nu_1^2)}+ o_p(1/h) \\ \frac{-\nu_1}{h f_+(0) ( \nu_0 \nu_2 -\nu_1^2)}+ o_p(1/h) & \frac{\nu_0}{h^2f_+(0) ( \nu_0 \nu_2 -\nu_1^2)}+ o_p(1/h^2) \end{bmatrix}   \\ 
 & =  \begin{bmatrix} O_p(1) & O_p(1/h) \\ O_p(1/h) & O_p(1/h^2) \end{bmatrix}. \end{aligned} $$

 Now note, that since $\mu_+^{(3)}(\cdot)$ is continuous in an open neighborhood of $0$, there exists an $a>0$ for which $\sup_{x \in [-a,a]} \vert \mu_+^{(3)} (x) \vert \equiv \overline{a}_+ < \infty$. Since $K(\cdot)$ has bounded support, there exists an $\epsilon>0$ such that for all $h \in (0,\epsilon)$, $K_h(x)=0$ whenever $\vert x \vert > a$. Therefore for all $h < \epsilon$ and all $i$, $$\vert K_h(X_i) T_i \vert \leq \Big| K_h(X_i) \cdot \vert X_i^3 \vert \cdot \sup_{  x \in [- \vert X_i \vert,\vert X_i \vert] }  \vert \mu_+^{(3)}(x) \vert  \Big|  \leq \overline{a}_+ K_h(X_i) \vert X_i^3 \vert.$$ Thus for all $h$ sufficiently small, $$ \frac{1}{N_+} \mathcal{X}_+^{\tran} W_+ T_+  \leq \overline{a}_+ \begin{bmatrix} \frac{1}{N_+} \sum_{i \in \mathcal{Z}_+ } K_h(X_i) \vert X_i^3 \vert \\ F_{4,+} \end{bmatrix} = \begin{bmatrix} O_p(h^3) \\ O_p(h^4) \end{bmatrix} $$ where the bottom equality holds by Lemma \ref{lemma:F_formula}, and the top equality holds by the exact same argument as the proof of Lemma \ref{lemma:F_formula} except with absolute values.
 
 Combining the two previous results and multiplying through and dividing by $N_+$, $$e_1^{\tran} (\mathcal{X}_+^{\tran} W_+ \mathcal{X}_+)^{-1}  \mathcal{X}_+^{\tran} W_+T_+ =O_p(1) O_p(h^3) +O_p(1/h) O_p(h^4)=O_p(h^3)=o_p(h^2).$$
 
 Similarly, by Lemma \ref{lemma:F_formula}, $$ \frac{1}{N_+}\mathcal{X}_+^{\tran} W_+S_+ = \frac{\mu_+^{(2)} (0)}{2}  \begin{bmatrix} F_{2,+} \\ F_{3,+} \end{bmatrix}= \frac{\mu_+^{(2)} (0)}{2}   f_+(0) \begin{bmatrix} \nu_2 h^2 +o_p(h^2) \\ \nu_3 h^3 + o_p(h^3) \end{bmatrix} ,$$ 
 and therefore, 
 $$\begin{aligned} B_+ =  & e_1^{\tran} (\mathcal{X}_+^{\tran} W_+ \mathcal{X}_+)^{-1}  \mathcal{X}_+^{\tran} W_+ (S_+ + T_+) \\ 
 & =  e_1^{\tran} (\frac{1}{N_+} \mathcal{X}_+^{\tran} W_+ \mathcal{X}_+)^{-1} (\frac{1}{N_+} \mathcal{X}_+^{\tran} W_+ S_+) + o_p(h^2) \\ 
 & = \frac{1}{2} \mu_+^{(2)}(0) f_+(0) \Big[[\frac{\nu_2}{f_+(0) ( \nu_0 \nu_2 -\nu_1^2)}+ o_p(1)] [\nu_2 h^2 +o_p(h^2)]  \Big] \\ 
  & \phe+ \frac{1}{2} \mu_+^{(2)}(0) f_+(0) \Big[[-\frac{\nu_1}{h f_+(0) ( \nu_0 \nu_2 -\nu_1^2)}+ o_p(1/h)] [\nu_3 h^3 + o_p(h^3)]  \Big] + o_p(h^2) \\ 
 & = \frac{1}{2} \mu_+^{(2)}(0) \Big( \frac{\nu_2^2 - \nu_3 \nu_1}{\nu_0 \nu_2 - \nu_1^2} \Big)h^2 + o_p(h^2). \end{aligned}$$
A similar argument shows that $$B_-= \frac{1}{2} \mu_-^{(2)}(0) \Big( \frac{\nu_2^2 - \nu_3 \nu_1}{\nu_0 \nu_2 - \nu_1^2} \Big)h^2 + o_p(h^2).$$ Hence, recalling that the bias is given by $\e[\hat{\tau}(0)| \mathcal{X}]-\tau(0)=B_+ -B_-$ it follows that the asymptotic formula for the bias is \begin{equation}\label{eq:as_bias_TBD} \text{bias}_{\hat{\tau}(0)}=
     \e[\hat{\tau}(0)| \mathcal{X}]-\tau(0) = \frac{1}{2} \Big( \mu_+^{(2)}(0) - \mu_-^{(2)}(0) \Big) \Big( \frac{\nu_2^2 - \nu_3 \nu_1}{\nu_0 \nu_2 - \nu_1^2} \Big)h^2 + o_p(h^2).
 \end{equation} 
 Squaring the above formula, we recover the leading order squared-bias. As noted in Section \ref{sec:bandwidth_shrink_asymptotics}, the leading order squared-bias is the first term in \eqref{eq:AMSE_TBD_def}.

\subsection*{Asymptotic approximation of the variance}

Defining $V_\pm= \text{Var}(\hat{\mu}_\pm(0) \giv \mathcal{X})$, 
the variance for our causal estimator is  $$\text{Var}[\hat{\tau}(0)\giv \mathcal{X}] 
= V_++V_- +2 \text{Cov} [\hat{\mu}_+(0),\hat{\mu}_-(0)\giv\mathcal{X} ] = V_++V_- ,$$ where the last equality holds because $(X_i,Y_{i+},Y_{i-})$ are IID by assumption making $\mathcal{Y}_+$ and $\mathcal{Y}_-$  independent conditionally on the treatment assignments which
then implies that $\hat{\mu}_\pm(0)$ 
are independent conditionally on $\mathcal{X}$.

We now compute an asymptotic formula for $V_+$ and by an identical argument, the asymptotic formula for $V_-$ will follow.

First note that 
$$\begin{aligned}V_+  &= \text{Var}[ \mathbf{e}_1^{\tran} (\mathcal{X}_+^{\tran} W_+ \mathcal{X}_+)^{-1} \mathcal{X}_+^{\tran} W_+ \mathcal{Y}_+ | \mathcal{X}] \\ 
 &= \mathbf{e}_1^{\tran} (\mathcal{X}_+^{\tran} W_+ \mathcal{X}_+)^{-1} \mathcal{X}_+^{\tran} W_+ \text{Var}[\mathcal{Y}_+ | \mathcal{X}] W_+^{\tran} \mathcal{X}_+ (\mathcal{X}_+^{\tran} W_+ \mathcal{X}_+)^{-1} \mathbf{e}_1 \\ 
  &= \mathbf{e}_1^{\tran} (\mathcal{X}_+^{\tran} W_+ \mathcal{X}_+)^{-1} \mathcal{X}_+^{\tran} W_+ \Sigma_+ W_+ \mathcal{X}_+ (\mathcal{X}_+^{\tran} W_+ \mathcal{X}_+)^{-1} \mathbf{e}_1  \end{aligned}$$
Now the middle factor rescaled by $1/N_+$ is 
$$ \frac{1}{N_+}  \mathcal{X}_+^{\tran} W_+ \Sigma_+ W_+ \mathcal{X}_+ = \begin{bmatrix} G_{0,+} & G_{1,+} \\ G_{1,+} & G_{2,+} \end{bmatrix}$$ and recall from the previous subsection that $$\Big( \frac{1}{N_+} \mathcal{X}_+^{\tran} W_+ \mathcal{X}_+ \Big)^{-1} =  \frac{1}{F_{0,+}F_{2,+} - F_{1,+}^2} \begin{bmatrix} F_{2,+} & -F_{1,+} \\ -F_{1,+} & F_{0,+} \end{bmatrix}.$$ 
Thus, 
$$\begin{aligned} N_+ V_+  &= \frac{1}{(F_{0,+}F_{2,+} - F_{1,+}^2)^2} \mathbf{e}_1^{\tran}  \begin{bmatrix} F_{2,+} & -F_{1,+} \\ -F_{1,+} & F_{0,+} \end{bmatrix} \begin{bmatrix} G_{0,+} & G_{1,+} \\ 
G_{1,+} & G_{2,+} \end{bmatrix} \begin{bmatrix} F_{2,+} & -F_{1,+} \\ -F_{1,+} & F_{0,+} \end{bmatrix} \mathbf{e}_1 \\ 
 &= \frac{F_{2,+}^2 G_{0,+} -2F_{1,+} F_{2,+} G_{1,+} +F_{1,+}^2 G_{2,+} }{(F_{0,+}F_{2,+} -F_{1,+}^2)^2 } \\ 
  &=   \frac{\sigma_+^2(0)}{f_+(0)} \cdot \frac{ \Big(\nu_2^2 \pi_0 h^3 - 2 \nu_1 \nu_2 \pi_1 h^3 + \nu_1^2 \pi_2 h^3 \Big)[1+o_p(1)]^3}{ (\nu_0 \nu_2 h^2 - \nu_1^2 h^2)^2 [1+o_p(1)]^4} \\ 
 &= \frac{\sigma_+^2(0)}{h f_+(0)} \Big( \frac{\nu_2^2 \pi_0 - 2\nu_1 \nu_2 \pi_1 + \nu_1^2 \pi_2}{ (\nu_0 \nu_2 - \nu_1^2)^2 } +o_p(1) \Big) \\ 
  &= \frac{2 \sigma_+^2(0) \Pr(Z_i=1)}{h f(0)} \Big( \frac{\nu_2^2 \pi_0 - 2\nu_1 \nu_2 \pi_1 + \nu_1^2 \pi_2}{ (\nu_0 \nu_2 - \nu_1^2)^2 } +o_p(1) \Big).  
\end{aligned}  $$

Thus dividing through by $N_+$ and rearranging terms, $$V_+= \frac{2 \sigma_+^2(0) }{N h f(0)} \Big( \frac{\nu_2^2 \pi_0 - 2\nu_1 \nu_2 \pi_1 + \nu_1^2 \pi_2}{ (\nu_0 \nu_2 - \nu_1^2)^2 } +o_p(1) \Big) \cdot \frac{N \Pr(Z_i=1)}{N_+}. $$

By the weak law of large numbers $N \Pr(Z_i=1)/N_+ =(1+o_p(1))$ and therefore, 
$$V_+= \frac{2\sigma_+^2(0) }{N h f(0)} \Big( \frac{\nu_2^2 \pi_0 - 2\nu_1 \nu_2 \pi_1 + \nu_1^2 \pi_2}{ (\nu_0 \nu_2 - \nu_1^2)^2 }  \Big) +o_p\Bigl(\frac{1}{Nh}\Bigr).$$ 
A similar argument shows that $$V_-= \frac{2\sigma_-^2(0)  }{N h f(0)} \Big( \frac{\nu_2^2 \pi_0 - 2\nu_1 \nu_2 \pi_1 + \nu_1^2 \pi_2}{ (\nu_0 \nu_2 - \nu_1^2)^2 }  \Big) +o_p\Bigl(\frac{1}{Nh}\Bigr),$$ 
and hence because $\text{Var}[\hat{\tau}(0)| \mathcal{X}]=V_++V_-$,  \begin{equation}\label{eq:TBD_as_var_formula} \text{Var}[\hat{\tau}(0)| \mathcal{X}] = \frac{\sigma_+^2(0) +\sigma_-^2(0)}{N h f(0)} \Big( \frac{2\nu_2^2 \pi_0 - 4\nu_1 \nu_2 \pi_1 + 2\nu_1^2 \pi_2}{ (\nu_0 \nu_2 - \nu_1^2)^2 }  \Big) +o_p\Bigl(\frac{1}{Nh}\Bigr). \end{equation}
As remarked in Section \ref{sec:bandwidth_shrink_asymptotics}, the leading order variance matches the second term in formula \eqref{eq:AMSE_TBD_def}.

\subsection*{Asymptotic expression for mean squared error}

To complete the proof of the asymptotic expression for the tie-breaker design MSE in Lemma \ref{lemma:MSE_TBD}, note that $$\begin{aligned} \text{MSE}_{\text{TBD}}(h,N) & = \e \big[ \big( \hat{\tau}(0)-\tau(0) \big)^2 \big| \mathcal{X} \big] \\ & = \text{bias}_{\hat{\tau}(0)}^2+\text{Var}[\hat{\tau}(0)| \mathcal{X}] \\ & = C_1 \big( \mu_+^{(2)}(0) - \mu_-^{(2)}(0) \big)^2  h^4 + o_p(h^4) + \frac{C_2  }{N h } \Big( \frac{ \sigma_+^2(0) +  \sigma_-^2(0)}{f(0)}  \Big) +o_p\Bigl(\frac{1}{Nh}\Bigr) \\ & = \text{AMSE}_{\text{TBD}}(h,N)+o_p\Bigl(h^4+\frac{1}{Nh}\Bigr)  \end{aligned}$$ 
where the second last equality holds from combining equations \eqref{eq:TBD_as_var_formula}, \eqref{eq:as_bias_TBD}, and \eqref{eq:C_def}, while the last equality holds from definition \eqref{eq:AMSE_TBD_def}.

\subsection*{Asymptotically optimal bandwidth expression}

To complete the proof of Lemma \ref{lemma:MSE_TBD}, recall that we define 
$$h_{\text{opt,TBD}}(N)= \argmin_h \text{AMSE}_{\text{TBD}}(h,N) $$ 
and by \eqref{eq:AMSE_TBD_def},  $$\text{AMSE}_{\text{TBD}}(h,N) \equiv C_1 \big( \mu_+^{(2)}(0) -\mu_-^{(2)}(0) \big)^2 h^4+ \frac{C_2 }{Nh} \Big( \frac{  \sigma_+^2(0)+\sigma_-^2(0)}{f(0)} \Big).$$

Because by assumption $\mu_+^{(2)}(0) \neq \mu_-^{(2)}(0)$ and $f(0)>0$, a simple calculus exercise shows that
$$ h_{\text{opt,TBD}}(N)=  \Big( \frac{C_2 }{4 C_1} \Big)^{1/5}  \Big( \frac{\sigma_+^2(0) + \sigma_-^2(0)}{f(0) \big( \mu_+^{(2)}(0)- \mu_-^{(2)}(0)\big)^2} \Big)^{1/5}  N^{-1/5}.  \qed $$

\section{Proof of Theorem~\ref{theorem:MSE_RDD_versu_TBD}}\label{sec:relMSETheoremProof}

First note that by plugging \eqref{eq:hopt_RDD_est} into \eqref{eq:AMSE_RDD_def}, $\text{AMSE}_{\text{RDD}} \big( \hat{h}_{\text{opt,RDD}}(N),N \big)$ equals

$$\begin{aligned} & \tilde{C}_1 \big( \mu_+^{(2)}(t) -\mu_-^{(2)}(t) \big)^2 \hat{h}^4_{\text{opt,RDD}}(N)+ \frac{\tilde{C}_2}{N \hat{h}_{\text{opt,RDD}}(N)} \Big( \frac{\sigma_+^2(t)+\sigma_-^2(t)}{f(t)} \Big) \\ 
& =  \big( 4 \tilde{C}_1 \tilde{C}_2^4 \big)^{1/5} N^{-4/5} \Big[ \frac{1}{4} \big( \mu_+^{(2)}(t) -\mu_-^{(2)}(t) \big)^2 \hat{\gamma}_{\text{RDD}}^4 + \frac{\sigma_+^2(t)+\sigma_-^2(t)}{\hat{\gamma}_{\text{RDD}} f(t)}  \Big] \\ 
& = \big( 4 \tilde{C}_1 \tilde{C}_2^4 \big)^{1/5} N^{-4/5} \Big[ \frac{1}{4} \big( \mu_+^{(2)}(t) -\mu_-^{(2)}(t) \big)^2 \big( \gamma+o_p(1) \big)^4 + \frac{\sigma_+^2(t)+\sigma_-^2(t)}{\big( \gamma+o_p(1) \big) f(t)}  \Big] \\ 
& = \big( 4 \tilde{C}_1 \tilde{C}_2^4 \big)^{1/5} N^{-4/5} \Big[ \frac{1}{4} \big( \mu_+^{(2)}(t) -\mu_-^{(2)}(t) \big)^2 \gamma^4 + \frac{\sigma_+^2(t)+\sigma_-^2(t)}{ \gamma  f(t)}  +o_p(1)\Big] \\ 
& =  \big( 4 \tilde{C}_1 \tilde{C}_2^4 \big)^{1/5} N^{-4/5} \big[\alpha +o_p(1) \big], \end{aligned}$$ 
where the third equality holds since $\hat{\gamma}_{\text{RDD}} \xrightarrow{p} \gamma$ and the last inequality uses definitions \eqref{eq:alpha_definition} and \eqref{eq:def_gamma}.

Plugging \eqref{eq:hopt_TBD_est} into \eqref{eq:AMSE_TBD_def}, an identical argument that uses $\hat{\gamma}_{\text{TBD}} \xrightarrow{p} \gamma$ and definitions \eqref{eq:alpha_definition} and \eqref{eq:def_gamma} yields $$\text{AMSE}_{\text{TBD}} \big( \hat{h}_{\text{opt,TBD}}(\theta N), \theta N \big) = \big( 4 C_1 C_2^4 \big)^{1/5} (\theta N)^{-4/5} \big[ \alpha +o_p(1) \big].$$

In applying Lemma \ref{lemma:MSE_RDD_IK}, it will be helpful to note that letting $c=\big( \tilde{C}_2/(4 \tilde{C}_1) \big)^{1/5}$, $$\begin{aligned}  \hat{h}_{\text{opt,RDD}}^4(N) + \frac{1}{N \hat{h}_{\text{opt,RDD}}(N) }  & =  c^4   \hat{\gamma}_{\text{RDD}}^4 N^{-4/5} +  \frac{1}{c \hat{\gamma}_{\text{RDD}}} N^{-4/5}  
\\ & =  \Big( c^4 \big(\gamma+o_p(1) \big)^4 + \frac{1}{c \big(\gamma+o_p(1) \big)}  \Big) N^{-4/5}
\\ & = \big( c^4 \gamma +\frac{1}{c \gamma} +o_p(1) \big) N^{-4/5} \\ & = O_p(N^{-4/5}). \end{aligned}$$ 
Therefore under the conditions of Theorem \ref{theorem:MSE_RDD_versu_TBD}, by applying Lemmas \ref{lemma:MSE_RDD_IK} and \ref{lemma:MSE_TBD} and combining the previous results, $$ \begin{aligned} \frac{\text{MSE}_{\text{RDD}} \big( \hat{h}_{\text{opt,RDD}}(N),N \big)}{\text{MSE}_{\text{TBD}} \big( \hat{h}_{\text{opt,TBD}}(\theta N),\theta N \big)} & = \frac{\text{AMSE}_{\text{RDD}} \big( \hat{h}_{\text{opt,RDD}}(N),N \big) + o_p( O_p(N^{-4/5}) ) }{\text{AMSE}_{\text{TBD}} \big( \hat{h}_{\text{opt,TBD}}(\theta N),\theta N \big)+o_p ( O_p(N^{-4/5}) )} 
\\ & = \frac{ \big( 4 \tilde{C}_1 \tilde{C}_2^4 \big)^{1/5} N^{-4/5} \big[\alpha +o_p(1) \big] + N^{-4/5}o_p(1)}{\big( 4 C_1 C_2^4 \big)^{1/5} (\theta N)^{-4/5} \big[\alpha +o_p(1) \big]+N^{-4/5}o_p(1)} 
\\ & = \frac{ \big( 4 \tilde{C}_1 \tilde{C}_2^4 \big)^{1/5}  \big[\alpha +o_p(1) \big] + o_p(1)}{\big( 4 C_1 C_2^4 \big)^{1/5} \theta ^{-4/5} \big[\alpha  +o_p(1) \big]+o_p(1)} 
\\ & = \theta^{4/5} \Big( \frac{\tilde{C}_1 \tilde{C}_2^4}{C_1 C_2^4} \Big)^{1/5} +o_p(1). \qed \end{aligned}$$

\section{Extensions to assignment probability $p \neq 1/2$}\label{sec:GenPneqHalf}
In this appendix we consider the MSE of a generalization of the 3-level TBD given by \eqref{eq:threelevel}, where we allow the assignment probabilities to $Z=1$ and $Z=-1$ to differ from $1/2$ within the interval of experimentation. In particular, we consider a design with the following assignment probabilities \begin{align}\label{eq:pzgivx_KO_anyP}
\Pr( Z_i=1\giv x_i)=\begin{cases}
0, & x_i -t\le -\Delta\\
p, & |x_i-t|<\Delta\\
1, & x_i-t \ge \Delta,
\end{cases}
\end{align} for some $p \in (0,1)$. Under this more general version of the TBD, one can show that the AMSE formula given in \eqref{eq:AMSE_TBD_def} changes to \begin{equation}\label{eq:AMSE_TBD_defGenP}
    \text{AMSE}_{\text{TBD}(p)}(h,N) \equiv C_1 \big(  \mu_+^{(2)}(t) -\mu_-^{(2)}(t) \big)^2 h^4+ \frac{C_2}{Nh} \Big( \frac{ \frac{1}{2p} \sigma_+^2(t)+\frac{1}{2(1-p)}\sigma_-^2(t)}{f(t)} \Big),
\end{equation} and that the following lemma holds.

 \begin{lemma}\label{lemma:MSE_TBD_genP} 
 Under the conditions of Lemma \ref{lemma:MSE_TBD}, with the exception that the assignment probabilities follow \eqref{eq:pzgivx_KO_anyP} rather than \eqref{eq:threelevel}, the mean squared error in estimating $\tau_{\emph{thresh}}$ is given by 
 \begin{equation*}
     \emph{MSE}_{\emph{TBD}(p)}(h,N)= \emph{AMSE}_{\emph{TBD}(p)}(h,N) + o_p\Big( h^4 + \frac{1}{Nh} \Big),
 \end{equation*} and the asymptotically optimal bandwidth, defined by $\argmin_h \emph{AMSE}_{\emph{TBD}(p)}(h,N)$ is \begin{equation}\label{eq:hopt_TBD_genP}  h_{\emph{opt,TBD}(p)}(N)=  \Big( \frac{C_2}{4 C_1} \Big)^{1/5}  \Big( \frac{ \frac{1}{2p} \sigma_+^2(t) +\frac{1}{2(1-p)}\sigma_-^2(t)}{f(t) \big( \mu_+^{(2)}(t)- \mu_-^{(2)}(t)\big)^2} \Big)^{1/5}  N^{-1/5}.  \end{equation}
 \end{lemma}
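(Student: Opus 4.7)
The plan is to follow the structure of the proof of Lemma~\ref{lemma:MSE_TBD} in Appendix~\ref{sec:proof_of_TBD_MSE} essentially line by line, tracking only the places where the assignment probability $p$ enters. The single structural change is in the conditional densities of the running variable given the treatment assignment: with the more general assignment rule~\eqref{eq:pzgivx_KO_anyP}, the densities $f_{\pm}$ in~\eqref{eq:fpm_def} become
\begin{align*}
f_+(x) &= \begin{cases} f(x)/\Pr(Z_i=1), & x \ge \Delta\\ pf(x)/\Pr(Z_i=1), & |x|<\Delta\\ 0, & x\le -\Delta,\end{cases}\\
f_-(x) &= \begin{cases} 0, & x \ge \Delta\\ (1-p)f(x)/\Pr(Z_i=-1), & |x|<\Delta\\ f(x)/\Pr(Z_i=-1), & x\le -\Delta,\end{cases}
\end{align*}
so in particular $f_+(0) = pf(0)/\Pr(Z_i=1)$ and $f_-(0) = (1-p)f(0)/\Pr(Z_i=-1)$. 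Everything in Lemmas~\ref{lemma:F_formula} and~\ref{lemma:G_formula} carries over verbatim with these new $f_{\pm}$, since those lemmas are stated in terms of $f_{\pm}(0)$ and only use that $f_{\pm}$ is continuously differentiable at $0$ and strictly positive there, both of which still hold under assumption~(ii) as long as $p\in(0,1)$.

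Next I would redo the bias computation. The algebra producing $B_{\pm}=\tfrac12\mu_{\pm}^{(2)}(0)\big(\tfrac{\nu_2^2-\nu_3\nu_1}{\nu_0\nu_2-\nu_1^2}\big)h^2+o_p(h^2)$ has the convenient feature that all factors of $f_{\pm}(0)$ cancel in going from $(\mathcal{X}_{\pm}^{\tran} W_{\pm}\mathcal{X}_{\pm}/N_{\pm})^{-1}$ times $\mathcal{X}_{\pm}^{\tran} W_{\pm} S_{\pm}/N_{\pm}$, so the leading-order bias is unchanged and the first term in~\eqref{eq:AMSE_TBD_defGenP} matches~\eqref{eq:AMSE_TBD_def}. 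For the variance computation I would repeat the $V_{+}$ calculation from the original proof; the $\sigma_+^2(0)/f_+(0)$ prefactor now reads $\sigma_+^2(0)\Pr(Z_i=1)/(pf(0))$, and the weak-law substitution $N\Pr(Z_i=1)/N_+\to 1$ then gives $V_+ = \sigma_+^2(0)/(pNhf(0))\cdot(\nu_2^2\pi_0-2\nu_1\nu_2\pi_1+\nu_1^2\pi_2)/(\nu_0\nu_2-\nu_1^2)^2+o_p(1/(Nh))$, with the analogous $V_-$ carrying a factor $1/(1-p)$ in place of $1/p$. Summing and using the definition of $C_2$ in~\eqref{eq:C_def} yields the second term of~\eqref{eq:AMSE_TBD_defGenP}.

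Combining the squared-bias and variance then gives the MSE formula stated in the lemma up to $o_p(h^4+1/(Nh))$. The optimal bandwidth formula~\eqref{eq:hopt_TBD_genP} follows from a single-variable calculus minimization of the leading-order AMSE in $h$, exactly as in the last subsection of Appendix~\ref{sec:proof_of_TBD_MSE}; the only change is the replacement of $\sigma_+^2(t)+\sigma_-^2(t)$ by $\sigma_+^2(t)/(2p)+\sigma_-^2(t)/(2(1-p))$ inside the fifth root. I expect no substantive obstacle here: the only delicate point is making sure $f_{\pm}(0)>0$ so that Lemmas~\ref{lemma:F_formula} and~\ref{lemma:G_formula} apply, which is guaranteed by $p\in(0,1)$ together with $f(0)>0$; everything else is bookkeeping.
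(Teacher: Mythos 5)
Your proposal is correct and follows essentially the same route as the paper: the paper's proof simply reruns the argument of Lemma~\ref{lemma:MSE_TBD} with the conditional densities $f_{\pm}$ replaced by the same $p$-dependent versions you write down, noting (as you do) that this leaves the leading-order bias unchanged while rescaling the variance terms by $1/(2p)$ and $1/(2(1-p))$. Your explicit tracking of the $\sigma_{+}^2(0)/f_{+}(0)$ prefactor and the final calculus step match the paper's (more tersely stated) proof.
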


 \begin{proof}
 The proof is identical to the proof of Lemma \ref{lemma:MSE_TBD} in Appendix \ref{sec:proof_of_TBD_MSE}, except that the formulas for the functions $f_{\pm}(\cdot)$ presented at \eqref{eq:fpm_def} are instead given by 
 \begin{equation*}
    f_+(x) = \begin{cases} 
\frac{f(x)}{\Pr(Z_i=1)},  &  x \geq \Delta \\ \frac{pf(x)}{ \Pr(Z_i=1)},  &  |x|<\Delta \\  0, &  x \leq - \Delta \end{cases} 
\quad  \text{and} \quad  
f_-(x) = \begin{cases} 0,  &  x \geq \Delta \\ 
\frac{(1-p)f(x)}{ \Pr(Z_i=-1)},  &  |x|<\Delta \\  \frac{f(x)}{\Pr(Z_i=-1)}, &  x \leq - \Delta. \end{cases}  
\end{equation*} The changes in the formulas for $f_{\pm}(\cdot)$ do not affect the leading order bias formula but do affect the leading order variance and optimal bandwidth formulas.
 \end{proof}

Analogously to the empirical bandwidth choice given by \eqref{eq:hopt_TBD_est}, an investigator running a TBD of the form \eqref{eq:pzgivx_KO_anyP} seeking mean squared optimal estimation of $\tau_{\thresh}$ would ultimately use the bandwidth $$\hat{h}_{\text{opt,TBD}(p)}(N)= \Big(\frac{C_2}{4C_1} \Big)^{1/5} \hat{\gamma}_{\text{TBD}(p)} N^{-1/5},$$ where $\hat{\gamma}_{\text{TBD}(p)}$ is some consistent estimator of the quantity $$\Big( \frac{ \frac{1}{2p} \sigma_+^2(t) +\frac{1}{2(1-p)}\sigma_-^2(t)}{f(t) \big( \mu_+^{(2)}(t)- \mu_-^{(2)}(t)\big)^2} \Big)^{1/5}.$$ 

To derive an analogue of Theorem \ref{theorem:MSE_RDD_versu_TBD} for 
a TBD of the form \eqref{eq:pzgivx_KO_anyP}, it is 
convenient to define the relative variance measure 
\begin{equation}\label{eq:varRatDef}
r(t) \equiv \frac{\sigma_+^2(t)}{\sigma_+^2(t)+\sigma_-^2(t)}.    
\end{equation} The following theorem compares the RDD with $N$ points to a TBD of the form \eqref{eq:pzgivx_KO_anyP} with $\theta N$ points for some $\theta>0$.
\begin{theorem}\label{theorem:MSE_RDD_versu_TBD_GenP}
Under the assumptions of Theorem \ref{theorem:MSE_RDD_versu_TBD}, as $N \to \infty$
\begin{equation}\label{eq:MSE_rat_GenP}
    \frac{\emph{MSE}_{\emph{RDD}}\big(\hat{h}_{\emph{opt,RDD}}(N),N \big)}{\emph{MSE}_{\emph{TBD}(p)}\big(\hat{h}_{\emph{opt,TBD}(p)}(\theta N), \theta N \big)} \xrightarrow{p} \theta^{4/5} \Big( \frac{\tilde{C}_1 \tilde{C}_2^4}{C_1 C_2^4}\Big)^{1/5} \Big( \frac{r(t)}{2p} +\frac{1-r(t)}{2(1-p)}\Big)^{-4/5}
\end{equation}
holds for any tie-breaker design
of the form \eqref{eq:pzgivx_KO_anyP} with $\Delta>0$. 
\end{theorem}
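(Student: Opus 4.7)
The plan is to mimic the proof of Theorem~\ref{theorem:MSE_RDD_versu_TBD} from Appendix~\ref{sec:relMSETheoremProof} with only the bookkeeping modified to track the asymmetric variance contribution. The key observation is that the only substantive change from the $p=1/2$ case is that the variance factor $(\sigma_+^2(t)+\sigma_-^2(t))/f(t)$ in $\text{AMSE}_{\text{TBD}}$ gets replaced by $\bigl(\tfrac{1}{2p}\sigma_+^2(t)+\tfrac{1}{2(1-p)}\sigma_-^2(t)\bigr)/f(t) = \rho\cdot(\sigma_+^2(t)+\sigma_-^2(t))/f(t)$, where
\[
\rho \equiv \frac{r(t)}{2p} + \frac{1-r(t)}{2(1-p)},
\]
with $r(t)$ defined in~\eqref{eq:varRatDef}. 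Everything else in the asymptotic expansion is structurally identical.

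First I would invoke Lemma~\ref{lemma:MSE_TBD_genP} to replace $\text{MSE}_{\text{TBD}(p)}(\hat h_{\text{opt},\text{TBD}(p)}(\theta N),\theta N)$ by $\text{AMSE}_{\text{TBD}(p)}(\hat h_{\text{opt},\text{TBD}(p)}(\theta N),\theta N)$ up to an $o_p(h^4+1/(Nh))$ remainder, and likewise apply Lemma~\ref{lemma:MSE_RDD_IK} in the numerator. Next, I would plug the empirical bandwidth $\hat h_{\text{opt},\text{TBD}(p)}(N) = (C_2/(4C_1))^{1/5}\hat\gamma_{\text{TBD}(p)}N^{-1/5}$ into the AMSE expression~\eqref{eq:AMSE_TBD_defGenP}. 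Using the consistency $\hat\gamma_{\text{TBD}(p)}\xrightarrow{p}\gamma_{\text{TBD}(p)}$, where $\gamma_{\text{TBD}(p)}^5$ is the bracketed quantity in~\eqref{eq:hopt_TBD_genP}, exactly the same algebraic manipulation as in Appendix~\ref{sec:relMSETheoremProof} gives
\[
\text{AMSE}_{\text{TBD}(p)}\bigl(\hat h_{\text{opt},\text{TBD}(p)}(\theta N),\theta N\bigr) = (4C_1C_2^4)^{1/5}\,\rho^{4/5}\,\alpha\,(\theta N)^{-4/5}\bigl[1+o_p(1)\bigr],
\]
where the $\rho^{4/5}$ factor is the sole new ingredient and arises because $\alpha$ in~\eqref{eq:alpha_definition} was defined using the RDD-type variance $(\sigma_+^2(t)+\sigma_-^2(t))/f(t)$ while the TBD$(p)$ variance is $\rho$ times that quantity. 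The analogous computation in the numerator gives $(4\tilde C_1\tilde C_2^4)^{1/5}\alpha N^{-4/5}[1+o_p(1)]$, exactly as before.

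Finally, taking the ratio and verifying that the $o_p$ remainders are negligible (via the $O_p(N^{-4/5})$ bound on $\hat h_{\text{opt}}^4+1/(N\hat h_{\text{opt}})$, as in the existing proof) yields
\[
\frac{\text{MSE}_{\text{RDD}}(\hat h_{\text{opt,RDD}}(N),N)}{\text{MSE}_{\text{TBD}(p)}(\hat h_{\text{opt,TBD}(p)}(\theta N),\theta N)} \xrightarrow{p} \theta^{4/5}\Bigl(\tfrac{\tilde C_1\tilde C_2^4}{C_1C_2^4}\Bigr)^{1/5}\rho^{-4/5},
\]
which is the claimed limit after substituting the definition of $\rho$. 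There is no real obstacle here; the main care needed is just to trace through the constants so that the factor $\rho^{4/5}$ appears in exactly the right place, and to confirm that the $\alpha$ in both the RDD and TBD$(p)$ AMSE expressions is the same quantity defined via the unweighted sum $\sigma_+^2(t)+\sigma_-^2(t)$, so that the $\rho$ factor cleanly separates out.
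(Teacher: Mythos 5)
Your proposal is correct and follows essentially the same route as the paper: the paper's own proof of this theorem is simply the remark that the argument of Theorem~\ref{theorem:MSE_RDD_versu_TBD} carries over verbatim once $\text{AMSE}_{\text{TBD}(p)}$, $h_{\text{opt,TBD}(p)}$, and $\hat h_{\text{opt,TBD}(p)}$ replace their $p=1/2$ counterparts, which is exactly what you do, with the factor $\rho^{4/5}$ correctly traced through via Lemma~\ref{lemma:MSE_TBD_genP} and the definition of $r(t)$.
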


\begin{proof}
The proof is the same as that of Theorem \ref{theorem:MSE_RDD_versu_TBD}, except the formulas for $\text{AMSE}_{\text{TBD}(p)}(h,N)$, $h_{\text{opt,TBD}(p)}(N)$, and $\hat{h}_{\text{opt,TBD}(p)}(N)$ are different in the setting of a tie-breaker design of the form $\eqref{eq:pzgivx_KO_anyP}$.
\end{proof}

We visualize the implications of Theorem \ref{theorem:MSE_RDD_versu_TBD_GenP} in Figure \ref{fig:pNeqHalf}. While the figure focuses on the case of the triangular kernel with $\theta=1$, the relative AMSE as a function of $p$ and $r(t)$ will be a scalar multiple of that shown in Figure \ref{fig:pNeqHalf} for different kernels and values of $\theta$. We also explicitly list a few notable implications of Theorem \ref{theorem:MSE_RDD_versu_TBD_GenP} below. 

\begin{figure}[t]
\centering
\includegraphics[width=0.9 \hsize]{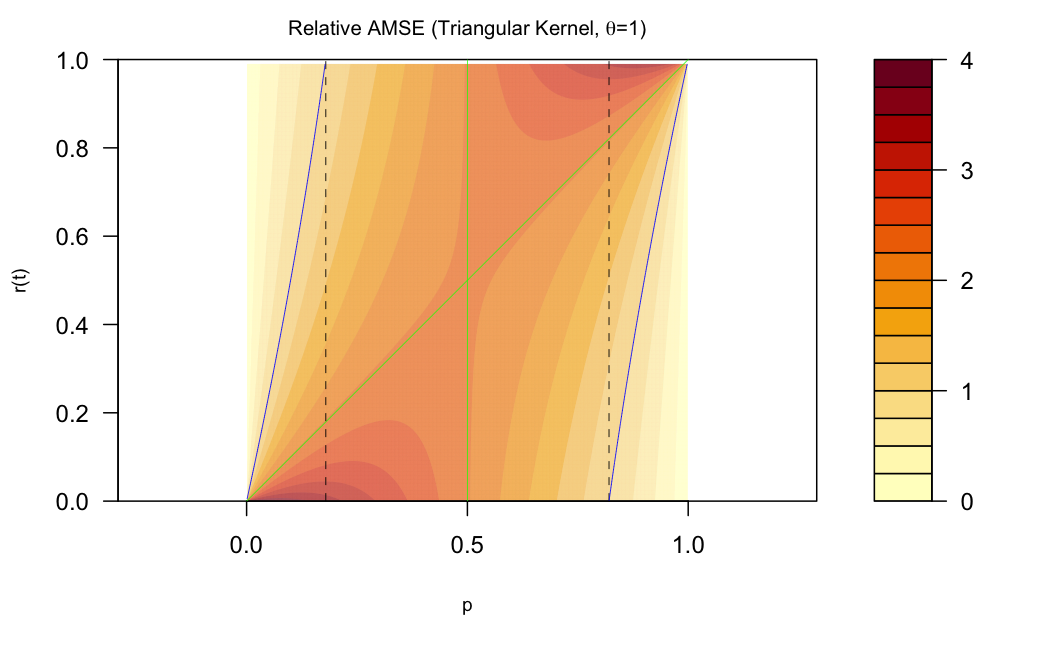}
\caption{\label{fig:pNeqHalf} Relative AMSE given by Theorem \ref{theorem:MSE_RDD_versu_TBD_GenP} as function of $p$ and $r(t)$. Here we consider the same sample size for each design ($\theta=1$), and that the most favorable kernel for the RDD (the triangular kernel) is to be used. The color scale for the contour plot is displayed on the right, with the added blue lines denoting a level of $1$. Outside the blue lines, the RDD performs better than the TBD asymptotically. The vertical black lines, between which the relative AMSE is always greater than 1, are placed at $p=\theta_*/2$ and $p=1-\theta_*/2$, for $\theta_*=60.46618^{-1/4}$. The pair of green lines give the boundaries of the two triangular regions in which the TBD with assignment probabilities \eqref{eq:pzgivx_KO_anyP} has a smaller AMSE than the default TBD with $p=1/2$ has.
}
\end{figure}

\begin{enumerate}
    \item The AMSE for the TBD is minimized by $p=\sigma_+(t)/\big(\sigma_-(t)+\sigma_+(t) \big)$.
    
    \item If either $r(t) \leq p \leq 1/2$ or $1/2 \leq p \leq r(t)$, then  $$\Big( \frac{r(t)}{2p} +\frac{1-r(t)}{2(1-p)}\Big)^{-4/5} \geq 1.$$ As a result, whenever $p$ is between $r(t)$ and $1/2$ inclusive, the advantage of a TBD over an RDD (in terms of relative asymptotic MSE in estimation of $\tau_{\thresh}$) is at least as large as that described in Table \ref{table:Kernels}, which corresponds to the $p=1/2$ case.
    
    \item Suppose that a TBD and RDD are to be conducted with the same sample size $N$ (i.e.\ $\theta=1$), and that under both designs the asymptotically MSE optimal bandwidth is to be used. Then, regardless of the unknown value of $r(t) \in [0,1]$, for any choice of $p \in (\theta_*/2,1-\theta_*/2)$ with $\theta_*$ defined at \eqref{eq:DefthetaStar}, the TBD given by \eqref{eq:pzgivx_KO_anyP} will have lower AMSE than the RDD. Conversely, if $p \notin [\theta_*/2,1-\theta_*/2]$, there will exist a range of values of $r(t)$ contained in $[0,1]$ for which the TBD will have higher AMSE than the RDD. Note that amongst the common kernel choices considered in Table \ref{table:Kernels}, $\theta_*$ is largest for the triangular kernel with a value of precisely $60.46618^{-1/4}$, suggesting that for any of these kernel choices the TBD will have lower AMSE than an RDD in estimating $\tau_{\thresh}$ as long as $p \in [0.18,0.82]$, but outside that range, the TBD may have larger AMSE depending on the kernel choice and the unknown value of $r(t)$.
\end{enumerate}

The first fact may be difficult to leverage in practice because at the time an experiment is being designed, the investigator is unlikely to have outcome data from both treatment and control units that they can use to estimate the ratio $\sigma_+(t)/\sigma_-(t)$. If the investigator has some prior knowledge about whether or not $\sigma_+(t)/\sigma_-(t)$ is smaller or greater than 1, they could potentially leverage the second fact to design a TBD that has lower AMSE than a TBD with the default of $p=1/2$ has. The third statement is true for any value of $r(t)$ or $\sigma_+(t)/\sigma_-(t)$, suggesting that a design with $p \notin [0.18,0.82]$ should only be considered with caution.

\section{Bandwidth that removes leading order bias}\label{sec:MagicBandwidth}

In the asymptotic analysis from Section \ref{sec:bandwidth_shrink_asymptotics}, the bias for the local linear estimator of $\hat{\tau}_{\thresh}$ was $O_p(h^2)$. However, that analysis relied on a simplification that arose by assuming that $h \to 0$ as $N \to \infty$. In this appendix, we derive an asymptotic expression for the bias of $\hat{\tau}_{\thresh}$ that does not rely on the assumption that $h \to 0$ as $N \to \infty$. In this expression for the bias, the leading order term is quadratic in $h$. We also show that for some value of $h/\Delta$, which depends only on the kernel $K(\cdot)$, the quadratic bias term vanishes. In Table \ref{table:MagicBandwidth}, we present a numerical approximation of the value of $h/\Delta$ that removes the leading order bias for common kernel choices.

 We consider the following regularity conditions, which were not needed in the main text.
\begin{enumerate}[(I)]

    \item The mean functions $\mu_{\pm}(\cdot)$ are three times differentiable. Further, the third derivatives of the mean functions, $\mu_{\pm}^{(3)}(\cdot)$, are bounded.
    \item The density $f$ of the assignment variable satisfies $\vert f(x) -f(t) \vert \leq L \vert x -t \vert$ for some $L<\infty$.
\end{enumerate}

In the proof of Lemma \ref{lemma:MSE_TBD}, where we computed the leading order bias, we did not need to assume (I) and (II) because we supposed that $h \to 0$ as $N \to \infty$. In the following theorem about the leading order bias, because we do not assume $h \to 0$, we assume (I) and (II) in addition to (i)--(vi).

\begin{theorem}\label{theorem:magicBandwidth}
    Suppose conditions (i)--(vi) from the main text and conditions (I) and (II) hold, and that $N \to \infty$. Let $\Delta>0$ be fixed. Under the tie-breaker design defined by \eqref{eq:threelevel} and estimation of $\tau_{\thresh}$ according to \eqref{eq:kernelcriterion} and \eqref{eq:def_hat_tauthresh} for some bandwidth $h>0$, the bias of $\hat{\tau}_{\thresh}$ is given by \begin{equation}\label{eq:2ndOrderBias_withDelta}   \frac{ \mu_+^{(2)} (t)-\mu_-^{(2)} (t)}{2}   \Big( \frac{ \nu_2^2 - 4 \int_{\Delta/h}^{\infty} uK(u) \rd u  \int_{\Delta/h}^{\infty} u^3K(u) \rd u }{ \nu_0 \nu_2 -4[ \int_{\Delta/h}^{\infty} uK(u) \rd u ]^2  } \Big)h^2 +O(h^3)+O_p\Big( \frac{1}{\sqrt{N}} \Big), \end{equation} where $\nu_j$ is defined at \eqref{eq:def_nuAndPi_newer}. As a result, the leading order bias term is equal to $0$ whenever the bandwidth $h$ is chosen so that \begin{equation}\label{eq:cond_no2ndOrderBias}
        \int_{\Delta/h}^{\infty} uK(u) \rd u  \int_{\Delta/h}^{\infty} u^3K(u) \rd u= \bigg( \int_0^{\infty} u^2 K(u) \rd u \bigg)^2.
    \end{equation} Moreover, there must exist an $h> \Delta >0$ that satisfies \eqref{eq:cond_no2ndOrderBias}.
\end{theorem}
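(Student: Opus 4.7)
My plan is to adapt the bias derivation in the proof of Lemma~\ref{lemma:MSE_TBD} (Appendix~\ref{sec:proof_of_TBD_MSE}) to the regime where $h$ is fixed rather than shrinking with $N$. Set $t=0$ and use the two-regression formulation $\hat\tau_{\thresh} = \hat\mu_+(0) - \hat\mu_-(0)$, so that the bias splits as $B_+ - B_-$ where $B_\pm = \mathbf{e}_1^{\tran} (\mathcal{X}_\pm^{\tran} W_\pm \mathcal{X}_\pm)^{-1} \mathcal{X}_\pm^{\tran} W_\pm (S_\pm + R_\pm)$, with $S_\pm$ collecting the quadratic Taylor terms $\tfrac{1}{2} \mu_\pm^{(2)}(0) X_i^2$ and $R_\pm$ collecting the cubic remainder, which is uniformly $O(|X_i|^3)$ by condition~(I). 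Lemma~\ref{lemma:F_formula} no longer applies since $h$ does not shrink, so the crux is new moment approximations for the $F_{j,\pm} = (1/N_\pm) \sum_{i \in \mathcal{Z}_\pm} X_i^j K_h(X_i)$ at fixed $h$.

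Using the explicit $f_\pm$ from~\eqref{eq:fpm_def}, substituting $u = x/h$, and using the Lipschitz bound~(II) to replace $f(hu)$ by $f(0) + O(h)$ on the compact kernel support, I would obtain
\begin{align*}
F_{j,+} = \frac{h^j f(0)}{\Pr(Z_i=1)} \Bigl[\tfrac{1}{2}\!\int_{-\Delta/h}^{\Delta/h}\! u^j K(u)\,\rd u + \!\int_{\Delta/h}^{\infty}\! u^j K(u)\,\rd u\Bigr] + O(h^{j+1}) + O_p(N^{-1/2}),
\end{align*}
and a mirror expression for $F_{j,-}$ with the two integration regions swapped. By the symmetry of $K$, the bracket simplifies to $\nu_j / 2$ for even $j$, and to $\pm \mathcal{J}_j$ with $\mathcal{J}_j \equiv \int_{\Delta/h}^\infty u^j K(u) \rd u$ for odd $j$; the $O_p(N^{-1/2})$ fluctuation follows from the CLT because $K$ is bounded with bounded support. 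Inserting these into $B_\pm = \tfrac{1}{2} \mu_\pm^{(2)}(0) (F_{2,\pm}^2 - F_{1,\pm} F_{3,\pm}) / (F_{0,\pm} F_{2,\pm} - F_{1,\pm}^2) + (\text{cubic contribution})$, the common prefactors $f(0)^2 / \Pr(Z_i = \pm 1)^2$ cancel between numerator and denominator, and the $\pm$ signs on the odd moments square away, producing the identical coefficient
\begin{align*}
B_\pm = \tfrac{1}{2} \mu_\pm^{(2)}(0)\, h^2 \,\frac{\nu_2^2 - 4 \mathcal{J}_1 \mathcal{J}_3}{\nu_0 \nu_2 - 4 \mathcal{J}_1^2} + O(h^3) + O_p(N^{-1/2}).
\end{align*}
The $O(h^3)$ remainder absorbs both the cubic Taylor contribution (whose two matrix-vector entries are $O(h^3)$ and $O(h^4)$, hit by inverse-matrix entries of orders $1$ and $1/h$) and the Lipschitz correction to $f$. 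Subtracting $B_+ - B_-$ yields \eqref{eq:2ndOrderBias_withDelta}, and condition \eqref{eq:cond_no2ndOrderBias} follows immediately from setting the coefficient of $h^2$ to zero.

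For existence of $h^* > \Delta$ solving \eqref{eq:cond_no2ndOrderBias}, define $g(\delta) = 4 \mathcal{J}_1(\delta) \mathcal{J}_3(\delta)$ with $\mathcal{J}_k(\delta) = \int_\delta^\infty u^k K(u) \rd u$ for $\delta \in [0,1]$; continuity of $g$ is immediate. Since $K$ vanishes on $(1,\infty)$, $g(1) = 0 < \nu_2^2$. At $\delta = 0$, Cauchy--Schwarz applied to $u^{1/2} K(u)^{1/2}$ and $u^{3/2} K(u)^{1/2}$ gives
\begin{align*}
\Bigl(\int_0^\infty u^2 K(u)\,\rd u\Bigr)^{\!2} \le \int_0^\infty u K(u)\,\rd u \cdot \int_0^\infty u^3 K(u)\,\rd u,
\end{align*}
with strict inequality because $K$ is not concentrated at a single point; combined with $\nu_2 = 2 \int_0^\infty u^2 K(u) \rd u$ (by symmetry of $K$), this yields $g(0) > \nu_2^2$. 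The intermediate value theorem then delivers $\delta^* \in (0,1)$ with $g(\delta^*) = \nu_2^2$, giving $h^* = \Delta / \delta^* > \Delta$ as the desired bandwidth.

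The main obstacle will be the careful bookkeeping of the subleading $O(h^{j+1})$ errors in each $F_{j,\pm}$ through the $2 \times 2$ matrix inversion, to confirm that they stay safely within the stated $O(h^3) + O_p(N^{-1/2})$ remainder and do not covertly corrupt the $h^2$ leading coefficient.
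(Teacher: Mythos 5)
Your proposal is correct and follows essentially the same route as the paper's own proof: it redoes the Appendix~\ref{sec:proof_of_TBD_MSE} bias computation with fixed-$h$ moment approximations for $F_{j,\pm}$ (using (II) for the $O(h)$ density correction and (I) for the uniform cubic remainder), observes that the odd moments enter only in products of two so the $\pm$ signs cancel, and proves existence of the bias-killing bandwidth by the same Cauchy--Schwarz-plus-intermediate-value argument, merely reparametrized in $\delta=\Delta/h$ rather than $h$. The only step you defer to ``bookkeeping'' that the paper makes explicit is checking, again via Cauchy--Schwarz, that $F_{0,\pm}F_{2,\pm}-F_{1,\pm}^2$ is positive and bounded away from zero at fixed $h$, which is what licenses the first-order expansion of its reciprocal.
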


\begin{proof}

Fix $\Delta >0$ and suppose without loss of generality that $t=0$. Define $K_h(\cdot)$, $\mathcal{Z}_{\pm}$, $N_{\pm}$, $F_{j,\pm}$ for $j=0,1,\dots,4$, $f_{\pm}(\cdot)$, $\cx_{\pm}$,$W_{\pm}$, $\cy_{\pm}$, $\hat{\mu}_{\pm}(0)$, and $B_{\pm}$, according to the same definitions presented in Appendix \ref{sec:proof_of_TBD_MSE}. 

We will first derive a formula for $F_{j,\pm}$ that is similar to that given in Lemma \ref{lemma:F_formula}, except here we will not rely on a simplification that occurs in the asymptotic regime where $h \to 0$, eventually dropping below $\Delta$.

For any nonnegative integer $j$, because the samples are IID, we get
$$\begin{aligned} F_{j,+} & =  \frac{1}{N_+} \sum_{i \in \mathcal{Z}_+} X_i^j K_h(X_i) 
\\ & =\e[X^j K_h(X)|Z=1]+O_p\Big( \sqrt{\text{Var}[X^j K_h(X)|Z=1]/N_+} \Big)
\\ & = \frac{1}{h} \int_{-\infty}^{\infty} x^j K(x/h) f_+(h) \rd x + O_p \big( 1/\sqrt{N_+} \big)
\\ & = h^j \int_{-\infty}^{\infty} u^j K(u) f_+(hu) \rd u + O_p(1/\sqrt{N})
\\ & = \frac{h^j \big( \int_{-\Delta/h}^{\Delta/h} u^j K(u) f(hu) \rd u +2 \int_{\Delta/h}^{\infty} u^j K(u) f(hu) \rd u \big)}{2 \Pr(Z_i=1)} +O_p(1/\sqrt{N}). \end{aligned}$$

To simplify this expression further, observe that, for any $a,b \in [-\infty,\infty]$ and nonnegative integer $j$, $$\begin{aligned}
  \bigg|  \int_{a}^{b} u^j K(u) \big[ f(hu)-f(0) \big] \rd u \bigg| & = \bigg| h \int_{a}^{b} u^{j+1} K(u) \Big[ \frac{f(hu)-f(0)}{hu} \Big] \rd u \bigg|
   \\ & \leq h \int_{a}^{b}  \Big| \frac{f(hu)-f(0)}{hu} \Big| \cdot \vert u^{j+1} \vert K(u) \rd u
    \\ & \leq L h \int_{a}^{b}   \vert u^{j+1} \vert K(u) \rd u
    \\ & = O(h),
\end{aligned}$$
where above the last two steps hold by Assumptions (II) and (iv) respectively, and as a consequence,
for any $a<b$,
$$\begin{aligned} \int_{a}^{b} u^j K(u)  f(hu) \rd u & = \int_{a}^{b} u^j K(u)  f(0) \rd u+ \int_{a}^{b} u^j K(u) \big[ f(hu)-f(0) \big] \rd u 
\\ & = f(0) \int_{a}^{b} u^j K(u)   \rd u +O(h).\end{aligned}$$ Combining this with the above formula for $F_{j,+}$,

$$ \begin{aligned} F_{j,+} & = \frac{h^j f(0) \big( \int_{-\Delta/h}^{\Delta/h} u^j K(u) \rd u +2 \int_{\Delta/h}^{\infty} u^j K(u)  \rd u +O(h) \big)}{2 \Pr(Z_i=1)} +O_p(1/\sqrt{N}) \\ & = h^j f_+(0) \nu_{j,+}(\Delta/h)+O(h^{j+1})+ O_p(1/\sqrt{N}), \end{aligned}$$ where $$\nu_{j,+}(\delta) \equiv \int_{-\delta}^{\delta} u^j K(u) \rd u +2 \int_{\delta}^{\infty} u^j K(u) \rd u.$$

If we define $\nu_{j,-}(\delta) \equiv \int_{-\delta}^{\delta} u^j K(u) \rd u +2 \int_{-\infty}^{-\delta} u^j K(u) \rd u,$ a similar argument yields an analogous formula for $F_{j,-}$, and hence \begin{equation}\label{eq:Fj_pm_wDelta}
    F_{j,\pm} = h^j f_{\pm}(0) \nu_{j,\pm}(\Delta/h)+O(h^{j+1})+O_p(1/\sqrt{N}) \quad \text{for all }  j \in \mathbb{Z}_{\geq 0}.
\end{equation}

By noting the similarities between the formula for $F_{j,+}$ in Lemma \ref{lemma:F_formula} and that from \eqref{eq:Fj_pm_wDelta}, the formula for $B_+$ can be derived by the same argument presented in Appendix \ref{sec:proof_of_TBD_MSE}, by simply replacing $\nu_j$ terms with $\nu_{j,+}(\Delta/h)$, $o_p(h^j)$ terms with $O(h^{j+1})+O_p(1/\sqrt{N})$ terms, and intermediary $O_p(h^j)$ terms with $O(h^j)+O_p(1/\sqrt{N})$, for any integer $j$. Because we now no longer assume $h \to 0$ as $N \to \infty$, there are three steps where we have to use a slightly different argument from that presented in Appendix \ref{sec:proof_of_TBD_MSE}. First, since we no longer assume that $h \to 0$, the 2nd order Taylor expansion of the mean functions with the remainder terms $T_i$ require that $\mu_{\pm}^{(2)}(\cdot)$ is continuous everywhere, rather than merely in a neighborhood of $t$ which is given by Assumption (iii). Assumption (I) that $\mu_{\pm}^{(3)}(\cdot)$ is bounded, guarantees that $\mu_{\pm}^{(2)}(\cdot)$ is continuous everywhere because differentiability implies continuity. Second, we can use Assumption (I) that $\mu_{\pm}^{(3)}(\cdot)$ is bounded to show that there exists an $\bar{a}_+ < \infty$ for which $K_h(X_i) \sup_{x \in [-X_i,X_i]} \vert \mu_{+}^{(3)}(x) \vert  \leq \bar{a}_+ K_h(X_i)$ for any $h >0$. In  Appendix \ref{sec:proof_of_TBD_MSE}, we did not need Assumption (I) since we supposed that $h \to 0$, so it was enough to show that inequality holds for any sufficiently small $h$. Third, the argument involves a first order Taylor expansion of the quantity $(F_{0,+} F_{2,+} -F_{1,+}^2)^{-1}$. In the current setting where we do not suppose $h \to 0$ as $N \to \infty$, we must now ascertain that with probability approaching $1$ as $N \to \infty$, $F_{0,+} F_{2,+} -F_{1,+}^2$ is positive and bounded away from zero. Recalling that up to an $O_p(1/\sqrt{N})$ term, $F_{j,+}$, equals $h^j \int_{-\infty}^{\infty} u^j K(u) f_+(hu) \rd u$, by the Cauchy-Schwartz inequality it follows that $$\begin{aligned} F_{1,+}^2  & = h^2 \bigg( \int_{-\infty}^{\infty} u K(u) f_+(hu) \rd u \bigg)^2  + O_p(1/\sqrt{N})
\\ & = h^2 \bigg( \int_{-\infty}^{\infty} \sqrt{K(u) f_+(hu) } \cdot \sqrt{u^2 K(u) f_+(hu)} \rd u \bigg)^2 + O_p(1/\sqrt{N})
\\ & <  h^2 \int_{-\infty}^{\infty} K(u) f_+(hu) \rd u  \int_{-\infty}^{\infty} u^2 K(u) f_+(hu) \rd u + O_p(1/\sqrt{N})
\\ & = F_{0,+} F_{2,+}+O_p(1/\sqrt{N}). \end{aligned}$$  
Hence with probability converging to $1$ as $N \to \infty$, $F_{0,+} F_{2,+}-F_{1,+}^2$ is positive and bounded away from zero. Consequently, the following first order Taylor expansion holds for all $h>0$ $$\begin{aligned} \frac{h^2 f_+^2(0)}{F_{0,+} F_{2,+}-F_{1,+}^2} & = \frac{1}{[ \nu_{0,+}(\Delta/h) \nu_{2,+}(\Delta/h)- \nu_{1,+}(\Delta/h)^2] +O(h)+O_p(1/\sqrt{N})} \\ & = \frac{1}{ \nu_{0,+}(\Delta/h) \nu_{2,+}(\Delta/h)- \nu_{1,+}(\Delta/h)^2 } +O(h)+O_p(1/\sqrt{N}), \end{aligned}$$ where the denominator in the first term is positive and bounded away from $0$ by a similar Cauchy-Schwartz argument that leverages symmetry of $K$.

By the same derivation of the bias formula presented in Appendix \ref{sec:proof_of_TBD_MSE}, with the exceptions in the argument noted above, the bias of $\hat{\mu}_+(0)$ is given by $$B_+ = \frac{1}{2} \mu_+^{(2)}(0) \Big( \frac{[\nu_{2,+}(\Delta/h)]^2 - \nu_{1,+}(\Delta/h) \nu_{3,+}(\Delta/h)}{\nu_{0,+}(\Delta/h) \nu_{2,+}(\Delta/h) - [\nu_{1,+}(\Delta/h)]^2} \Big)h^2 + O(h^3)+O_p(1/\sqrt{N}).$$ 

A similar argument shows that, $$B_- = \frac{1}{2} \mu_-^{(2)}(0) \Big( \frac{[\nu_{2,-}(\Delta/h)]^2 - \nu_{1,-}(\Delta/h) \nu_{3,-}(\Delta/h)}{\nu_{0,-}(\Delta/h) \nu_{2,-}(\Delta/h) - [\nu_{1,-}(\Delta/h)]^2} \Big)h^2 + O(h^3)+O_p(1/\sqrt{N}).$$

Now note that the bias of $\hat{\tau}_{\thresh}$ is given by $B_+-B_-$. To simplify the expression for $B_+-B_-$, observe that by symmetry of $K(\cdot)$, for any $h>0$ $$\nu_{j,-}(\Delta/h)=\begin{cases} \nu_{j,+}(\Delta/h)=\nu_j  & \text{ for even } j, \\ -\nu_{j,+}(\Delta/h) = -2 \int_{\Delta/h}^{\infty} u^j K(u) \rd u  & \text{ for odd } j. \end{cases}$$ 
The $\nu_j$ above is defined at \eqref{eq:def_nuAndPi_newer}. Hence, subtracting $B_-$ from $B_+$ the bias for $\hat{\tau}_{\thresh}$ is given by $$\frac{h^2}{2} \big( \mu_+^{(2)} (0)-\mu_-^{(2)} (0) \big)  \frac{ \nu_2^2 - 4 \int_{\Delta/h}^{\infty} uK(u) \rd u  \int_{\Delta/h}^{\infty} u^3K(u) \rd u }{ \nu_0 \nu_2 -4[ \int_{\Delta/h}^{\infty} uK(u) \rd u ]^2  } +O(h^3)+O_p(1/\sqrt{N}).$$ Since we supposed $t=0$ without loss of generality, this proves \eqref{eq:2ndOrderBias_withDelta}.

 Now, since $\nu_0 \nu_2 -4[ \int_{\Delta/h}^{\infty} uK(u) \rd u ]^2$ is bounded away from zero, the leading order term in $h$ in formula \eqref{eq:2ndOrderBias_withDelta} for the bias of $\hat{\tau}_{\thresh}$ is zero whenever $h$ satisfies \eqref{eq:cond_no2ndOrderBias}. 

To prove the final claim, that there exists an $h > \Delta$ solving \eqref{eq:cond_no2ndOrderBias}, define $$ \varrho(h) \equiv 4 \int_{\Delta/h}^{\infty} uK(u) \rd u  \int_{\Delta/h}^{\infty} u^3K(u) \rd u \quad \text{ for } h>0.$$ 
Clearly by boundedness and continuity of $K(\cdot)$, $\varrho(\cdot)$ is continuous on $[0,\infty)$. Recalling Assumption (iv) that $K(\cdot)$ is supported on $[-1,1]$, $\varrho(\Delta)=0$.  Since $\nu_2>0$, $\varrho(\Delta) < \nu_2^2$. Also note that by symmetry of the kernel and the Cauchy-Schwartz inequality, $$\begin{aligned} \nu_2^2 & = \bigg( 2 \int_0^{\infty} u^2 K(u) \rd u \bigg)^2
\\ & = 4 \bigg( \int_0^{\infty} \sqrt{u K(u)} \cdot \sqrt{u^3 K(u)} \rd u \bigg)^2
\\ & < 4  \int_0^{\infty} u K(u) \rd u  \int_0^{\infty} u^3 K(u) \rd u 
\\ & =  \lim\limits_{h \to \infty} \varrho(h).
\end{aligned}$$
Since $\varrho(\Delta)< \nu_2^2 < \lim_{h \to \infty} \varrho(h)$, and since $\varrho(\cdot)$ is continuous on $[\Delta,\infty)$, 
there must exist some $\tilde{h} \in (\Delta,\infty)$ with $\varrho(\tilde{h})=\nu_2^2$. Clearly this $\tilde{h}$ satisfies equation \eqref{eq:cond_no2ndOrderBias} and $\tilde{h}>\Delta$.
\end{proof}

\begin{table}[!t]
\caption{Table of bandwidths that remove the leading order bias term, rounded to 4 decimal points in units of $\Delta$, for various kernel choices.}
\label{table:MagicBandwidth}
\begin{center}
\begin{tabular}{ l l c c } 
\toprule
 Kernel & Function $K(u)$ & Support & Bandwidth solving \eqref{eq:cond_no2ndOrderBias} \\ 
\midrule
 Boxcar & $1/2$ & $[-1,1]$  & $3.1342 \Delta$ \\ [1ex] 

Triangular & $(1-\vert u \vert)_+$ & $[-1,1]$  & $4.0343 \Delta$ \\  [1ex] 

 Epanechnikov & $ \frac{3}{4}(1- u^2)_+$ & $[-1,1]$ & $3.7866 \Delta$ \\ [1ex]

 Quartic & $ \frac{15}{16} (1-u^2)_+^2$  & $[-1,1]$ &  $4.3432 \Delta$ \\ [1ex]

 Triweight & $ \frac{35}{32} (1-u^2)_+^3$  & $[-1,1]$ & $4.8367 \Delta$ \\ [1ex]

  Tricube & $ \frac{70}{81} (1-\vert u\vert^3 )_+^3$ & $[-1,1]$ &  $4.3805 \Delta$ \\ [1ex]

 Cosine & $\frac{\pi}{4} \cos \big( \frac{\pi}{2} u \big)$  & $[-1,1]$ & $3.8613 \Delta$  \\ 
 \bottomrule
 
\end{tabular}
\end{center}
\end{table}

\section{Proof of Proposition~\ref{prop:triangleefficiency}}\label{sec:TSefficiencyCalc}

Note that for the triangular kernel, $\lambda_0=(2/3)\kappa_0$ and $\lambda_2=(2/5)\kappa_2$. By substituting these expressions into formula~\eqref{eq:general_efficiency_ratio} for the efficiency ratio and cancelling out a common factor of $2\kappa_2/15$ from the numerator and denominator, it follows that

\begin{align}\label{eq:triefffactors}
 \eff_{\ts} &=\frac{ 5 \kappa_0 \kappa_2-15 \phi(0) \psi(0) +3 \phi^2(0) }{ 5 \kappa_0 \kappa_2 -15  \phi(\Delta) \psi(\Delta) +3 \phi^2(\Delta) }
\times\frac{ (\kappa_0 \kappa_2 - \phi^2(\Delta)  )^{2} }{   (\kappa_0 \kappa_2 - \phi^2(0) )^{2}}.
\end{align}
Next $5\kappa_0\kappa_2-15\phi(\Delta)\psi(\Delta)+3\phi^2(\Delta)$ equals
\begin{align*}
    \frac{5h^4}{24}-15\frac{h^4}{72}(1-3\delta^2+2\delta^3)(1-6\delta^2+8\delta^3-3\delta^4)
    +\frac{h^4}{12}(1-3\delta^2+2\delta^3)^2,
\end{align*}
and so the first factor in~\eqref{eq:triefffactors} is
$$
\frac{ 2  }
{ 5-5(1-3\delta^2+2\delta^3)(1-6\delta^2+8\delta^3-3\delta^4)+2(1-3\delta^2+2\delta^3)^2 }.
$$
Turning to the second factor
\begin{align*}
    \kappa_0\kappa_2-\phi^2(\Delta)&=
    \frac{h^4}{24}-\frac{h^4}{36}(1-3\delta^2+2\delta^3)^2
    =\frac{h^4}{72}\bigl(3-2(1-3\delta^2+2\delta^3)^2\bigr),
\end{align*}
and so the second factor equals
$
(3-2(1-3\delta^2+2\delta^3)^2)^2
$, establishing~\eqref{eq:triangleefficiency}.

\section{Proof of Proposition~\ref{prop:itsmonotone}}\label{sec:MonotoneEffTS}

We want to show that this function
$$\eff_{\ts}(\delta) = \frac{2 \big(3-2 (1- 3 \delta^2 +2 \delta^3)^2 \big)^2}{5- 5  (1- 3 \delta^2 +2 \delta^3)
(1 - 6 \delta^2+ 8 \delta^3 -3 \delta^4 )
+2 (1- 3 \delta^2 +2 \delta^3)^2} $$
has a positive derivative for $0<\delta<1$.
The numerator has degree $12$ and the denominator has degree $7$.
The customary formula for the derivative of a rational function
produces a rational function with a non-negative denominator and
a numerator of degree $18$.  We will work through a sequence
of steps reducing the degree of this polynomial
to show that the numerator must be positive on $(0,1)$.
That then rigorously establishes the monotonicity of $\eff_{\ts}(\delta)$
which is visually apparent.

It is convenient to work instead with $x=1-\delta$. Then
$1-3\delta^2+2\delta^3
=3x^2-2x^3
$
and
$1 - 6 \delta^2+ 8 \delta^3 -3 \delta^4
=4x^3-3x^4.$
Therefore $\eff_{\ts}(\delta)=f(1-\delta)$ where $f$ is a function given by
\begin{align*}
f(x)&=\frac{2(3-2(3x^2-2x^3)^2)^2}
{
5-5(3x^2-2x^3)(4x^3-3x^4)+2(3x^2-2x^3)^2
}\\
&=\frac{2(3-2x^4(3-2x)^2)^2}{
5-5x^5(3-2x)(4-3x)+2x^4(3-2x)^2}\\
&=\frac{2(3-2x^4(3-2x)^2)^2}{
5+[x^4(3-2x)][6 -24x+15x^2]}\\
&=\frac{2(3-2g(x) (3-2x))^2}{
5+g(x)(6 -24x+15x^2)}
\end{align*}
for $g(x) = x^4(3-2x)$
and having replaced $\delta$ by $x=1-\delta$ we will show that $f'(x)<0$.

In the usual formula for the derivative of a ratio, $f'(x)$ has
this numerator
\begin{align*}
n_1(x) &=
4(3-2g(x)(3-2x))(4g(x)-2g'(x)(3-2x))[5+g(x)(6 -24x+15x^2)]\\
&-2(3-2g(x)(3-2x))^2[g'(x)(6-24x+15x^2)+g(x)(-24+30x)].
\end{align*}
Notice that $0\le g(x)(3-2x) \le1$ for $x \in [0,1]$ and
so $3-2g(x)(3-2x)>0$.
As a result, the sign of $n_1(x)$ is preserved
by dividing it by $2(3-2g(x)(3-2x))$, yielding
\begin{align*}
n_2(x)
  &= 2(4g(x)-2g'(x)(3-2x))[5+g(x)(6 -24x+15x^2)]\\
 &\phe- (3-2g(x)(3-2x))[g'(x)(6-24x+15x^2)+g(x)(-24+30x)].
\end{align*}
Now since $g(x)=x^3 [ x(3-2x)]$ and $g'(x)=x^3[12-10x]$ and $x \in (0,1)$ we can
divide $n_2(x)$ by $x^3/6$ finding that it has the same sign as
\begin{align*}
n_3(x)
  &= \frac{1}{3} \bigl(4 x-2(12-10x) \bigr)(3-2x)[5+g(x)(6 -24x+15x^2)]\\
 & \phe-\frac{1}{6}\bigl(3-2g(x)(3-2x)\bigr)\bigl[(12-10x)(6-24x+15x^2)+x(3-2x)(-24+30x)\bigr]\\
 &= -8(1-x) (3-2x)[5+g(x)(6 -24x+15x^2)]\\
 &\phe- (3-2g(x)(3-2x))[-35x^3+93x^2-70x+12]\\
 &= -8 (1-x)(3-2x)[5+g(x)(6 -24x+15x^2)]\\
 &\phe- (3-2g(x)(3-2x)) (1-x)(35x^2-58x+12).
\end{align*}
We can divide $n_3(x)$ by $-(1-x)$ getting a polynomial $n_4(x)$
with the opposite sign from $n_3$.
This yields
\begin{align*}
n_4(x) &= 8(3-2x)[5+g(x)(6-24x+15x^2)]+(3-2g(x)(3-2x))(35x^2-58x+12)\\
      &= 8 (3-2x)[-30x^7+93x^6-84x^5+18x^4+5]\\
 &\phe+  (-8x^6+24x^5-18x^4+3) (35x^2-58x+12) \\
 &= 8 [60x^8-276x^7+447x^6-288x^5+54x^4-10x+15]\\
 &\phe+ (-280x^8+1304 x^7-2118x^6+1332x^5-216x^4+105x^2-174x+36) \\
& = 200 x^8-904x^7+1458x^6-972x^5+216x^4+105x^2-254x+156.
\end{align*}
Note that the coefficient of $x^3$ in $n_4$ is a zero that must not be left out when entering the coefficients into symbolic differentiation codes.

We want to show that $n_4(x)>0$ for $x\in(0,1)$ which
then makes $n_3(x)<0$ and ultimately $f'(x)<0$
so that $\eff_{\ts}'(\delta)>0$ for $\delta=1-x \in (0,1)$.

Our next step is to show that $n_4''(x)>0$ for
all $x \in [0,1]$.  Note that
$$n_4''(x)= 11200x^6-37968x^5+43740x^4-19440x^3+2592x^2+210.$$
Graphing $n_4''(x)$ versus $x$
and evaluating it numerically, makes it is clear that $148<n_4''(x)<369$.
Our next few steps are just to  eliminate even an unreasonable doubt
about the sign of $n_4''$.  Some readers might prefer to skip
that and go instead to the subsection marked ``Conclusion
of the proof''.

\subsection*{Positivity of $\boldsymbol{n''_4}$}
We begin by writing
$$n_4'''(x)=48x(1400x^4- 3955x^3+3645 x^2-1215x+108).$$
Then by the triangle inequality,
\begin{align}\label{eq:triangle}
\vert n_4'''(x) \vert \le 48(1400+3955+3645+1215+108)=495504 \le 2^{19}
\end{align}
for all $x \in [0,1]$.

Now for $k \in \{0,1,\ldots,2^{12} \}$ define $x_k=k 2^{-12}$.
For each $k$
we will let $\widehat{n_4''(x_k)}$ be the numerical evaluation for the polynomial $n_4''(x_k)$ computed using Horner's method with double-precision in R,  implemented with the `horner' function in the \textbf{pracma} R package of \cite{pracma}.

By formula (5.3) in \cite{Higham2002},
the absolute error in Horner's method for
the polynomial $\sum_{r=0}^n a_r x^r$ is at most
$\gamma_{2n} \tilde{p}(\vert x \vert)$ where $\gamma_{2n} \equiv {2nu}/(1-2nu)$,
$u$ is the unit roundoff, and $\tilde{p}(\vert x \vert)= \sum\limits_{r=0}^n \vert a_r \vert \vert x \vert^r$.

Applying that bound to our 6th degree polynomial $n_4''(x)$, and noting that for each $x \in [0,1]$, $\tilde{p}(\vert x \vert)$ will be at most the sum of the absolute value of the coefficients we find that
\begin{align}\label{eq:maxerro}
\max_{0\le k\le 2^{12}}
\big|n_4''(x_k)- \widehat{n_4''(x_k)} \big| \le \gamma_{2 \times 6} \tilde{p}(\vert x_k \vert) \le \frac{12u}{1-12u} \times 115{,}150.
\end{align}
We need not worry about floating point error induced by evaluating $x_k$, because each $x_k$ is a floating point number.
For double precision in R, the unit roundoff is $u=2^{-53} \le 2 \times 10^{-15}$
from which $12u/(1-12u) \le 10^{-13}$,
and so the maximum
error in~\eqref{eq:maxerro} is at most $10^{-7}$.
The smallest value of $\widehat{n_4''(x_k)}$ among all $2^{12}+1$
evaluation points $x_k$ was $148.5743$
and so $\min_{0\le k\le 2^{12}}n_4''(x_k)\ge148$.

Now we are ready to prove that $n_4''(x)>0$ for all $x \in [0,1]$.
For any $x \in [0,1]$ there exists $k_* \in \{0,1,\ldots,2^{12} \}$
with $\vert x- x_{k_*} \vert \le 2^{-13}$.
By~\eqref{eq:triangle} we know that $n_4''$ is Lipschitz continuous
on $[0,1]$ with Lipschitz constant $2^{19}$.
Therefore
\begin{align*}
n_4''(x) \ge n_4''(x_{k_*}) -|n_4''(x_{k_*})-n_4''(x)| \ge 148-2^{19}\times2^{-13}>0,
\end{align*}
holds for all $x\in[0,1].$

\subsection*{Conclusion of the proof}
We have shown above that $n_4''(x)>0$  for any $x \in [0,1]$.
Now since $n_4'(1)=-20$, and $n_4''(x)>0$ for all $x \in [0,1]$, it follows that $n_4'(x) < 0$ for all $x \in [0,1]$. Finally since $n_4(1)=5$ and $n_4'(x)<0$ for all $x \in [0,1]$, it follows that $n_4(x)>0$ for all $x \in [0,1]$.
This completes the proof.  $\qed$

\end{document}